\documentclass[journal]{IEEEtran}

\usepackage{amssymb,amsmath,amsfonts,amsthm}
\usepackage{algorithm,algorithmic}
\usepackage{boxedminipage}
\usepackage{cite}
\usepackage[dvips]{graphicx}
\usepackage{epsfig}
\usepackage{float}
\graphicspath{{figures/}}
\usepackage{graphicx}
\usepackage{graphics}
\usepackage{tabulary}
\usepackage{booktabs}

\usepackage{multirow}
\usepackage{threeparttable} 
\usepackage[usenames,dvipsnames]{color}
\usepackage{url,bm,xspace,dsfont}

\usepackage{verbatim}

\usepackage{calrsfs}

\newtheorem{theorem}{Theorem}[section]
\newtheorem{lemma}[theorem]{Lemma}
\newtheorem{definition}[theorem]{Definition}

\newtheorem{problem}[theorem]{Problem}
\newtheorem{example}[theorem]{Example}
\newtheorem{remark}[theorem]{Remark}

\newtheorem{assumptions}[theorem]{Assumptions}

\DeclareMathAlphabet{\pazocal}{OMS}{zplm}{m}{n}

\renewcommand{\theequation}{\thesection.\arabic{equation}}
\renewcommand{\thefigure}{\thesection.\arabic{figure}}
\renewcommand{\thetable}{\thesection.\arabic{table}}

\newcommand{\sr}{\stackrel}

\newcommand{\rar}{\rightarrow}

\newcommand{\tri}{\sr{\triangle}{=}}

\newcommand{\bea}{\begin{eqnarray}}
\newcommand{\eea}{\end{eqnarray}}
\newcommand{\bes}{\begin{eqnarray*}}
\newcommand{\ees}{\end{eqnarray*}}
\newcommand{\bce}{\begin{center}}
\newcommand{\ece}{\end{center}}

\def\VR{\kern-\arraycolsep\strut\vrule &\kern-\arraycolsep}
\def\vr{\kern-\arraycolsep & \kern-\arraycolsep}

\newcommand{\ben}{\begin{enumerate}}
\newcommand{\een}{\end{enumerate}}

\newcommand{\bi}{\begin{itemize}}
\newcommand{\ei}{\end{itemize}}

\newcommand{\bp}{\begin{problem}}
\newcommand{\ep}{\end{problem}}
\newcommand{\hso}{\hspace{.1in}}
\newcommand{\hst}{\hspace{.2in}}

\newcommand{\noi}{\noindent}

\newcommand{\bc}{\begin{center}}
\newcommand{\ec}{\end{center}}

\begin{document}
\title{ Private \& Common   Information States in Decentralized    Team Equilibrium  via Dynamic Programming for  POMDPs
  with Delayed Sharing
}

\author{Charalambos D. Charalambous$^1$, Umarbek Guvercin$^2$,  Seddik   Djouadi$^2$ 
\thanks{$^{1}$Charalambos D. Charalambous is  with the Faculty of the Department of Electrical and Computer Engineering, University of Cyprus, Nicosia 1678, Cyprus
{\tt\small chadcha@ucy.ac.cy}}
\thanks{$^{2}$Seddik Djouadi and  Umarbek Guvercin are  with the Faculty of the Department  Electrical Engineering and Computer Science, University of Tennessee, Knoxville, TN, 37996, USA 
{\tt\small \{mdjouadi,ugvercin\}@utk.edu}}%
\thanks{}%
}

%

\maketitle

\begin{abstract}
Witsenhausen in his seminal 1971 paper \cite{witsenhausen1971}, introduced  decentralized partially observable Markov decision problems (POMDPs),  with  multiple agents or controls  operating under   $T-$step delayed sharing information patterns. A fundamental problem in \cite{witsenhausen1971}
is  the identification of structural properties of optimal strategies that compress the  information patterns into multiple   information states.



In this paper,  we develop such structural properties of optimal strategies and associated dynamic programming (DP) equations,   using  the  concept of decentralized sequential team equilibrium (a generalization of  person-by-person  optimality from static team theory). Within this framework, 
each strategy  is assigned an individual value function conditioned on its delayed sharing information pattern, while the strategies  of all other agents  are held fixed.
The resulting   DP framework  yields several new  DP equations and  characterizations of   decentralized team equilibrium. Moreover, these   DP equations exhibit fundamental properties analogous to those of centralized DP  of POMDPs:  the
optimization in each   agent's  DP equations is performed   over the  agent's action space rather than  over   strategy spaces;  each agent's  multiple information states satisfy   Markov recursions;
  and a separation principle   holds. 

 The DP equations 
  reveal  a   structural compression property  of   optimal strategies:  each agent compresses its 
delayed sharing  information pattern  into three  components:   1) a private   posterior distribution conditioned  on the agent's delayed sharing  information pattern, 
  2)  a  centralized  posterior    distribution conditioned  on  the common information  shared by    all agents, and 3) the agent's private information component. This structural result   essentially     completes  Witsenhausen's Assertion 8 in  \cite{witsenhausen1971}.

%

\end{abstract}

\begin{IEEEkeywords}
Decentralized,  Sequential Team  Equilibrium, Dynamic programming, Private and Common  Information States,   Separation Principle. 
\end{IEEEkeywords}

\section{INTRODUCTION}
\label{sec:int}
Witsenhausen in the 1971  seminal   paper \cite{witsenhausen1971},    stimulated  vast    interest in decentralized  stochastic control and decision systems,   with multiple decision makers, agents or controls, each acting on its own information pattern or structure to  affect the overall  system performance.
 The formulation in  \cite{witsenhausen1971} is focused  on   Markovian decentralized  stochastic networks  with    multiple  observation posts collecting information and  multiple control stations  applying control actions.
 In \cite{witsenhausen1971}, several   challenges of decentralized optimization are discussed and  research directions are proposed, with  emphasis on the principle of separation of estimation and control for problems with   {\it  delayed sharing information patterns}\cite[Section~IV.G]{witsenhausen1971}.

Over the last 50 years,  several decentralized optimizations methods were  proposed.
  Most of these methods were   inspired  by concepts of   {\it static team theory} 
\cite{marschak1955,radner1962,marschak-radner1972}
called   {\it person-by-person\footnote{PbP optimality reminiscent to  the concept  of Nash-equilibrium; it is applied   in static  optimization problems with multiple  agents  having access to asymmetry of information, who  aim  to optimize a common payoff.} (PbP) optimality,  team  optimality, and   their connection via  Radner's  theorem}  \cite{radner1962,marschak-radner1972}, which  identified  sufficient 
conditions   for   team optimality   in terms of  stationary conditions of  PbP optimality.

In the early 1970's, several studies \cite{kurtaran-sivan1973,sandell-athans1974,kurtaran1975,yoshikawa1975,varaiya-walrand1978,kurtaran1979,hsu-marcus:1982,aicardi-davoli-minciardi1987} focused on  partially observable Markov decision problems (POMDPs), when the multiple controls have access to $1-$step delayed sharing information patterns, making use of  Witsenhausen's separation  \cite[Assertion 8, pp.1562]{witsenhausen1971}. Studies  \cite{sandell-athans1974,yoshikawa1975,varaiya-walrand1978,hsu-marcus:1982,aicardi-davoli-minciardi1987}   developed  a single DP equation that corresponds to {\it a single cost-to-go} for all strategies  conditioned only on  the shared  information component available to all strategies. However, Varaiya's and Walrand's \cite{varaiya-walrand1978} counterexample to  \cite[Assertion~8, pp.1562]{witsenhausen1971} for  $2-$step delayed sharing patterns, had a discouraging effect  on  further progress, which lasted for several years. More recently,   the     single DP equation is further  developed  in  \cite{nayyar-mahajan-teneketzis2011,nayyar-mahajan-teneketzis2013,nayyar-teneketzis2019,TNTT2023},   for  arbitrary $T-$step delayed sharing patterns,  using variations of \cite[Assertion 8, pp.1562]{witsenhausen1971}.

%

This paper  develops  an alternative DP framework, based on  the concept of decentralized sequential team equilibrium (a generalization of the concept of PbP optimality). We show that this alternative  DP framework   leads  to
\begin{enumerate}
\item several new distinct DP equations that characterize  decentralized sequential team equilibrium, 

\item structural compression properties of the optimal strategies, utilizing multiple information states or beliefs, and 

\item  the DP equations and information states retain the fundamental properties of classical centralized POMDPs, and a generalized separation principle between estimation and control holds. 
\end{enumerate}
Essentially, we prove that Witsenhausen's compression structural property of the optimal strategies  \cite[Assertion~8]{witsenhausen1971}, requires an additional  
information state, corresponding to  the private   posterior distribution of  the agent's    unobservable  states  conditioned on its own  delayed sharing  information pattern. 

Although, this paper is focused on    decentralized sequential team equilibrium, we also show 
  that  the structural compression property of optimal strategies  also holds for the  single cost-to-go for all strategies  conditioned only on  the shared  information component available to all strategies \cite{sandell-athans1974,yoshikawa1975,varaiya-walrand1978} and  \cite{nayyar-mahajan-teneketzis2011,nayyar-mahajan-teneketzis2013,nayyar-teneketzis2019}. 


In the rest of this section we describe the new DP approach and the main  contributions of this paper (Section~\ref{sub-DP}). In addition,   we  provide a brief summary of the main   decentralized optimization methods,  their   advantages and disadvantage (Section~\ref{sect:lit}).

 \subsection{The New Decentralized DP Approach and Separation}
 \label{sub-DP} 
\subsubsection{Main Problem (Section~\ref{sect:model})} 
We consider  
partially observable Markov decision problems (POMDPs),  with multiple  agents, controls or decision makers. Each agent has   access to a $T-$step delayed  sharing information pattern \cite[Section~IV.G]{witsenhausen1971}, that  
  consists of a  private information component and a $T-$step shared or common information component received by all other agents with delay $T$ unit of time. 

We formulate the decentralized optimization  of POMDPs using  the notion of a decentralized (sequential) team equilibrium, called PbP optimality in  static team theory \cite{marschak1955,radner1962,marschak-radner1972}. 
 Each agent's strategy optimizes a common (team)  payoff, when all other strategies are fixed to their optimal strategies  (see Definition~\ref{def-pbp}).
    Based on the definition of  decentralized team equilibrium, for each agent we introduce a   cost-to-go   conditioned on the agent's $T-$step delayed sharing information pattern, when all other agents' strategies are fixed to their optimal strategies.
    
 The definition of   decentralized team equilibrium  and associated cost-to-go  is fundamentally different from past studies \cite{kurtaran-sivan1973,sandell-athans1974,kurtaran1975,yoshikawa1975,varaiya-walrand1978,kurtaran1979,hsu-marcus:1982,aicardi-davoli-minciardi1987,nayyar-mahajan-teneketzis2011,nayyar-mahajan-teneketzis2013,nayyar-teneketzis2019}. Specifically, \cite{kurtaran-sivan1973,sandell-athans1974,kurtaran1975,yoshikawa1975,varaiya-walrand1978,kurtaran1979,hsu-marcus:1982,aicardi-davoli-minciardi1987,nayyar-mahajan-teneketzis2011,nayyar-mahajan-teneketzis2013,nayyar-teneketzis2019} considered  a single  cost-to-go for  all strategies, conditioned  only on the common or shared information component of  all strategies (see  Section~\ref{sec:lr}), leading to a single DP equation, while our formulation is based on multiple DP equations, one for each agent. 

\subsubsection{Contributions}  We propose a  new generalized  DP and separation principle framework,  which  retains  the   two fundamental  properties of classical  centralized  POMDPs:
 
{\bf Property \#1.} Each agent's  cost-to-go is a functional of  the agent's $T-$step delayed sharing  information pattern. 
 The cost-to-go of each agent satisfies a DP equation in which the optimization   is over the agent's actions and not their   strategies (all other strategies fixed to their optimal strategies). 

{\bf Property \#2.} Each agent's  DP equation is  expressed in terms of  1)  the agent's private  posterior distribution,  which satisfies a  conditionally  Markov recursion,  2) a   common posterior distribution available to  all agents, which  also satisfies a Markov recursion, and 3) a limited length private information component, all depending  on agent's  actions and not its  strategies.   The private  and common posterior distributions and  the  limited length private information component  constitute an   information state
 and  a  sufficient statistic for  the  agent's strategies,  and   a generalized separation principle holds.

 Properties \#1, \#2 are generalizations of  the fundamental properties  of  classical  centralized DP  of  POMDP  \cite{bertsekas-shreve1978,kumar-varayia:B1986,ahmed1998,paris-borkar-emmanuel-ghoshi-marcus93,hernandezlerma-lasserre1996,striebel1965}.
In particular, it is well-known that  Property \#1 is essential to  develop efficient DP algorithms, while Property \#2 leads to  Markovian   strategies, rather than strategies which  depend on  data that expand with time. 

%

%
%
%

We present several distinct DP equations   with different choices  of  the agent's information states, and  we prove  the  following   main results.  
\bi
\item {\it  Necessary and sufficient conditions for  decentralized sequential  team equilibrium  and separation principle},   using generalized   DP equations and information states that exhibit  
Properties \#1, \#2.  
\ei

{\bf Compression of Agent's Data.} The  generalized DP equations illustrate the role of the private and the common  information components of each agent's strategy to  compress the agent's  $T-$step delayed sharing information pattern.  This  gives rise to  agent's strategies with   different information states or beliefs, 
	 called, {\it semi-separated strategies},   {\it separated strategies}, and {\it information state strategies} (as we explain below), which are  natural generalizations of  analogous information states and separated strategies of the    classical centralized DP  approach  of POMDP
	  \cite{bertsekas-shreve1978,kumar-varayia:B1986,ahmed1998,paris-borkar-emmanuel-ghoshi-marcus93,hernandezlerma-lasserre1996,striebel1965}.

We prove the  important {\it structural compression  property} that  {\it separated strategies are optimal:}     each agent   compresses the  $T-$step delayed sharing information pattern into three components.

1) {\it The  agent's private information state:} a joint  conditional probability distribution  of the   unobservable state of the network and  all other agents private information components not observed by the  agent,  conditional on the agent's information pattern.

2) {\it  The    centralized information state:} a common  conditional probability distribution for  all agents based  only on the  shared (common) information component available to  all agents.

3) {\it  The agents'  private information component w.r.t. all other agents:} the agent's private  information component not observed by  all other agents. 

Our separated strategies with components 1)-3)  is an enlargement  of 
Witsenhausen's  strategies  in  \cite[Assertion 8, pp.1562]{witsenhausen1971} that include only components 2) and 3).

We also prove an additional  {\it structural property}, using a different choice of the centralized information state.

For completeness,  we briefly discuss (Section~\ref{sec:lr})   applications of  our compression structural property of   optimal strategies to   past  DP studies   \cite{kurtaran-sivan1973,sandell-athans1974,kurtaran1975,yoshikawa1975,varaiya-walrand1978,kurtaran1979,bagchi-basar1980,hsu-marcus:1982,aicardi-davoli-minciardi1987,yuksel-basar:B2013,nayyar-mahajan-teneketzis2011,nayyar-mahajan-teneketzis2013,nayyar-teneketzis2019,TNTT2023}, which are  based  on  {\it a single cost-to-go} for all strategies conditioned on the common information component of all agents, and   make use  Witsenhausen's   \cite[Assertion 8, pp.1562]{witsenhausen1971} or  its variants.

\subsection{Literature  Review}
\label{sect:lit}

The 
 progress in 
 decentralized optimization of  stochastic  dynamical systems may be  classified  into  the following   3 main directions,

{\it  1) Dynamic Programming  for  Delayed  Sharing Patterns. } 

 {\it  2) Static Reduction  for Arbitrary Information Patterns.}
 
 {\it  3)  Pontryagin's Decentralized  Stochastic Maximum Principle (SMP) for Arbitrary Information Patterns.}


\subsubsection{Dynamic Programming  Approach Based on Common Information \cite[Assertion 8, pp.1562]{witsenhausen1971}.}  Earlier   studies   \cite{kurtaran-sivan1973,sandell-athans1974,kurtaran1975,yoshikawa1975,varaiya-walrand1978,kurtaran1979,hsu-marcus:1982,aicardi-davoli-minciardi1987} considered $1-$step delayed sharing patterns, by invoking    Witsenhausen's separation     \cite[Assertion 8, pp.1562]{witsenhausen1971}.  Studies  \cite{sandell-athans1974,yoshikawa1975,varaiya-walrand1978,hsu-marcus:1982,aicardi-davoli-minciardi1987}  developed  a  single DP equation     based on  {\it a single  cost-to-go} for all strategies conditioned on the shared or common information   of all strategies    (see also Section~\ref{sec:lr}). 
The optimal strategies of linear-quadratic Gaussian (LQG) problems with  $1-$step delayed sharing patterns  are derived in  \cite{kurtaran-sivan1973,sandell-athans1974,kurtaran1975,yoshikawa1975}.\\
\indent 
In spite of  the success for $1-$step delayed sharing patterns,  Varaiya and Walrand   in  1978,  \cite[Section~II]{varaiya-walrand1978} proved  via a counterexample,  that  the  single DP equation based on  \cite[Assertion 8, pp.1562]{witsenhausen1971}, does not extend to $2-$step delayed sharing patterns.
\indent 
In view of the counterexample \cite{varaiya-walrand1978}, the single DP approach based on    variations of Witsenhausen's  \cite[Assertion 8, pp.1562]{witsenhausen1971} is  re-visited  in \cite{kurtaran1979} and more recently in    \cite{nayyar-mahajan-teneketzis2011,nayyar-mahajan-teneketzis2013,nayyar-teneketzis2019,TNTT2023}. However,  it is acknowledged  in   \cite{nayyar-mahajan-teneketzis2011} that the information structures of the optimal strategies  cannot be derived from those  in \cite{kurtaran1979}, and vice-versa.

A fundamental question that remained open since the appearance of    \cite[Assertion 8, pp.1562]{witsenhausen1971}, is to determine how each agent compresses its assigned $T-$step delayed sharing information pattern.
 This paper contributes in   this direction.

\indent  Below,  we identify    several silent  technical limitations of  the single DP  approach, which motivated  our investigation to develop a DP approach based on decentralized team equilibrium (i.e. PbP optimality).

 
 
The single DP approach presupposes the existence of  an   information sharing component  among all the agents.  If  there is no   information sharing among the agents,  then  the conditioning information in the   {\it single  cost-to-go}   is the {\it null set}, and hence no DP equation  can be derived.  This technical limitation  is  acknowledged  by Witsenhausen in  \cite[below Assertion 8, pp.1562]{witsenhausen1971} and also in  \cite[Abstract]{nayyar-teneketzis2019}.\\
\indent  Perhaps the most important disadvantage of the single  DP  approach is  the fact that  it  does not fully  characterize the optimality of the strategies;  this is obvious  from Sandell's and  Athan's  \cite{sandell-athans1974} derivation  
of the solution of  LQG problems with $1-$step delayed sharing patterns (and    \cite{kurtaran-sivan1973,kurtaran1975,yoshikawa1975}). Indeed,    
the single DP equation is an   intermediate or   preliminary  step, towards obtaining the optimal strategies.   The complete derivation   involves  another decentralized   optimization problem, which is solved  using  existing results of PbP optimality of static team theory \cite{radner1962}, and the simplicity of  $1-$step delayed sharing. Similar steps are  involved in the
 counterexample
   \cite{varaiya-walrand1978}. 

Apparent from \cite{kurtaran1979,nayyar-mahajan-teneketzis2011,nayyar-mahajan-teneketzis2013,nayyar-teneketzis2019}, is  that  in general the {\it single DP  approach}  based on the common information,   does  not retain the  fundamental Properties  of classical centralized DP of POMDPs. 
In particular,   the single   DP equation involves an optimization 
   over the strategy spaces and not the action spaces,  and the single  information state based on variants of  \cite[Assertion 8, pp.1562]{witsenhausen1971}  depends on the strategies of the agents and not on their actions. These technical issues are also acknowledged in    \cite[pp.1610-1611]{nayyar-mahajan-teneketzis2011}.  

In this paper, we   develop an alternative DP  approach based on decentralized team equilibrium, i.e., PbP optimality.   It is  
motivated by the fact that, 
 for the simplest   {\it static team problems} with $K$ decision makers \cite{radner1962,marschak-radner1972},     it is impossible to characterized     directly team  optimality  by  a {\it single stationary condition}.  On the contrary, first one characterizes    PbP optimality by  $K$    stationary conditions,  one for each decision maker, while under additional convexity condition, Radner's theorem \cite{radner1962}  establishes that PbP optimal strategies are  also team optimal.  In the context of decentralized optimization of stochastic dynamical systems, this  implies  we should aim at developing a DP approach, based on      the cost-to-go of each agent's strategy conditional on the agent's information pattern. Essentially, our  approach is  analogous to the DP approach  of   non-cooperative games  of control, that makes use of the   notion of Nash-equilibrium strategies.  
It leads  to a  new  DP approach and separation principle, based on multiple DP equations  that satisfy  Properties \#1, \#2.
 
%
%

 \subsubsection{Equivalence of Decentralized Stochastic Dynamical  Systems 
 and  Static Teams  \cite{witsenhausen1988}.} Another major development   is    the  {\it static reduction approach} proposed by  Witsenhausen in  1988    \cite{witsenhausen1988} and further developed in \cite{charalambous2014equivalence} for arbitrary non-Markovian decentralized stochastic systems  and   information patterns, using Girsanov's   change of measure \cite{liptser-shiryayev1977}. In particular, Girsanov's   change of measure \cite{liptser-shiryayev1977} is applied  to construct a Radon-Nikodym derivative (RND), such that  on  a reference   probability space  the observations and states are independent of the control strategies.  Effectively transforming the original decentralized stochastic  dynamical  problem to the   equivalent static reduction problem, so that static team theory results \cite{radner1962,marschak-radner1972} are  directly applicable.

  
  
However,     in spite  the   fact that necessary PbP stationary conditions can be  derived for the static reduction problem \cite{charalambous2014equivalence}, this does not mean these  conditions are  sufficient for PbP optimality.
   Moreover, to establish  team optimality using a generalization of    Radner's theorem   \cite{radner1962,marschak-radner1972} is very challenging. The reason is, sufficiency and  Radner's theorem  require   convexity of the payoff of the static reduction problem on the  reference measure, which is rarely convex.  For example, the payoff of  Witsenhausen's counterexample \cite{witsenhausen1968} is convex in the two strategies but the equivalent payoff of the static  reduction on the  reference measure is not  convex \cite{charalambous2014equivalence,teslang-djouadi-charalambous:ACC2021}. 
 Nevertheless, the  method is applied in \cite{teslang-djouadi-charalambous:ACC2021}   to  the Witsenhausen counterexample to  show the two optimal strategies satisfy two coupled  integral equations,  and  to prove  a fixed point theorem,  establishing existence and uniqueness of solutions, essentially solving the counterexample.

 \subsubsection{Pontryagin's Decentralized  Stochastic Maximum Principle (SMP)-Strong and Weak Formulations
 \cite{charalambous-ahmed:IEEEAC2017a,charalambous-ahmed:IEEEAC2018,charalambous-ahmed:IEEEAC2017b,charalambous:MCSS2016}.} Another approach is   Pontryagin's decentralized stochastic maximum principle (SMP)  \cite{charalambous-ahmed:IEEEAC2017a,charalambous-ahmed:IEEEAC2018,charalambous-ahmed:IEEEAC2017b,charalambous:MCSS2016}. This approach is a natural  generalization of     static team theory and 
 Radner's theorem of stationary conditions  to decentralized stochastic systems described  by  stochastic differential equations (SDEs) with arbitrary information patterns,  and strategies which may not have perfect recall \cite{witsenhausen1971}. Using  the  decentralized SMP various necessary and sufficient conditions for PbP and team optimality are obtained  \cite{charalambous-ahmed:IEEEAC2017a,charalambous-ahmed:IEEEAC2018,charalambous-ahmed:IEEEAC2017b,charalambous:MCSS2016}, which are expressed in  terms of conditional  Hamiltonian systems and backward  SDEs. The necessary and sufficient conditions take different forms, depending on whether these are derived using 
  the strong or  the weak  (Girsanov's change of measure) formulations of the underlying probability spaces  \cite{yong-zhou1999,elliott-aggoun-moore1995}.  
  These include various convexity conditions  such that PbP optimality implies team optimality (i.e., equivalent to Radner's theorem). 
 
However, even though   the decentralized SMP is a natural generalization  of static team theory to a dynamic team theory, 
  the decentralized SMP   is 
  computationally  demanding,    because it uses forward and backward SDEs. 
  
  Alternative methods are described in \cite{waal-vanschuppen2000,huang2012social,xiangyu-kaiqing:2026}. 
 The rest of the paper is structured as follows.

Section~\ref{sect:model} introduces  the  decentralized stochastic  network.

Section~\ref{discrete}   derives the various DP equations  and the new structural compression property of optimal strategies.

Section~\ref{sec:lr} discusses   applications of  our new structural compression property of optimal strategies  to past DP approaches   based  on   a single cost-to-go.

Section~\ref{sect:ex}, illustrates  the implementation of one of the   DP equations to a specific example.

\section{The  Decentralized Stochastic  Network}
\label{sect:model}
Here, we introduce 
 the decentralized stochastic network.

\subsection{Notation}
\label{sec:nota}
$\mathbb{R} \tri (-\infty,\infty)$, $\mathbb{Z} \tri \{\ldots, -1,0,1, \ldots\}$,    $\mathbb{Z}_+ \tri \{1,2, \ldots\}$, {$\mathbb{Z}_+^n \tri \{1,2, \ldots,n\}$,}  $n \in {\mathbb Z}_+$. 
  Given  $s^{(K)} \tri \{s^1, \ldots, s^K\}, K \in \mathbb{Z}_+$, we  define   $s^{-k} \tri s^{(K)} \setminus\{s^k\}=\{s^1, \ldots, s^{k-1},s^{k+1}, \ldots, s^K\}$, i.e.,  $s^{(K)}$ minus element $\{s^k\}$.  \\ 
 $\{({\mathbb X}_t,{\cal B}({\mathbb X }_t))\big| t\in\mathbb{Z}_+^n\}$ denotes  measurable spaces, where ${\mathbb X}_t$ is  a complete separable metric space, known as Polish  or Borel space, and ${\cal B}({\mathbb X}_t)$ is the Borel $\sigma-$algebra of subsets of ${\mathbb X}_t,  \forall t\in\mathbb{Z}_+^n$. Points in the product space ${\mathbb X}_{1,n}\triangleq{{\prod}_{t\in\mathbb{Z}_+^n}}{\mathbb X}_t$ are denoted by $x_{1,n}\triangleq \{x_1,\ldots, x_n\} \in{\mathbb X}_{1,n}$, and their restrictions for any $(m,n)\in\mathbb{Z}_+ \times \mathbb{Z}_+$ by $x_{m,n}\triangleq \{x_{m},\ldots, x_n\} \in{\mathbb X}_{m,n}, n\geq{m}$. Hence,  ${\cal B}({\mathbb X}_{1,n})\triangleq\otimes_{t\in\mathbb{Z}_+^n}{\cal B}({\mathbb X}_t)$ denotes  the $\sigma-$algebra on ${\mathbb X}_{1,n}$ generated by cylinder sets $\{\{x_1,\ldots, x_n, \ldots,\}\in{\mathbb X}_{1,\infty}\big|x_j\in{A}_j, A_j\in{\cal B}({\mathbb X}_j),~j\in\mathbb{Z}_+^n\}$. We use the convention $X_{k,n}=X_{\max\{k,1\}, n}$ and $X_{k,n}=\{\emptyset\}, \forall k>n, (k,n)\in {\mathbb Z}\times {\mathbb Z}$. 
 \\
Given a measurable  space $\big(\Omega, {\cal F}\big)$, we denote the set of probability measures (PMs) ${\mathbb P}$ on $\Omega$  by ${\cal M}(\Omega)$.  Given a sequence of  RVs indexed by subscript $t$,   $X_t:  (\Omega, {\cal F}) \rar  ({\mathbb X}_t, {\cal B}({\mathbb X}_t)), \forall t \in {\mathbb Z}_+^n$,  we denote  by
${\mathbb P}\big\{X_1 \in dx_1, \ldots, X_n \in dx_n\big\}={\bf P}_{X_{1,n}}(dx_{1,n})\equiv {\bf P}(dx_{1,n})$ the PM induced by $X_{1,n}$ on $({\mathbb X}_{1,n}, {\cal B}({\mathbb X}_{1,n}))$ (i.e., probability distribution if ${\mathbb X}_t= {\mathbb R}^k$). Given another  RV, $Y: (\Omega, {\cal F}) \rar  ({\mathbb Y}, {\cal B}({\mathbb Y}))$ we define    the conditional PM  of the  RV $Y$ conditioned on   $X$ by $
{\bf P}_{Y|X}(dy| X)\tri {\mathbb P}\big\{Y \in dy\big|X\big\}
  \equiv {\bf P}(dy|X)$.\\
We identify conditional PMs  by stochastic kernels, $K:  {\cal B}({\mathbb Y }) \times {\mathbb X} \rightarrow [0,1]$ satisfying the  two properties     \cite{hernandezlerma-lasserre1996}: \\
1) $\forall x \in {\mathbb X}$, the set function $K(\cdot|x)$ is a probability measure
on $({\mathbb Y}, {\cal B}({\mathbb Y}))$, i.e.,  $K(\cdot|x)\in {\cal M}({\mathbb Y})$. \\
2) $\forall A \in {\cal B}({\mathbb Y})$, the function $K(A|\cdot)$ is
${\cal B}({\mathbb X})$-measurable. \\
We denote the set of such  maps $K(\cdot|\cdot)$  by  ${\cal K}({\mathbb Y}|{\mathbb X})$.

 \subsection{The Model of  Decentralized Stochastic Network}
 \label{sect:int-model}
We consider  a general Markovian stochastic network on the  finite horizon 
 ${T}_+^{n}\tri \{1, \ldots, n\}$, 
with observation posts, ${\mathbb Z}_+^K=\{1,2, \ldots, K\}$ collecting information, each associated with a   control station applying control actions,   described  by  the following elements. 

(a) The underlying  probability space $(\Omega, {\cal F}, {\mathbb P})$. 

(b) The    state process, $X_{1,n} \tri \big\{X_1,\ldots, X_n\big\}$, 
\begin{align*}
 X_t: (\Omega, {\cal F}) \rar ({\mathbb X}_t, {\cal B}({\mathbb X}_t)),\hso \forall t \in {T}_+^{n}. 
\end{align*}

(c)  The $K$ observation processes,  $Y_{1,n}^{(K)} \tri \big\{Y_{1,n}^1,  \ldots, Y_{1,n}^K\big\}$,  
\begin{align*}
Y_t^k: (\Omega, {\cal F}) \rar ({\mathbb Y}_t^k,{\cal B}(    {\mathbb Y}_t^k   )), \hso \forall t \in {T}_+^{n},  \hso \forall k \in {\mathbb Z}_+^K.
\end{align*} 

(d) The $K$ control processes,  $U_{1,n}^{(K)} \tri \big\{U_{1,n}^1,\ldots, U_{1,n}^K\big\}$,  
\begin{align*}
U_t^k: (\Omega, {\cal F}) \rar ({\mathbb U}_t^k,{\cal B}(  {\mathbb U}_t^k   )), \hso  \forall t \in {T}_+^{n} , \hso \forall k \in {\mathbb Z}_+^K
\end{align*}
where  $\big\{ {\mathbb U}_t^k| t \in {T}_+^{n}\big\}$ are referred to as the {\it  action spaces }of the controls $U_{1,n}^k=u_{1,n}^k\in \prod_{k=1}^K {\mathbb U}_t^k$,  $ \forall k \in {\mathbb Z}_+^K$. 

(e) The $T-$step delayed sharing information  patterns. 
For each $(t, k)\in T_+^{n} \times {\mathbb Z}_+^K$, 
let  $\sr{\circ}{I}_t^k\tri \{Y_{1,t}^k, U_{1,t-1}^k\}$ be the data set  of the $k$ control station   at time $t$,  and   consider its  decomposition   into  a private component $\Lambda_{t}^k$ and a $T-$step delayed shared component $\Delta_t^k$,  as follows.   
\begin{align}
&\sr{\circ}{I}_t^k\tri  \big\{Y_{1,t}^k, U_{1,t-1}^k\big\}= \Delta_t^k\cup  \Lambda_{t}^k, \; \forall (t, k)\in T_+^{n} \times {\mathbb Z}_+^K,  \label{inf_1}  \\
& \Delta_t^k \tri    \big\{Y_{1, t-T}^k, U_{1,t-T}^k \big\}, \hso T \in \{1,\ldots, n\}, 
\\
& \Lambda_t^k \tri     \big\{Y_{t-T+1, t}^k, U_{t-T+1, t-1}^k \big\}, \\
 &\Delta_t^k, : (\Omega, {\cal  F}) \rar ({\mathbb D}_t^k, {\cal B}({\mathbb D}_t^k))  \; \mbox{shared comp. to all controls}\nonumber \\
& 
 \Lambda_t^k: (\Omega, {\cal F}) \rar ({\mathbb L}_t^k, {\cal B}({\mathbb L}_t^k)) \;  \mbox{private comp. of control $U_t^k$}\nonumber
\end{align}
where  $T   $ is an arbitrary delay (positive integer). \\
For each $(t, k)\in T_+^{n} \times {\mathbb Z}_+^K$, the information pattern assigned to  control station  $k$ at time $t$ 
 is $I_t^k$   defined by 
  \begin{align}
  & I_t^k: (\Omega, {\cal F}) \rar ({\mathbb I}_t^k, {\cal B}({\mathbb I}_t^k)), \hso \forall (t, k)\in T_+^{n} \times {\mathbb Z}_+^K, \label{ISG-1}\\
&I_t^k \tri \sr{\circ}{I}_t^k \cup \Delta_t^{-k}= \Delta_t^k\cup \Lambda_t^k\cup \Delta_t^{-k} = \Lambda_t^k \cup \Delta_t^{(K)},      \\
&\Delta_t^{-k}\tri \big\{\Delta_{t}^1, \ldots, \Delta_t^{k-1},\Delta_t^{k+1}, \ldots, \Delta_t^{K} \big\}, \\
&\Delta_t^{(K)}\tri \big\{\Delta_{t}^1, \ldots, \Delta_t^K\big\}= \big\{ \Delta_t^k, \Delta_t^{-k}\big\}, \\
&\Lambda_t^{(K)}\tri \big\{\Lambda_{t}^1, \ldots, \Lambda_t^K\big\}= \big\{ \Lambda_t^k, \Lambda_t^{-k}\big\}.\label{ISG-2}
\end{align}
Hence,  $\Delta_t^{(K)}\in  I_t^k, \forall k$  is the shared information component available to all control stations,  and  $\Lambda_t^k \in I_t^k$  
 is the private information component available  only to  control station $k$..   
\ \

(f) The strategies of the Control.  
For each $(t, k)\in T_+^{n} \times {\mathbb Z}_+^K$,  the control $U_t^k$ applied by the $k$ control station at time $t$
is generated by a  measurable function $\gamma_t^k(\cdot)$  called the strategy, 
\begin{align}
U_t^k=\gamma_t^k(I_t^k)=\gamma_t^k(\Delta_t^{(K)}, \Lambda_t^k),  \;     \forall t \in {T}_+^{n},\;   k \in {\mathbb Z}_+^K. \label{type_a}
\end{align}
 For each $k \in {\mathbb Z}_+^K$,  we denote by  ${\cal  U}_{1,n}^{k}$ the set of  such  admissible  strategies of   control  $U_{1,n}^k$ over  $T_+^{n}$.
 
%
 

 \begin{remark} (On Delayed Sharing Patterns)\\
\label{rem:dsp}
\indent (1)  For any  $ t\in T_+^n$,
if 
 $t < T+1$ then $\Delta_t^{(K)}=\{\emptyset\}$,   $\Lambda_t^{k}  = \{Y_{1,t}^k,  U_{1, t-1}^k\}$, 
$U_t^k=\gamma_t^k(Y_{1,t}^k,  U_{1, t-1}^k) $, $\forall  (t,k)$.

{\it  (2) Information Patterns Without Sharing.}
  If $T=n$ (by (1)) then $\Delta_t^k=\{\emptyset\}$, $\Lambda_t^k= \{Y_{1,t}^k,  U_{1, t-1}^k\}$, and $I_t^k=\sr{\circ}{I}_t^k \tri   \Lambda_t^{k}$, $\forall  t\in T_+^n, \forall  k \in {\mathbb Z}_+^K$,  i.e., no information is shared. 

%

{\it   (3)  1-Step Delayed Sharing Patterns $T=1$.} For $T=1$ then $\Delta_t^{(K)}=\{Y_{1,t-1}^{(K)},  U_{1, t-1}^{(K)}\}$, $\Lambda_t^k= \{Y_{t}^k\}$, $\forall  t\in T_+^n, \forall  k \in {\mathbb Z}_+^K$.
 

\end{remark}

\ \

 Next, we introduce the various Markovian conditional PMs. 

{\it 1)   The Conditional  PM} of $X_{t+1}$ conditioned on $(X_{1,t}, Y_{1,t}^{(K)}, U_{1,t}^{(K)})$ is
\begin{align}
& {\mathbb P} \Big\{X_{t+1} \in dx_{t+1} \big| X_{1,t},Y_{1,t}^{(K)},   U_{1,t}^{(K)}\Big\}, \;  \;   \forall t \in  T_+^{n-1}\nonumber \\
&=
 {\bf P}_{X_{t+1}|X_{t}, U_{t}^{(K)}}=S_{t+1}(dx_{t+1}|X_t, U_{t}^{(K)}).  \label{i-d1}
 \end{align}
That is, the conditional PM is identified by the stochastic kernels   $S_{t+1}(\cdot|\cdot)\in  {\cal K}({\mathbb X}_{t+1}|{\mathbb X}_{t} \times {\mathbb U}_{t}^{(K)}), \forall t$.     

{\it  2) The Conditional PM} of $Y_{t}^k$ conditioned on $(X_{1,t},Y_{1,t-1}^{(K)}, Y_t^{-k}, U_{1,t-1}^{(K)})$ is  
\begin{align}
 &{\mathbb P}\Big\{Y_{t}^k \in dy_t^k \big| X_{1,t},Y_{1,t-1}^{(K)},Y_t^{-k}, U_{1,t-1}^{(K)}\Big\}, \;   \forall  t\in T_+^n, k \in {\mathbb Z}_+^K \nonumber \\
 &=
 {\bf P}_{Y_{t}^k|X_{t}, U_{t-1}^{(K)}} =Q_{t}^k( dy_t^k|X_t,  U_{t-1}^{(K)})   . \label{i-d2}
 \end{align} 
Hence,    $Q_{t}^k(\cdot|\cdot)\in  {\cal K}({\mathbb Y}_{t}^k|{\mathbb X}_{t}\times {\mathbb U}_{t-1}^{(K)})$. By Bayes' Theorem, 
\begin{align}
 &{\mathbb P} \Big\{Y_{t}^{(K)} \in dy_t^{(K)} \big| X_{1,t},Y_{1,t-1}^{(K)},  U_{1,t-1}^{(K)}\Big\}, \; \forall  t\in T_+^n  \label{i-d2-j-2}  \\
 &= Q_{t}^{(K)}(dy_t^{(K)} |X_t,  U_{t-1}^{(K)}) = \prod_{k=1}^K Q_{t}^k(dy_t^k |X_t,U_{t-1}^{(K)} ). \nonumber
 \end{align}
\indent {\it 3)     The Conditional} PM of $U_t^{(K)}$ conditioned on $(X_{1,t},Y_{1,t}^{(K)}, U_{1,t-1}^{(K)})$, for   $\gamma^{(K)}(\cdot) \in {\cal U} _{1,n}^{(K)}$ is 
\begin{align}
 &{\mathbb P} \Big\{U_{t}^{(K)} \in du_t^{(K)} \big| X_{1,t}, Y_{1,t}^{(K)}, U_{1,t-1}^{(K)}  \Big\}, \; \forall  t\in T_+^{n} \nonumber \\
  &=
 {\bf P}_{U_{t}^{(K)}|Y_{1,t}^{(K)},  U_{1,t-1}^{(K)}}= P_{t}^{(K)}(du_t^{(K)} |I_t^{(K)})     \label{i-d2-j-22} \\
& = \prod_{k=1}^K P_{t}^k(du_t^k |I_t^k)=\prod_{k=1}^K \mu_{\gamma_t^k(I_t^k)}(du_t^k )
 \label{i-d2-j-22-a}
 \end{align}
where  $P_{t}^k(\cdot|\cdot)\in  {\cal K}({\mathbb U}_{t}^k|{\mathbb I}_{t}^k)$,  
$P_{t}^k(du_t^k|I_t^k)=\mu_{\gamma_t^{k}(I_t^k)}(du_t^{k})$ is the  Dirac measure at  $\gamma_t^{k}(I_t^k) $.
From  (\ref{i-d2-j-22-a})  we deduce the {\it conditional independence (CI)}  or {\it Markov Chain (MC)},
\begin{align}
 &\mbox{\bf CI:} \hso {\bf P}_{U_{t}^{k}  | X_{1,t},I_{t}^{(K)},  U_t^{-k}}= P_{t}^{k}(du_t^k |I_t^k)\label{CI}  \\
 &
\hst \hst = \mu_{\gamma_t^k(I_t^k)}(du_t^k ),  \hso \forall  t\in T_+^{n}, \: \forall k \in {\mathbb Z}_+^K, \nonumber  \\
 &{\bf MC:} \hso (X_{1,t}, U_t^{-k}, I_t^{-k}) \leftrightarrow I_t^k \leftrightarrow U_t^k,\hso   \forall  t, \: \forall k. \label{MC}
 \end{align}

%


For notational simplicity we use the definitions ${\cal  U}_{1,n}^{k} \tri \prod_{t=1 }^n {\cal  U}_{t}^k, {\cal  U}_{1,n}^{-k} \tri \prod_{j=1,j\neq k }^K {\cal  U}_{1,n}^j$ and 
\begin{align} 
& \gamma_{t}^{-k}(I_t^{-k})
 \tri   \big\{\gamma_{t}^{1}(I_t^{1}),\ldots,  \gamma_{t}^{k-1}(I_t^{k-1}),   \gamma_{t}^{k+1}(I_t^{k+1}), \ldots, \nonumber \\
 &   \gamma_{t}^{K}(I_t^{K})\big\}, \hso  I_t^k=\Delta_t^{(K)}\cup \Lambda_t^k,    \hso   \gamma_{1,n}^{-k}(\cdot)   \in {\cal  U}_{1,n}^{-k},     \label{not_1a} \\
 &\equiv \gamma_t^{-k}(\Delta_t^{(K)}, \Lambda_t^{-k})\equiv \gamma_t^{-k},  \;  \forall t \in { T}_+^{n},   \;    \forall k \in {\mathbb Z}_+^K.\label{not_3}
 \end{align} 
 
\indent {\it 4) The Joint Probability Measure} is defined on  the space  of admissible histories,   $
 {\mathbb   G}_{1,n}\tri  (\mathbb{Y}_{1,n}^{(K)} \times {\mathbb X}_{1,n} \times \mathbb{U}_{1,n}^{(K)}), \forall n \in {\mathbb Z}_+^{n}
$, which is equipped with  the natural $\sigma$-algebra 
  ${\cal B}({\mathbb  G}_{1,n})$. 
For $\gamma^{(K)}(\cdot) \in {\cal U}_{1,n}^{(K)}$,   we define the joint PM,  ${\bf P}_{Y_{1,n}^{(K)},  X_{1,n}, U_{1,n}^{(K)}}^{\gamma^{(K)}}$ 
on  the  canonical space $\big({\mathbb   G}_{1,n}, {\cal  B}({\mathbb  G}_{1,n})\big)$, and  we  construct a probability space $\big(\Omega, {\cal F}, {\mathbb P}^{\gamma^{(K)}}\big)$ carrying the RVs $(Y_{1,n}^{(K)}, X_{1,n}, U_{1,n}^{(K)})$, as follows.  
\begin{align}
&{\bf P}_{Y_{1,n}^{(K)},  X_{1,n}, U_{1,n}^{(K)}}^{\gamma^{(K)}}\\
&=\prod_{k=1}^K P_{n}^{k}(du_{n}^{k} |I_{n}^{k}) Q_n^{(K)}(dy_{n}^{(K)} \big|X_n, U_{n-1}^{(K)}) \nonumber \\
&. 
  S_n(dx_n   \big| X_{n-1}, U_{n-1}^{(K)})  
    \ldots  \prod_{k=1}^K  P_{2}^{k}(du_{2}^{k} |I_{2}^{k})     \nonumber \\
& .  Q_2^{(K)}(dy_{2}^{(K)}\big|X_2, U_1^{(K)} )S_2(dx_2   \big| X_{1}, U_{1}^{(K)})  \nonumber \\
  &\prod_{k=1}^K   P_{1}^{k}(du_1^{k} |I_1^{k})   Q_1^{(K)}(dy_{1}^{(K)}\big|X_1 )  S_1(dx_1) \label{jd-o}
\end{align}
such that (\ref{i-d1})-(\ref{i-d2-j-22}) and CI,  (\ref{CI}), holds.  


 {\it 5) The Average  Payoff or Cost Function.}
  Given  a   $K$-tuple  $\gamma^{(K)}\tri \{\gamma^1, \ldots, \gamma^K\} \in {\cal  U}_{1,n}^{(K)} $, the average   payoff  is  
\begin{align}
&J_{n} (\gamma^{(K)}) \tri {\mathbb E}^{{\gamma^{(K)}}} \Big\{ \sum_{t=1}^{n} \ell(t, X_t, \gamma_t^{(K)}) \Big\}   \label{i8-cost_a}
\end{align}
where  $\ell(t, \cdot), \forall t$  is lower semicontinuous, bounded from below.

\indent To  deal with the asymmetry of information of the  strategies, we use   the definition   of  {\it person-by-person (PbP) optimality }
 from {\it static team theory},  \cite{marschak1955,radner1962,marschak-radner1972,krainak-speyer-marcus1982a,krainak-speyer-marcus1982b}.

\ \

 \begin{definition}(Decentralized Seq. Team Equilibrium) \\
 \label{def-pbp}
 The  $K-$tuple of strategies   $\gamma^{(K),o}\tri   \{\gamma^{1,o},  \ldots, \gamma^{K,o}\} \in {\cal  U}_{1,n}^{(K)}$
  is called  {\it decentralized sequential team equilibrium} (i.e., PbP optimal) if it satisfies, $ \forall k \in {\mathbb Z}_+^K$, 
\begin{align}
& {J}_{n}(\gamma^{k,o}, \gamma^{-k,o}) \leq  {J}_{n}(\gamma^{k}, \gamma^{-k,o}), \hso 
 \forall \gamma^k \in {\cal  U}_{1,n}^{k} \label{pbp}\\
&J_{n}(\gamma^{k,o}, \gamma^{-k,o}) \tri  \inf_{  \gamma^{k}\in {\cal U}_{1,n}^k}   {\mathbb E}^{^{\gamma^k, \gamma^{-k,o}}} \Big\{ \sum_{t=1}^{n} \ell(t, X_t,\gamma_t^k,  \gamma_t^{-k,o})\Big\} \label{opt-p-k}
\end{align}
$
\big\{\gamma_t^{k}, \gamma_t^{-k,o}\big\} \equiv\big\{\gamma_t^{k}(I_t^{k}),  \gamma_t^{-k,o}(I_t^{-k}) \big\}$  (see (\ref{not_1a})),  where ${J}_{n}(\gamma^{k,o}, \gamma^{-k,o})$ is the optimal payoff of  strategy $\gamma^k \in {\cal  U}_{1,n}^{k}$, when all other strategies are fixed to    $\gamma^{-k,o}\in {\cal U}_{1,n}^{-k}$. 

  \end{definition}

%
%

\ \

Decentralized team equilibrium  (\ref{pbp}) 
 is   analogous to the notion of classical Nash-equilibrium,  with the important difference that there is  only one  payoff common to all strategies,   and the information structures of the strategies are different. 

\ \

\begin{example}(POMDP Induced by Recursions)\\
\label{ex:pomdps-e}
\indent (1) The Markovian  conditional PMs  include PMs  induced by state and  observations  generated   by the   recursions,
\begin{align}
&X_{t+1}=f(t, X_t, U_t^{(K)}, W_t), \hso \forall t \in {T}_+^{n-1}, \label{NDM-3}\\
& Y_{t}^k   = h^k(t,X_t, U_{t-1}^{(K)}, V_t^k),\hso \forall t \in {T}_+^{n}, \; \forall k \in {\mathbb Z}_+^K \label{NDM-4} 
\end{align}
 $W_{1,n-1}$,  $V_{1,n}^{(K)}$, 
 are   exogenous  noises,  $\{X_1, W_1, V_1^k$, $\ldots, W_{n-1}, V_{n-1}^k, V_n^k|k\in {\mathbb Z}_+^K \}$ are mutually independent, 
%
and  
$f(\cdot), h^k(\cdot), \forall k$ are bounded measurable functions. 

(2) Witsenhausen \cite{witsenhausen1971} and Varaiya and Walrand \cite{varaiya-walrand1978} considered PMs induced by  slightly different Markovian recursions,  
\begin{align}
&X_{t}=f^w(t, X_{t-1}, U_t^{(K)}, W_t), \hso  t=1, \ldots, n, \label{NDM-3-w}\\
& Y_{t}^k   = h^{k,w}(t,X_{t-1}, V_t^k), \; \forall k \in {\mathbb Z}_+^K \label{NDM-4-w} 
\end{align}
$\{X_0, W_1, V_1^k, \ldots, W_{n}, V_{n}^k|k\in {\mathbb Z}_+^K \}$  mutually independent, $f^w(\cdot), h^{k,w}(\cdot)$  bounded measurable, 
i.e., with stoch. kernels  $S_t^w(dx_t|x_{t-1}, U_t^{(K)}), Q_t^{k,w}(dy_t|x_{t-1})$.\\ 
%
Our results are easily modified to include  (\ref{NDM-3-w}), (\ref{NDM-4-w}).
\end{example}

\section{DP Equations and  Information States}
\label{discrete}
We develop the DP approach from  first principles  using  Definition~\ref{def-pbp}.
This  leads to various  (four) distinct DP equations.

\subsection{Equivalent Payoffs and Preliminaries}
We start with certain preliminaries.  
In  Lemma~\ref{lemma:payoff-pbp}, for each $k\in {\mathbb Z}_+^K$,   we  express   the  payoff  of   strategy $\gamma^k \in {\cal U}_{1,n}^k$, i.e.,  $J_{n} (\gamma^{k}, \gamma^{-k,o})$,   w.r.t.  the \^a posteriori  PM of  nonlinear filtering of estimating the  extended state process $\big\{S_t^k \tri (X_{t},   \Lambda_t^{-k})|t \in T_+^n \big\}$, which is unknown to $\gamma^k(\cdot)$   from $\big\{I_t^k=\{\Delta_t^{(K)}, \Lambda_t^k\}| t \in T_+^n\big\}$. This  choice of extended state  $\big\{S_t^k\big| t \in T_+^n \big\}$ is  natural because  $\gamma^k$    does not observe   $S^k$. 

\ \

\begin{lemma}(Payoff of Decentralized Seq. Team Equilibrium)\\
\label{lemma:payoff-pbp}
For each $k \in {\mathbb Z}_+^K$, define the  \^a posteriori  PM conditional on  $I_t^k=\{\Delta_t^{(K)}, \Lambda_t^k\}$ by\footnote{Notation $\Xi_t^k[I_t^k](dx_t, d\lambda_{t}^{-k})$ is often used  as a reminder that this is a measurable function of the data $I_t^k$.} 
\begin{align}
& \Xi_t^k[I_t^k](dx_t, d\lambda_{t}^{-k})\tri {\mathbb P}^{{\gamma^k, \gamma^{-k}}} \Big\{X_{t} \in dx_t, \Lambda_t^{-k} \in d\lambda_t^{-k} \big| I_t^{k}\Big\}\nonumber \\
 &= {\bf P}_t^{{\gamma^k, \gamma^{-k}}} (dx_t, d\lambda_{t}^{-k}\big|I_t^k), \;  \forall t \in T_+^n, \; \forall k \in {\mathbb Z}_+^K. \label{pPM}
\end{align} 
The    payoff $J_{n} (\gamma^{k}, \gamma^{-k,o})$ of  strategy $\gamma^k \in {\cal U}_{1,n}^k$  for fixed    strategy  $\gamma^{-k}=\gamma^{-k,o}\in {\cal U}_{1,n}^{-k}$ is expressed as, 
\begin{align}
&J_{n} (\gamma^{k}, \gamma^{-k,o}) ={\mathbb E}^{^{\gamma^k, \gamma^{-k,o}}} \Big\{ \sum_{t=1}^{n} \ell(t, X_t,\gamma_t^k,  \gamma_t^{-k,o})  \Big\}\nonumber  \\
&=   {\mathbb E}^{^{\gamma^k, \gamma^{-k,o}}} \Big\{ \sum_{t=1}^{n}  \int_{{\mathbb X}_t \times {\mathbb L}_{t}^{-k}}   \ell(t, x_t,\gamma_t^k(I_t^{k}),  \gamma_t^{-k,o}(\Delta_t^{(K)}, \lambda_t^{-k}))\nonumber \\
&\hst .{\bf P}_t^{{\gamma^k, \gamma^{-k,o}}}(dx_t, d\lambda_{t}^{-k}\big|I_t^k) \Big\}, \hso \forall k \in {\mathbb Z}_+^K. \label{tpayf-pbp}
\end{align}
\end{lemma}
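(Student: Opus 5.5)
The plan is to use the tower property of conditional expectation together with the fact that all quantities entering the stage cost, apart from the unobserved pair $(X_t,\Lambda_t^{-k})$, are functions of $I_t^k$. Fix $k \in {\mathbb Z}_+^K$, $\gamma^k \in {\cal U}_{1,n}^k$ and $\gamma^{-k}=\gamma^{-k,o}$. The probability space $(\Omega,{\cal F},{\mathbb P}^{\gamma^k,\gamma^{-k,o}})$ is built from the joint PM (\ref{jd-o}). By linearity of expectation it suffices to treat each stage $t\in T_+^n$ separately. Since $\ell(t,\cdot)$ is bounded from below, the expectations are well defined in $(-\infty,\infty]$, and all interchanges below are justified for nonnegative (after a constant shift) integrands.

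For fixed $t$, first write $\gamma_t^{-k,o}(I_t^{-k})=\gamma_t^{-k,o}(\Delta_t^{(K)},\Lambda_t^{-k})$ using (\ref{not_1a})--(\ref{not_3}) and the decomposition $I_t^j=\Delta_t^{(K)}\cup\Lambda_t^j$. The stage cost then becomes $\ell(t,X_t,\gamma_t^k(I_t^k),\gamma_t^{-k,o}(\Delta_t^{(K)},\Lambda_t^{-k}))$. Define the measurable function $g_t(x,\lambda,i)\tri \ell(t,x,\gamma_t^k(i),\gamma_t^{-k,o}(\delta,\lambda))$, where $\delta$ is the $\Delta_t^{(K)}$-coordinate of $i$. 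Measurability follows since $\ell$ is lower semicontinuous, hence Borel, and the strategies are measurable. The stage cost is then $g_t(X_t,\Lambda_t^{-k},I_t^k)$.

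Next apply the tower property, ${\mathbb E}\{g_t(X_t,\Lambda_t^{-k},I_t^k)\}={\mathbb E}\{{\mathbb E}\{g_t(X_t,\Lambda_t^{-k},I_t^k)|I_t^k\}\}$. Then invoke the standard disintegration result for regular conditional distributions on Polish spaces: if $Z$ takes values in a Borel space with regular conditional PM ${\bf P}(dz|I)$ and $g(z,I)$ is jointly measurable and bounded below, then ${\mathbb E}\{g(Z,I)|I\}=\int g(z,I){\bf P}(dz|I)$ a.s. Applying this with $Z=(X_t,\Lambda_t^{-k})$ and the kernel $\Xi_t^k[I_t^k]={\bf P}_t^{\gamma^k,\gamma^{-k,o}}(\cdot|I_t^k)$ of (\ref{pPM}) gives exactly the integrand in (\ref{tpayf-pbp}). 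Summing over $t$ completes the identity.

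The main obstacle is the existence of the regular version of $\Xi_t^k$ and the freezing step for the $I_t^k$-measurable argument inside the conditional expectation. I will handle both via the Polish-space assumption on ${\mathbb X}_t,{\mathbb L}_t^{-k},{\mathbb I}_t^k$. This assumption guarantees a stochastic kernel in ${\cal K}({\mathbb X}_t\times{\mathbb L}_t^{-k}|{\mathbb I}_t^k)$. I will first prove the disintegration formula for indicator products $1_A(z)1_B(i)$, then extend by a monotone class argument and monotone convergence to nonnegative measurable $g_t$, shifting by the lower bound of $\ell$.
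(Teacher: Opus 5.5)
Your proposal is correct and takes the same route as the paper, whose entire proof is that (\ref{tpayf-pbp}) ``follows by re-conditioning on $I_t^k$,'' i.e., the tower property. You additionally spell out the measure-theoretic details the paper leaves implicit: existence of a regular conditional kernel $\Xi_t^k$ on the Polish spaces, and the disintegration step that freezes the $I_t^k$-measurable arguments $\gamma_t^k(I_t^k)$ and $\Delta_t^{(K)}$ inside the conditional expectation, justified because $\ell$ is bounded below.
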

\begin{proof}  (\ref{tpayf-pbp}) follows   by 
re-conditioning on $I_t^k$.
\end{proof}

\ \


\noi Note that $J_{n} (\gamma^{k}, \gamma^{-k,o})$ in (\ref{tpayf-pbp})  is expressed in terms  of

\bi
\item  the {\it  private \^a posteriory PM of  the $k$ strategy}    $\gamma^k \in  {\cal U}_{1,n}^{k}$, i.e., $\big\{ \Xi_t^k[I_t^k]  \big|t \in T_+^n   \big\}$, and 

\item {\it the common/shared information component } $\big\{\Delta_t^{(K)}  \big|t \in T_+^n   \big\}$, available  to all strategies $\{\gamma^{k}, \gamma^{-k,o}\}\in  {\cal U}_{1,n}^{(K)}$. 
\ei 
For 
 $I_t^k= i_t^k$, then  $\Xi_t^k[I_t^k](\cdot)=\xi_t^k[i_t^k](\cdot)$,  hence  $\xi_t^k[\cdot](\cdot) \in {\cal K}( {\mathbb X}_t \times {\mathbb L}_{t}^{-k} \big| {\mathbb I}_t^k),$ i.e., a   stochastic kernel, and $i_t^k \longmapsto \xi_t^k[i_t^k](\cdot) \in  {\cal M}( {\mathbb X}_t \times {\mathbb L}_{t}^{-k})$ is a PM,  $\forall (t,k) $.

Lemma~\ref{lemma:nested} will  be  essential to our DP approach.

\ \
 
\begin{lemma}(Nested  Properties of Information Patterns)\\
\label{lemma:nested}
(1) The  nested properties  $\Delta_t^{(K)} \subseteq \Delta_{t+1}^{(K)}$  and   $I_t^k\subseteq I_{t+1}^k$ hold.
\begin{align}
\Delta_{t+1}^{(K)}
=&\big\{\Delta_{t}^{(K)}, Y_{t-T+1}^{k}, U_{t-T+1}^{k}, Y_{t-T+1}^{-k}, U_{t-T+1}^{-k}\big\} \label{nested_1}    \\
 \Lambda_{t+1}^k =& \big\{Y_{t-T+2, t+1}^{k}, U_{t-T+2,t}^{k}\big\}, \\
I_{t+1}^k 
=&\big\{I_{t}^k,Y_{t+1}^k, U_t^k, Y_{t-T+1}^{-k}, U_{t-T+1}^{-k}  \big\} \label{eq-nested-a-1} \\
=&  \big\{\Delta_{t+1}^{(K)}, \Lambda_{t+1}^k\big\}, \hso  I_t^k=\big\{\Delta_{t}^{(K)}, \Lambda_{t}^k\big\}.  \label{eq-nested}
\end{align}
(2) For each $(t, k)\in T_+^n\times {\mathbb Z}_+^K$ and $T \in \{1,\ldots, n\}$  consider $U_{t-T+1}^{-k}$ generated by strategies  
\begin{align}
U_{t-T+1}^{-k}=&  \gamma_{t-T+1}^{-k}(\Delta_{t-T+1}^{(K)}, \Lambda_{t-T+1}^{-k}) \\
=&\Big\{ \gamma_{t-T+1}^{j}(\Delta_{t-T+1}^{(K)}, \Lambda_{t-T+1}^{j})\Big\}_{j=1, j\neq k}^K , \label{dp_1_na-str1-old} \\
\Delta_{t-T+1}^{(K)}=&\big\{Y_{1, t-2T+1}^{(K)}, U_{1, t-2T+1}^{(K)}\big\}, \\
\Lambda_{t-T+1}^{j}=& \big\{Y_{t-2T+2, t-T}^{j}, U_{t-2T+2,t-T}^{j}, Y_{t-T+1}^{j}\big\}.
\end{align}
Then  $\Delta_{t-T+1}^{(K)}\subseteq \Delta_{t}^{(K)}$  and 
 $\Lambda_{t-T+1}^{j} \subseteq \{ \Delta_{t}^{(K)},Y_{t-T+1}^{j}\}, \forall  j \in {\mathbb Z}_+^K,  j \neq k, \forall t \in T_+^n$. That is, except $Y_{t-T+1}^{-k}$, all  the 
arguments  $\{ \Delta_{t-T+1}^{(K)}, \Lambda_{t-T+1}^{-k}\}$   of  $\gamma_{t-T+1}^{-k}(\cdot)$ (i.e., (\ref{dp_1_na-str1-old}))     are 
specified by $\Delta_{t}^{(K)}=  \{Y_{1, t-T}^{(K)}, U_{1, t-T}^{(K)} \}$.\\
(3) The Markov property holds, $\forall (t, k)$:
\begin{align*}
{\bf P}_{t+1}^{{\gamma^k, \gamma^{-k}}} (di_{t+1}^k\big|I_{1,t}^k, U_{1,t}^k)={\bf P}_{t+1}^{{\gamma^k, \gamma^{-k}}} ( di_{t+1}^{k}\big|I_t^k, U_t^k).   
\end{align*}
\end{lemma}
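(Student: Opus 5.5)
Parts (1) and (2) are bookkeeping on the index sets in the definitions of $\Delta_t^k$, $\Lambda_t^k$ and $I_t^k$. Part (3) is a conditional-independence statement that I would read off from the factorization (\ref{jd-o}) together with the CI property (\ref{CI}).

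For (1), I write $\Delta_{t+1}^k=\{Y_{1,t+1-T}^k,U_{1,t+1-T}^k\}$ and split off the last index $t-T+1$. This gives $\Delta_{t+1}^k=\{\Delta_t^k,Y_{t-T+1}^k,U_{t-T+1}^k\}$, and collecting over $k$ yields (\ref{nested_1}), hence $\Delta_t^{(K)}\subseteq\Delta_{t+1}^{(K)}$. The expression for $\Lambda_{t+1}^k$ is the definition with $t$ replaced by $t+1$. For (\ref{eq-nested-a-1}), I form $\Delta_{t+1}^{(K)}\cup\Lambda_{t+1}^k$. The own-agent data $Y_{t-T+1}^k,U_{t-T+1}^k$ that leave the private window $\Lambda^k$ enter $\Delta^k$, and the new own data $Y_{t+1}^k,U_t^k$ enter $\Lambda_{t+1}^k$. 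The only genuinely new items from other agents are $Y_{t-T+1}^{-k},U_{t-T+1}^{-k}$. Hence $I_{t+1}^k=\{I_t^k,Y_{t+1}^k,U_t^k,Y_{t-T+1}^{-k},U_{t-T+1}^{-k}\}$, and $I_t^k\subseteq I_{t+1}^k$. The conventions $X_{k,n}=\emptyset$ for $k>n$ and $X_{k,n}=X_{\max\{k,1\},n}$ cover the boundary cases $t<T+1$ of Remark~\ref{rem:dsp}.

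For (2), I substitute $t-T+1$ into the definitions. This gives $\Delta_{t-T+1}^{(K)}=\{Y_{1,t-2T+1}^{(K)},U_{1,t-2T+1}^{(K)}\}$, which is contained in $\Delta_t^{(K)}=\{Y_{1,t-T}^{(K)},U_{1,t-T}^{(K)}\}$ since $t-2T+1\le t-T$. Next, $\Lambda_{t-T+1}^j=\{Y_{t-2T+2,t-T+1}^j,U_{t-2T+2,t-T}^j\}$. I split $Y$ into $Y_{t-2T+2,t-T}^j$, whose indices are $\le t-T$ and so lie in $\Delta_t^{(K)}$, and $Y_{t-T+1}^j$. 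All $U^j$ indices are also $\le t-T$. Hence $\Lambda_{t-T+1}^j\subseteq\{\Delta_t^{(K)},Y_{t-T+1}^j\}$.

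For (3), by (\ref{eq-nested-a-1}) the history $(I_{1,t}^k,U_{1,t}^k)$ is a function of $(I_t^k,U_t^k)$, since $I_s^k\subseteq I_t^k$ and $U_{1,t-1}^k\subseteq I_t^k$. The conditioning $\sigma$-algebras therefore coincide, so the identity follows from the tower property. It is cleaner to note that conditioning on the larger family adds no information, which gives the equality directly. If one wants an explicit kernel, I would compute the conditional law of the new components $(Y_{t+1}^k,Y_{t-T+1}^{-k},U_{t-T+1}^{-k})$ given $(I_t^k,U_t^k)$ as follows.

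1. Integrate $Q_{t+1}^k$ and $S_{t+1}$ against the conditional law of $(X_t,X_{t-T+1},\Lambda^{-k},\ldots)$ given $I_t^k$. This uses (\ref{i-d1}), (\ref{i-d2}) and (\ref{MC}).
2. Use (2) to express $U_{t-T+1}^{-k}=\gamma_{t-T+1}^{-k}(\Delta_{t-T+1}^{(K)},\Lambda_{t-T+1}^{-k})$ as a function of $\Delta_t^{(K)}$ and $Y_{t-T+1}^{-k}$ only. Its law is then the image of that of $Y_{t-T+1}^{-k}$.

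The main obstacle is the boundary bookkeeping in (1)–(2) when $t-T+1\le 0$ or $t-2T+1\le0$. It also requires care to check that $U_t^k$ itself is not in $I_t^k$, so it must appear explicitly in the conditioning. I would handle the first point by the stated empty-set conventions, checked case by case for $t\le T$ and for $T<t\le 2T$.
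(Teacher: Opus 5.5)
Your proof is correct and takes the same approach as the paper, whose proof consists only of the remark that the lemma ``follows from the definitions.'' Your index bookkeeping for (1)--(2) and your observation for (3) supply that detail: $\sigma(I_{1,t}^k,U_{1,t}^k)=\sigma(I_t^k,U_t^k)$ because $I_s^k\subseteq I_t^k$ for $s\le t$ and $U_{1,t-1}^k\subseteq I_t^k$, so the optional explicit-kernel computation is not needed.
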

\begin{proof} Follows  from  the definitions.
\end{proof}

\ \

Our DP approach utilizes 
the  following two basic results. 
\bi
\item Theorem~ \ref{thm:is-pbp}, Lemma~\ref{lem:markov}:    {\it the private  \^a posteriori PMs,   $\big\{\Xi_t^k[I_{t}^k]| t \in T_+^n\big\}$, satisfy Markov recursions, conditioned  on the common information component}  $\big\{\Delta_t^{(K)}| t \in T_+^n\big\}$.
\ei

\noi We make use of  Theorem~ \ref{thm:is-pbp},  Lemma~\ref{lem:markov}, and  the classes of strategies of Definition~\ref{def:is-ss}, to derive various   DPs equations, and to identify  sufficient statistics for the  strategies.
%


\ \

\begin{definition} (Information States-Classes of Strategies)
\\ 
\label{def:is-ss}
Define a   centralized or  common  \^a posteriori PM of a  RV   $Z_{t}:\Omega \rar {\mathbb Z}$ conditioned on the common/shared data
 $\Delta_t^{(K)}$ by
%
 \begin{align}
  \Phi_{t}[\Delta_t^{(K)}](dz_{t}) \tri  {\bf P}_{t}^{\gamma^{(K)}} (dz_{t}\big|\Delta_t^{(K)}), \hso  \forall t \in T_+^n.\label{is-c}
 \end{align}
\indent (a) For each $k \in {\mathbb Z}_+^K$, the private  \^a posteriori PM $\big\{\Xi_t^k[I_t^k]\big| t \in T_+^n\big\}$   is  {\it  conditionally Markov (information state)} w.r.t. $\big\{\Delta_t^{(K)} \big|   t \in T_+^n\big\}$ (resp. $\big\{\Phi_t[\Delta_t^{(K)}] \big|  t \in T_+^{n}\big\}$),  if  the next state    $\Xi_{t+1}^k$ is determined from $\big\{\Xi_t^k, Y_{t+1}^k, U_t^k, \gamma_t^{-k}(\Delta_t^{(K)},\cdot) \big\}$  and  $\Delta_t^{(K)}$  (resp. $\Phi_t[\Delta_t^{(K)}] $), $\forall  t \in T_+^{n-1}$.

(b) For each $k \in {\mathbb Z}_+^k$, a statistic $\big\{{\cal G}_t^k|t \in { T}_+^n\big\}, {\cal G}_t^k : \Omega \rar {\mathbb G}_t^k$ is called a {\it sufficient statistic}   for the $k$ strategy $\gamma^k \in {\cal U}_{1,n}^k$ if    $U_t^k=g_t^k({\cal G}_t^k)$ for some measurable function $g_t^k(\cdot)$, $\forall t \in T_+^{n}$.  

(c) For each $k \in {\mathbb Z}_+^k$,  strategies $\gamma^k \in {\cal U}_{1,n}^k$  are   called,\\
 i) {\it semi-separated denoted  by ${\cal U}_{1,n}^{k,s-sep}$, }     if  $U_t^k=g_t^k(\Xi_t^k,$ $\Delta_t^{(K)},$ $ \Lambda_{t}^k)$, where $\big\{\Xi_t^k\big| t \in { T}_+^n\big\}$ is  {\it conditionally Markov} w.r.t. $\big\{\Delta_t^{(K)}\big|  t \in T_+^n\big\}$  and  $\big\{{\cal G}_t^k=\{\Xi_t^k,$ $\Delta_t^{(K)}, \Lambda_t^k \}| t \in T_+^n\big\}$ is a sufficient statistic  for $\gamma^k \in {\cal U}_{1,n}^k$ ; \\
 ii)  {\it separated   denoted  by ${\cal U}_{1,n}^{k,sep} \subseteq  {\cal U}_{1,n}^{k,s-sep}$, }   if  $U_t^k=g_t^k(\Xi_t^k, \Phi_t,\Lambda_t^k )$, where $\big\{\Xi_t^k\big|t \in T_+^n$ is {\it conditionally Markov} w.r.t.  $\big\{\Phi_t\equiv \Phi_t[\Delta_t^{(K)}]\big|  t \in T_+^{n}\big\}$  and 
 $\big\{{\cal G}_t^k=\{\Xi_t^k,\Phi_t, \Lambda_t^k \}| t \in T_+^n\big\}$ is a sufficient statistic  for $\gamma^k \in {\cal U}_{1,n}^k$; \\ 
iii) {\it information state      denoted  by ${\cal U}_{1,n}^{k,is} \subseteq  {\cal U}_{1,n}^{k,sep}$ }   if   it
 is the  restriction of ii) to  $U_t^k=g_t^k(\Xi_t^k, \Phi_t)$,   i.e., $g^k(\cdot)$ depends on $\Lambda_t^k$ through $\Xi_t^k \equiv \Xi_t^k[I_t^k], \forall t$. 
\end{definition}


\subsection{Private and Centralized Information  States}
\label{sect:pr-isr}

In Theorem~\ref{thm:is-pbp},  we derive a  recursion for   $\big\{\Xi_t^k[I_{t}^k]\big| t \in T_+^n\big\}$. 

In  Theorem~\ref{thm:is-cen}, we derive     two  Markov  recursions for the {\it centralized \^a posteriori PMs},  $\big\{\Phi_t[\Delta_t^{(K)}]\big|t\in T_+^n\big\}$.
 
 In Lemma~\ref{lem:markov}, 
we show that 
   $\big\{\Xi_t^k[I_{t}^k]| t \in T_+^n\big\}$   is Markov conditional on $\big\{\Delta_t^{(K)}\big|t\in T_+^n\big\}$. 
 
%
%



%

\ \

\begin{assumptions}(Absolute Continuity\footnote{These  are  standard assumptions   in    POMDPs  \cite{kumar-varayia:B1986,bertsekas-shreve1978}.})
\label{ass-1}
\begin{align*}
& S_{t+1}(dx_{t+1}|x_t, u_{t}^{(K)}) 
=s_{t+1}(x_{t+1}|x_t,  u_{t}^{(K)})dx_{t+1}, \; \forall t,  \\
&Q_{t}^k(dy_t^k|x_t,   u_{t-1}^{(K)})=q_{t}^k(y_t^k|x_t,   u_{t-1}^{(K)})dy_{t}^k, \; \forall (t,k)
\end{align*} 
i.e., $s_{t+1}(\cdot|x_t,  u_{t}^{(K)}),  q_{t}^k(\cdot|x_t,   u_{1,t}^{(K)})$ are   probability density functions (PDFs) (probability mass functions  for finite spaces).
\end{assumptions}


\ \

\begin{theorem}(Recursion of Private   \^a Posteriori PMs)\\
\label{thm:is-pbp}
Suppose  
Assumptions~\ref{ass-1} hold and let 
$T \in \{2,3,\ldots, n\}$. \\For each $ k  \in {\mathbb Z}_+^K$,  any   
 $\{\gamma_{1,n}^k,  \gamma_{1,n}^{-k}\}\in {\cal U}_{1,n}^k\times  {\cal U}_{1,n}^{-k}$,  and realization $I_t^k=i_t^k=\{\delta_t^{(K)}, \lambda_t^k\}$, 
  the  private 
   \^a posteriori PM,  
   $\Xi_t^k[I_t^k]=\xi_t^k[i_t^k]\equiv {\bf P}_t^{{\gamma^k, \gamma^{-k}}} (dx_t, d\lambda_{t}^{-k}\big|I_t^k=i_t^k),  \forall t \in T_+^n $  satisfies   the recursion,
 \begin{align}    
&{\bf P}_{t+1}^{{\gamma^k, \gamma^{-k}}} (dx_{t+1}, d\lambda_{t+1}^{-k}\big|i_{t+1}^k), \; \forall t \in {\mathbb T}_+^{n-1}, \forall k \in {\mathbb Z}_+^K \label{apost_1}\  \\
&= {\bf T}_{t+1}^{k}[y_{t+1}^{k}, u_t^k, \gamma_t^{-k}(\delta_t^{(K)}, \cdot),{\bf P}_{t}^{{\gamma^k, \gamma^{-k}}} (\cdot\big|i_{t}^k)](dx_{t+1}, d\lambda_{t+1}^{-k}) , \nonumber \\
& {\bf P}_1^{{\gamma^k, \gamma^{-k}}}(dx_1, d\lambda_1^{-k}|i_1^k)= {\bf P}_{X_1, Y_1^{-k}|Y_1^k=y_1^k}  
\end{align}
where $u_t^k=\gamma_t^k(i_t^k)$ and  the operator ${\bf T}_{t+1}^{k}[\cdot](\cdot,\cdot)$  is defined  by  (non-zero if its denominator  is non-zero and zero otherwise),
\begin{align}
&{\bf T}_{t+1}^{k}[y_{t+1}^{k}, u_{t}^k,  \gamma_t^{-k}(\delta_t^{(K)}, \cdot),\xi_{t}^{k} (\cdot)](dx_{t+1}, d\lambda_{t+1}^{-k}) \tri \nonumber  \\
&\frac{\overline{\bf T}_{t+1}^k[y_{t+1}^{k}, u_{t}^k, \gamma_t^{-k}(\delta_t^{(K)}, \cdot),\xi_{t}^{k} (\cdot)](dx_{t+1}, d\lambda_{t+1}^{-k}) }{\int_{{\mathbb X}_{t+1}  \times {\mathbb L}_{ t+1}^{-k}  } \overline{\bf T}_{t+1}^k[y_{t+1}^{ k}, u_{t}^k, \gamma_t^{-k}(\delta_t^{(K)},\cdot),\xi_{t}^{k} (\cdot)](dx_{t+1},d\lambda_{t+1}^{-k} )   }, \nonumber  \\
&\overline{\bf T}_{t+1}^k[y_{t+1}^{k}, u_{t}^k, \gamma_t^{-k}(\delta_t^{(K)}, \cdot),\xi_{t}^{k} (\cdot)](dx_{t+1},d\lambda_{t+1}^{-k}) \nonumber\\\
&\tri       \int_{{\mathbb X}_t} Q_{t+1}^{k}(dy_{t+1}^{k}|x_{t+1}, u_t^k, \gamma_t^{-k}(\delta_t^{(K)}, \lambda_t^{-k}))  \nonumber \\
&.\prod_{j=1, j\neq k}^K Q_{t+1}^{j}(dy_{t+1}^{j}|x_{t+1}, u_t^k, \gamma_t^{-k}(\delta_t^{(K)},\lambda_t^{-k})) \nonumber\\ 
&. S_{t+1}(dx_{t+1}\big|x_{t},u_{t}^k, \gamma_t^{-k}(\delta_t^{(K)}, \lambda_t^{-k})) \nonumber \\
& . \prod_{j=1, j \neq k}^K P_t^j(du_t^{j}\big| \delta_t^{(K)} , \lambda_t^j) \; \xi_{t}^{k} [i_t^k](dx_{t}, d\lambda_{t}^{-k}) \label{apost_1a} 
\end{align}
$P_t^j(du_t^{j}\big| \delta_t^{(K)}, \lambda_t^j)  = { \mu}_{\gamma_t^{j}(\delta_t^{(K)}, \lambda_t^j)}(du_t^{j})$  (the  Dirac PM).\\
For  $T=1$, 
  the term $\prod_{j=1, j \neq k}^K P_t^j(du_t^{j}\big| \delta_t^{(K)}, \lambda_t^j)$ is removed  (i.e.,  $\lambda_{t+1}^{-k}=Y_{t+1}^{-k}$). 
\end{theorem}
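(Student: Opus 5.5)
The plan is a standard Bayes'-rule filtering derivation. It is organized around the nested structure of Lemma~\ref{lemma:nested}. The new data between $I_t^k$ and $I_{t+1}^k$ are $\{Y_{t+1}^k, U_t^k, Y_{t-T+1}^{-k}, U_{t-T+1}^{-k}\}$ by (\ref{eq-nested-a-1}). The new unobserved block is $\Lambda_{t+1}^{-k}$. For $T\ge 2$ the shifted block is
\[
\Lambda_{t+1}^{j}=\big\{Y_{t-T+2,t+1}^j, U_{t-T+2,t}^j\big\},
\]
which is obtained from $\Lambda_t^j$ by three operations. First, the oldest pair $(Y_{t-T+1}^j,U_{t-T+1}^j)$ is dropped; it migrates into $\Delta_{t+1}^{(K)}$. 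Second, $U_t^j=\gamma_t^j(\delta_t^{(K)},\lambda_t^j)$ is appended. Third, $Y_{t+1}^j$ is appended.

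The first step is to compute the one-step joint law of
\[
\big(X_{t+1}, Y_{t+1}^{(K)}, U_t^{(K)}, X_t, \Lambda_t^{-k}\big)
\]
given $I_t^k$, using the factorization (\ref{jd-o}) and the conditional independences (\ref{i-d1}), (\ref{i-d2-j-2}), (\ref{CI}). Given $I_t^k$, the pair $(X_t,\Lambda_t^{-k})$ has law $\xi_t^k[i_t^k]$. The action $U_t^k$ equals $u_t^k=\gamma_t^k(i_t^k)$, which is deterministic given $I_t^k$. Each $U_t^j$, $j\neq k$, is given by the Dirac kernel $P_t^j(du_t^j|\delta_t^{(K)},\lambda_t^j)$ by the MC (\ref{MC}). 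Then $X_{t+1}$ is drawn from $S_{t+1}$, and $Y_{t+1}^{(K)}$ from the product of the $Q_{t+1}^j$. Integrating out $x_t$ yields exactly the unnormalized measure $\overline{\bf T}_{t+1}^k$ in (\ref{apost_1a}), viewed as a joint measure in $(x_{t+1}, y_{t+1}^{(K)}, u_t^{-k}, \lambda_t^{-k})$.

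The second step identifies this measure with one in $(x_{t+1},\lambda_{t+1}^{-k})$ together with the newly revealed data. The components $(y_{t-T+1}^{-k},u_{t-T+1}^{-k})$ of $\lambda_t^{-k}$ become observed in $i_{t+1}^k$. The remaining components of $\lambda_t^{-k}$, joined with $(u_t^{-k}, y_{t+1}^{-k})$, form $\lambda_{t+1}^{-k}$.

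The third step is conditioning via Bayes' rule on the new observations $y_{t+1}^k$ and $(y_{t-T+1}^{-k},u_{t-T+1}^{-k})$. This is done using Assumptions~\ref{ass-1}, which give densities $q_{t+1}^k$ with respect to a reference measure. Conditioning on $y_{t+1}^k$ multiplies by the likelihood $q_{t+1}^k$. Conditioning on the migrated block amounts to evaluating the measure at the realized values of those coordinates of $\lambda_t^{-k}$. This is a disintegration, which is legitimate because Polish spaces admit regular conditional probabilities. Normalizing gives the ratio ${\bf T}_{t+1}^k$, with the stated convention that it is zero when the denominator vanishes. I will use Lemma~\ref{lemma:nested}(2) to check that $\gamma_t^{-k}(\delta_t^{(K)},\cdot)$ needs only $\lambda_t^{-k}$ and common data. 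Lemma~\ref{lemma:nested}(3) justifies dropping the conditioning on past actions beyond $(i_t^k,u_t^k)$.

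The initial condition comes from $I_1^k=\{Y_1^k\}$ when $T\ge2$ (Remark~\ref{rem:dsp}(1)), so $\Lambda_1^{-k}=Y_1^{-k}$ and we recover ${\bf P}_{X_1,Y_1^{-k}|Y_1^k}$. For $T=1$ we have $\Lambda_t^{-k}=Y_t^{-k}$ and $U_t^{-k}\in\Delta_{t+1}^{(K)}$, so the $U_t^{j}$ are observed rather than integrated. The Dirac factor $\prod_{j\ne k}P_t^j$ is therefore absorbed by conditioning and removed, and $\lambda_{t+1}^{-k}=Y_{t+1}^{-k}$.

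I expect the main obstacle to be the second step. The bookkeeping must show rigorously that marginalizing over the dropped coordinates is replaced by conditioning on them. Put differently, the measure $\overline{\bf T}$ must be shown to be the correct joint density of the new unobserved block and the new observations. The difficulty is that the "new observations" $(Y_{t-T+1}^{-k},U_{t-T+1}^{-k})$ were previously part of the hidden state, not fresh measurements. I will handle this by writing $\lambda_t^{-k}=(\tilde\lambda_t^{-k}, y_{t-T+1}^{-k},u_{t-T+1}^{-k})$ explicitly. I will then verify the recursion on rectangles $A\times B$ via the defining property of conditional expectation, $\mathbb E[\mathbf 1_{A\times B}\,\phi(I_{t+1}^k)]$ for bounded test functions $\phi$, and extend to all Borel sets by a monotone class argument.
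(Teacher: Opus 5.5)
Your proposal is correct and follows essentially the same route as the paper's proof in the Appendix. Both apply Bayes' rule to the one-step joint law, factorized into $Q_{t+1}^{(K)}$, $S_{t+1}$, the Dirac kernels of the other agents' actions (via (\ref{CI}), (\ref{MC})) and $\xi_t^k[i_t^k]$, and then normalize. Your explicit treatment of the block $(y_{t-T+1}^{-k},u_{t-T+1}^{-k})$, which migrates from $\lambda_t^{-k}$ into $\delta_{t+1}^{(K)}$ and is conditioned on rather than integrated out, is a step the paper leaves implicit; it is what makes the stated normalizer of ${\bf T}_{t+1}^k$ well defined.
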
 
\begin{proof} See Appendix~\ref{app:thm:is-pbp}.
\end{proof}


\ \

By Theorem~\ref{thm:is-pbp},  
   $\xi_{t+1}^k[i_{t+1}^k]$, is updated
by     using $\xi_{t}^k[i_{t}^k]$,   observation $y_{t+1}^k$,  the value 
 of  the control action  $u_t^k=\gamma_t^k(i_t^k)$ at $i_t^k$ only, and the  strategies $\gamma_t^{-k}(\delta_t^{(K)}, \cdot)$ evaluated at $\delta_t^{(K)}$. This leads to the  property of  Lemma~\ref{lem:ind}, which is  a generalization of  a known  property of    centralized POMDP \cite{striebel1965,kumar-varayia:B1986}.


\ \

\begin{lemma}(Indep. of  Private   \^a Post. PM on its Strategy)\\
\label{lem:ind}
For each $\forall k$, 
 $\xi_{t}^k[i_{t}^k]\equiv  {\bf P}_{t}^{{\gamma^k, \gamma^{-k}}} (dx_{t}, d\lambda_{t}^{-k}\big|i_{t}^k),\forall  t$ depends on the actions $u_{1,n}^k\in {\mathbb U}_{1,n}^k$ and not   the strategies  $ \gamma^k(\cdot)\in {\cal  U}_{1,n}^k$,  
\begin{align}
{\bf P}_{t}^{{\gamma^k, \gamma^{-k}}} (dx_{t}, d\lambda_{t}^{-k}\big|i_{t}^k)={\bf P}_{t}^{{\gamma^{-k}}} (dx_{t}, d\lambda_{t}^{-k}\big|i_{t}^k), \; \forall (t,    \gamma^k).\nonumber 
\end{align}
%
%
\end{lemma}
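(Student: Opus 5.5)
The plan is to argue by induction on $t$, using the recursion of Theorem~\ref{thm:is-pbp}. The claim to carry through the induction is that, for each fixed $\gamma^{-k}$, there is a measurable map $F_t^k$ with
\[
\xi_t^k[i_t^k]=F_t^k\big(y_{1,t}^k,u_{1,t-1}^k,\delta_t^{(K)}\big).
\]
Here $F_t^k$ depends on $\gamma^{-k}$ and on the data $i_t^k$ (which already contains the realized actions $u_{1,t-1}^k$), but not on the functional form of $\gamma_{1,t-1}^k(\cdot)$. Once this holds, ${\bf P}_t^{\gamma^k,\gamma^{-k}}(\cdot|i_t^k)$ coincides, for every $\gamma^k$, with the conditional PM computed as if the past actions $u_{1,t-1}^k$ were exogenous fixed inputs. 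That is exactly what ${\bf P}_t^{\gamma^{-k}}(\cdot|i_t^k)$ denotes.

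\emph{Base case ($t=1$).} By Theorem~\ref{thm:is-pbp}, $\xi_1^k[i_1^k]={\bf P}_{X_1,Y_1^{-k}|Y_1^k=y_1^k}$. This is obtained from $S_1$ and $Q_1^{(K)}$ by Bayes' rule under Assumptions~\ref{ass-1}. No control has acted yet, so it involves no strategy at all.

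\emph{Inductive step.} Assume the claim holds at time $t$. By (\ref{apost_1})--(\ref{apost_1a}), $\xi_{t+1}^k[i_{t+1}^k]={\bf T}_{t+1}^k[y_{t+1}^k,u_t^k,\gamma_t^{-k}(\delta_t^{(K)},\cdot),\xi_t^k[i_t^k]]$. I would check each input to the operator in turn.
\begin{itemize}
\item The kernels $S_{t+1}$ and $Q_{t+1}^{(K)}$ are fixed model primitives.
\item The functions $\gamma_t^{-k}(\delta_t^{(K)},\cdot)$ and the Dirac measures $P_t^j(du_t^j|\delta_t^{(K)},\lambda_t^j)$, $j\neq k$, involve only the other agents' strategies.
\item The quantity $u_t^k$ enters only as a numerical value. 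It is a component of $i_{t+1}^k$ by (\ref{eq-nested-a-1}), so it is part of the conditioning data rather than a functional dependence on $\gamma_t^k$.
\item By the induction hypothesis, $\xi_t^k[i_t^k]$ is independent of $\gamma^k$.
\end{itemize}
It remains to confirm that $\delta_t^{(K)}$ is recoverable from $i_{t+1}^k$. This holds because $\Delta_t^{(K)}\subseteq\Delta_{t+1}^{(K)}$ by Lemma~\ref{lemma:nested}(1). Hence $\xi_{t+1}^k$ is a measurable function of $(i_{t+1}^k,\gamma^{-k})$ alone, which closes the induction.

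The main subtlety is making precise why the derivation of Theorem~\ref{thm:is-pbp} never uses $\gamma^k$ except through its evaluation $u_t^k=\gamma_t^k(i_t^k)$. I would address this directly in the Bayes computation. In the joint measure (\ref{jd-o}), the factor $\mu_{\gamma_t^k(I_t^k)}(du_t^k)$ is a function of $I_t^k$ only, by the CI property (\ref{CI}). When forming the conditional density of $(x_{t+1},\lambda_{t+1}^{-k})$ given $I_{t+1}^k=i_{t+1}^k$, this factor is evaluated at the conditioning data, so it appears identically in the numerator and the denominator and cancels. The same cancellation applies to all earlier factors $\mu_{\gamma_s^k(I_s^k)}$, $s\le t$, since each $I_s^k\subseteq I_{t+1}^k$. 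What remains is the same ratio one obtains by replacing $\gamma^k$ with the open-loop action sequence $u_{1,t}^k$.

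Handling the zero-denominator convention of Theorem~\ref{thm:is-pbp} consistently, by defining both sides as zero there, completes the argument. The result generalizes the classical Striebel property of centralized POMDPs.
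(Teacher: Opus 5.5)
Your proposal is correct and is essentially the paper's argument. The paper only cites the Striebel / Kumar--Varaiya induction, and your induction through the recursion of Theorem~\ref{thm:is-pbp}, together with the cancellation of the Dirac factors $\mu_{\gamma_s^k(I_s^k)}(du_s^k)$ in the Bayes ratio, is that argument written out; there is no circularity, since Theorem~\ref{thm:is-pbp} is derived without this lemma.

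One small precision concerns where the identity holds. It should be read on realizations $i_t^k$ consistent with $\gamma^k$, i.e.\ $u_s^k=\gamma_s^k(i_s^k)$ for $s<t$, where those Dirac factors equal one and cancel legitimately. Off that set the left side conditions on a null event while the right side need not vanish. So this, rather than the zero-denominator convention of ${\bf T}_{t+1}^k$, is the caveat to state.
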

\begin{proof} The proof is  similar to centralized POMDP \cite[Theorem~1]{striebel1965} or
\cite[Lemma~5.10, pp.81]{kumar-varayia:B1986} (using induction).  
%
%
\end{proof}



\ \


\begin{lemma}(Conditionally   Markov Property)\\
\label{lem:markov}
\indent (1)  The process  $\{ \Xi_t^{k}[I_t^k]| t \in T_+^n\}$ is   conditionally Markov     w.r.t. $\{ \Delta_t^{(K)} | t \in T_+^n\}$,  i.e.,  
 \begin{align}
&{\mathbb  P}^{\gamma^{k}, \gamma^{-k}}\Big\{\Xi_{t+1}^k \in  d\xi_{t+1}^k \Big| I_t^k=i_t^k, U_t^k=u_t^k\Big\} \label{ext-m}   \\
&={\bf P}_{t+1}^{\gamma^{-k}}(d\xi_{t+1}^k \Big| \Xi_t^k=\xi_t^k, \Delta_t^{(K)}=\delta_t^{(K)}, U_t^k=u_t^k),\forall t.\nonumber
\end{align}
Moreover,  
for any bounded continuous  $\psi:{\mathbb X}_{t+1} \times     {\mathbb L}_{t+1}^{-k} \rightarrow {\mathbb R}$,  
\begin{align}
&{\bf E}^{\gamma^k, \gamma^{-k}}\Big\{ \int_{{\mathbb X}_{t+1} \times     {\mathbb L}_{t+1}^{-k}} \psi(x_{t+1}, \lambda_{t+1}^{-k})  \nonumber \\
&. {\bf P}_{t+1}^{{\gamma^k, \gamma^{-k}}} (dx_{t+1}, d\lambda_{t+1}^{-k}
\big|I_{t+1}^k)  \Big|I_{t}^k, U_t^k \Big\} \label{Mar_REC} \\
&={\bf E}^{ \gamma^{-k}}\Big\{ \int_{{\mathbb X}_{t+1} \times     {\mathbb L}_{t+1}^{-k} } \psi(x_{t+1}, \lambda_{t+1}^{-k})    \label{Mar_REC_a_n}  \\
&.{\bf T}_{t+1}^{k}[Y_{t+1}^{k}, U_{t}^k, \gamma_t^{-k}(\Delta_t^{(K)}, \cdot),\Xi_{t}^{k} (\cdot\big)](dx_{t+1}, d\lambda_{t+1}^{-k}) \nonumber \\
&\hst   \Big|   \Xi_t^k[I_t^k],   \Delta_t^{(K)}, U_t^k \Big\} \nonumber 
\end{align}
\begin{align}
&={\bf E}^{ \gamma^{-k}}\Big\{ \int_{{\mathbb X}_{t+1} \times     {\mathbb L}_{t+1}^{-k}} \psi(x_{t+1}, \lambda_{t+1}^{-k})   \label{Mar_REC_n}\\
& . \Xi_{t+1}^k[I_{t+1}^k](dx_{t+1}, d\lambda_{t+1}^{-k})   \Big|   \Xi_t^k[I_t^k],\Delta_t^{(K)}, U_t^k  \Big\}, \;  \forall t, k  \nonumber 
\end{align}
where    the conditional expectation in (\ref{Mar_REC_a_n})  
 is w.r.t.  PM,
\begin{align}
&{\bf P}_{t+1}^{ \gamma^{-k}}(dy_{t+1}^k\Big|   I_{t}^k, U_t^k)  
={\bf P}_{t+1}^{\gamma^{-k}}(dy_{t+1}^k\Big|    \Xi_{t}^k[I_t^k], \Delta_t^{(K)}, U_t^k)\nonumber  \\
&
= \int_{{\mathbb X}_{t+1} \times {\mathbb X}_t \times {\mathbb L}_{t}^{-k}   } Q_{t+1}^k(dy_{t+1}^k\big|    x_{t+1}, U_t^k, \gamma_t^{-k}(\Delta_t^{(K)}, \lambda_t^{-k})) \nonumber \\
&{ S}_{t+1}(dx_{t+1}\big|   x_t, U_t^k,\gamma_t^{-k}(\Delta_t^{(K)}, \lambda_t^{-k}))
\Xi_{t}^k[I_t^k](dx_{t}, d\lambda_{t}^{-k}) . \label{Mar_REC_1_n}
\end{align}
\indent (2)  For separated strategies   $\gamma^{(K)}\in {\cal U}_{1,n}^{(K),sep}\subseteq {\cal U}_{1,n}^{(K)}$,  then    $\{ \Xi_t^{k}[I_t^k] | t \in T_+^n\}$ is conditionally  Markov w.r.t..   $\{ \Phi_t[\Delta_t^{(K)}]$ $ | t \in T_+^n\}$,
 i.e,  (\ref{ext-m}) holds with    $\Delta_t^{(K)}$ replaced by $ \Phi_t[\Delta_t^{(K)}]$.
\end{lemma}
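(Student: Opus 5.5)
The plan is to follow the classical argument that a filter process is a controlled Markov chain, as in centralized POMDPs \cite{striebel1965,kumar-varayia:B1986}, with $\Delta_t^{(K)}$ entering only as a parameter. By Theorem~\ref{thm:is-pbp}, $\Xi_{t+1}^k={\bf T}_{t+1}^k[Y_{t+1}^k,U_t^k,\gamma_t^{-k}(\Delta_t^{(K)},\cdot),\Xi_t^k]$. Once $\Delta_t^{(K)}$ is fixed, this map is deterministic and measurable in $(Y_{t+1}^k,U_t^k,\Xi_t^k)$. Hence the conditional law of $\Xi_{t+1}^k$ given $(I_t^k,U_t^k)$ is the pushforward under ${\bf T}_{t+1}^k$ of the conditional law of $Y_{t+1}^k$ given $(I_t^k,U_t^k)$. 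So the whole proof of (1) reduces to establishing (\ref{Mar_REC_1_n}).

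To prove (\ref{Mar_REC_1_n}), I would write ${\bf P}(dy_{t+1}^k|I_t^k,U_t^k)$ by disintegrating over the hidden variables $(X_{t+1},X_t,\Lambda_t^{-k})$ given $(I_t^k,U_t^k)$. First, $U_t^{-k}=\gamma_t^{-k}(\Delta_t^{(K)},\Lambda_t^{-k})$. Second, by (\ref{i-d1}) the law of $X_{t+1}$ given $(X_t,U_t^{(K)})$ and the past is $S_{t+1}$. Third, by (\ref{i-d2}) the law of $Y_{t+1}^k$ given $(X_{t+1},U_t^{(K)})$ and the past is $Q_{t+1}^k$. Finally, by the MC (\ref{MC}), conditioning additionally on $U_t^k=\gamma_t^k(I_t^k)$ does not change the conditional law of $(X_t,\Lambda_t^{-k})$ given $I_t^k$, which is $\Xi_t^k[I_t^k]$.

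Combining these yields the integral in (\ref{Mar_REC_1_n}). That integral depends on $I_t^k$ only through $(\Xi_t^k[I_t^k],\Delta_t^{(K)})$, and by Lemma~\ref{lem:ind} it does not depend on $\gamma^k$, which justifies the superscript $\gamma^{-k}$. The tower property then gives (\ref{ext-m}): conditioning first on $(I_t^k,U_t^k)$ and then on the coarser $\sigma$-algebra generated by $(\Xi_t^k,\Delta_t^{(K)},U_t^k)$ leaves the law unchanged. For (\ref{Mar_REC})--(\ref{Mar_REC_n}), I would apply the same reasoning to the bounded test functional $\xi\mapsto\int\psi\,d\xi$. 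The Markov property of Lemma~\ref{lemma:nested}(3) handles dependence on the full history $I_{1,t}^k$, since $I_{1,t}^k\subseteq I_t^k$.

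The main obstacle is measurability and well-definedness. One issue is showing that $(\xi,\delta,u,y)\mapsto{\bf T}_{t+1}^k[\cdot]$ is measurable as a map into ${\cal M}({\mathbb X}_{t+1}\times{\mathbb L}_{t+1}^{-k})$, including the convention that it is zero where the denominator vanishes; that null set must be shown to have probability zero under (\ref{Mar_REC_1_n}). The other issue is the dependence through $\lambda_t^{-k}$ of $\gamma_t^{-k}(\delta_t^{(K)},\cdot)$, which is handled by Assumptions~\ref{ass-1} and the Polish structure. I would first treat finite spaces and then extend via stochastic-kernel arguments \cite{hernandezlerma-lasserre1996}.

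For (2), under separated strategies $\gamma_t^{j}=g_t^j(\Xi_t^j,\Phi_t,\Lambda_t^j)$, the other agents' strategies enter ${\bf T}_{t+1}^k$ only through $\Phi_t[\Delta_t^{(K)}]$. The one remaining point is the other agents' private beliefs $\Xi_t^j$. Since these are functions of $(\Delta_t^{(K)},\Lambda_t^j)$, I would absorb them into the extended state $\lambda_t^{-k}$. The same argument then goes through with $\Delta_t^{(K)}$ replaced by $\Phi_t$.
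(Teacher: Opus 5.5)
Your part (1) follows the paper's proof: disintegrate over $(X_{t+1},X_t,\Lambda_t^{-k})$ and use $S_{t+1}$, $Q_{t+1}^k$, conditional independence and Lemma~\ref{lem:ind}. Your tower-property step only makes explicit what the paper leaves implicit. One step does need repair, although the paper's statement and proof take the same shortcut. $\Xi_{t+1}^k$ is conditioned on $\Delta_{t+1}^{(K)}$, which adds $(Y_{t-T+1}^{-k},U_{t-T+1}^{-k})$; these are determined by $(\Delta_t^{(K)},\Lambda_t^{-k})$ but not by $I_t^k$. So, for fixed $\Delta_t^{(K)}$, the update is not a function of $(Y_{t+1}^k,U_t^k,\Xi_t^k)$ alone; for $T=1$, $\Xi_{t+1}^k$ plainly depends on $Y_t^{-k}$. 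The law of $\Xi_{t+1}^k$ given $(I_t^k,U_t^k)$ is therefore the image of the joint law of $(Y_{t+1}^k,\Lambda_t^{-k})$. That joint law is the unmarginalized kernel (\ref{Mar_REC_1-nn}), not its $y_{t+1}^k$-marginal (\ref{Mar_REC_1_n}). It still depends on $I_t^k$ only through $(\Xi_t^k,\Delta_t^{(K)},U_t^k)$, so (\ref{ext-m}) still follows.

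The genuine gap is in (2). What enters ${\bf T}_{t+1}^k$ is the whole map $\lambda_t^{-k}\mapsto g_t^{-k}(\xi_t^{-k}[\delta_t^{(K)},\lambda_t^{-k}],\Phi_t,\lambda_t^{-k})$. This map depends on $\delta_t^{(K)}$ through $\lambda_t^{-k}\mapsto\xi_t^{-k}[\delta_t^{(K)},\lambda_t^{-k}]$, and absorbing $\Xi_t^{-k}$ into the hidden state does not remove that dependence:
\begin{itemize}
\item Agent $k$'s belief on $(X_t,\Lambda_t^{-k},\Xi_t^{-k})$ is the image of $\Xi_t^k$ under that same $\delta_t^{(K)}$-dependent map.
\item Updating that belief requires updating $\Xi^{-k}$. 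This needs $\gamma_t^k(\delta_t^{(K)},\cdot)=g_t^k(\xi_t^k[\delta_t^{(K)},\cdot],\Phi_t,\cdot)$ over all of agent $k$'s counterfactual private data, which starts a regress of beliefs about beliefs.
\item In any case the argument would concern an enlarged process, not the $\Xi_t^k$ of the statement.
\end{itemize}
The missing ingredient is that $\lambda^j\mapsto\xi_t^j[\delta_t^{(K)},\lambda^j]$ be determined by $\Phi_t$. This holds for $\Phi_t=\Pi_t^{\gamma^{(K)}}[\Delta_t^{(K)}]$ of Theorem~\ref{thm:is-cen}.(2), since $\Xi_t^j$ is simply the conditional of $\Pi_t^{\gamma^{(K)}}$ given $\Lambda_t^j$. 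It is not evident for $\Phi_t=\Theta_t$. That posterior omits $u_{t-T}^{(K)}$, and for $T\ge 2$ it also omits the earlier shared data entering through $\gamma_{t-T+1,t-1}^{(K)}$. The paper does not prove this either. Definition~\ref{def:is-ss}(c)ii builds conditional Markovianity w.r.t.\ $\Phi_t$ into the notion of a separated strategy. Definition~\ref{def:is-ss}(a) lets $\gamma_t^{-k}(\Delta_t^{(K)},\cdot)$ enter the update as a given input. This is presumably why (2) is called obvious there.
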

\begin{proof} (1)  First, we  prove (\ref{Mar_REC_1_n}).
\begin{align}
&{\bf P}_{t}^{{\gamma^k, \gamma^{-k}}}(dy_{t+1}^k\big| I_t^k, U_t^k)\nonumber \\
&\sr{(a)}{=} \int_{{\mathbb X}_{t,t+1} \times {\mathbb L}_{t}^{-k}    } Q_{t+1}^k(dy_{t+1}^k\big|    x_{t+1}, U_t^{k}, \gamma_t^{-k}(\Delta_t^{(K)}, \lambda_t^{-k}))\nonumber \\
&. {\bf  P}_{t+1}^{\gamma^k, \gamma^{-k}}(dx_{t+1}\big|   I_t^k, U_t^k, x_t, \lambda_{t}^{-k}){\bf P}_{t}^{\gamma^k, \gamma^{-k}}(dx_{t}, d\lambda_{t}^{-k}\big|  I_t^k, U_t^k) \nonumber   \\
&\sr{(b)}{=} \int_{{\mathbb X}_{t, t+1}  \times {\mathbb L}_{t}^{-k}   } Q_{t+1}^k(dy_{t+1}^k\big|    x_{t+1}, U_t^k, \gamma_t^{-k}(\Delta_t^{(K)}, \lambda_t^{-k}))\nonumber \\
&. { S}_{t+1}(dx_{t+1}\big|   x_t, U_t^k,\gamma_t^{-k}(\Delta_t^{(K)}, \lambda_t^{-k})) \Xi_{t}^{k}[I_t^k](dx_{t}, d\lambda_{t}^{-k})
\nonumber 
\end{align}
where $(a), (b)$ are  due  to  $I_t^k=\{\Delta_t^{(K)}, \Lambda_t^k\}$,  $\lambda_t^{-k}$  specifies the arguments of  $\gamma_t^{-k}(\cdot)$, 
 the model   PMs,  and   conditional  independence,  and Lemma~\ref{lem:ind}.
    From $(b)$
   we obtain (\ref{Mar_REC})-(\ref{Mar_REC_1_n}), and    (\ref{ext-m}) is shown.
   (2) This is obvious.
\end{proof}

\ \

In Theorem~\ref{thm:is-cen},  we identify two candidates  of   centralized \^a posteriori PMs    for   $\big\{\Phi_t[\Delta_t^{(K)}]\big|t\in T_+^n\big\}$ that satisfy  recursions.


%

\ \

\begin{theorem}(Recursions  of Centralized \^a Posteriori PMs)\\
\label{thm:is-cen}
Suppose Assumnptions~\ref{ass-1} hold. 

(1) For any   
 $\gamma^{(K)}\in  {\cal U}_{1,n}^{(K)}$, and realization $\Delta_t^{(K)}=\delta_t^{(K)}$, 
the centralized \^a posteriori PM,   $ \Theta_{t}[\Delta_t^{(K)}] = \theta_{t}[\delta_t^{(K)}]  \tri {\bf P}_t^{\gamma^{(K)}} (dx_{t-T}\big|\Delta_t^{(K)}=\delta_t^{(K)})$, satisfies  the Markov recursion, 
 \begin{align}    
&{\bf P}_{t+1}^{\gamma^{(K)}} (dx_{t-T+1}\big|\delta_{t+1}^{(K)}) \hst   t=T+1, \ldots,  n \label{apost_c}\  \\
&= {\bf T}_{t+1}^{(K)}[y_{t-T+1}^{(K)},u_{t-T}^{(K)} ,{\bf P}_{t}^{\gamma^{(K)}} (\cdot\big|\delta_{t}^{(K)})](dx_{t-T+1}) , \nonumber \\
& {\bf P}_{t}^{\gamma^{(K)}}(dx_{t-T}|\delta_{t}^{(K)})\big|_{t=T+1}= {\bf P}_{X_1|Y_{1}^{(K)}=y_{1}^{(K)}}, \\  
&{\bf T}_{t+1}^{(K)}[y_{t-T+1}^{(K)},  u_{t-T}^{(K)},\theta_{t} (\cdot)](dx_{t-T+1})  \nonumber  \\
&\tri\frac{ \overline{\bf T}_{t+1}^{(K)}[y_{t-T+1}^{(K)}, u_{t-T}^{(K)}, \theta_{t} (\cdot)](dx_{t-T+1}) }  { \int_{{\mathbb X}_{t-T+1} } \overline{\bf T}_{t+1}^{(K)}[y_{t-T+1}^{ (K)}, u_{t-T}^{(K)}, \theta_{t} (\cdot)](dx_{t-T+1})  }, \nonumber  \\
&\overline{\bf T}_{t+1}^{(K)}[y_{t-T+1}^{(K)}, u_{t-T}^{(K)},\theta_{t} (\cdot)](dx_{t-T+1}) \nonumber\\
&\tri       \int_{{\mathbb X}_{t-T}} S_{t-T+1}(dx_{t-T+1}\big|x_{t-T},u_{t-T}^{(K)})\nonumber  \\
&. Q_{t-T+1}^{k}(dy_{t-T+1}^{k}|x_{t-T+1}, u_{t-T}^{(K)})  \nonumber \\
&. Q_{t-T+1}^{-k}(dy_{t-T+1}^{-k}|x_{t-T+1}, u_{t-T}^{(K)}) 
\theta_{t}[\delta^{(K)}] (dx_{t-T}) \label{ope_is} 
\end{align}
Moreover, ${\bf P}_{t}^{\gamma^{(K)}} (\cdot\big|\delta_{t}^{(K)})={\bf P}_{t} (\cdot \big|\delta_{t}^{(K)}),\forall \gamma^{(K)}(\cdot), \forall  t$, i.e., does not depend on the strategies $\gamma^{(K)}(\cdot)$. 

(2) For any   
 $\gamma^{(K)}\in  {\cal U}_{1,n}^{(K)}$ and   $\Delta_t^{(K)}=\delta_t^{(K)}$, 
the centralized \^a posteriori PM,   $ \Pi_{t}^{\gamma^{(K)}}[\Delta_t^{(K)}] = \pi_{t}^{\gamma^{(K)}}[\delta_t^{(K)}] \tri {\bf P}_t^{\gamma^{(K)}} (dx_{t}, d\lambda_{t}^{(K)}\big|\Delta_t^{(K)}=\delta_t^{(K)})$, satisfies the Markov recursion, 
 \begin{align}    
&{\bf P}_{t+1}^{\gamma^{(K)}} (dx_{t+1}, d\lambda_{t+1}^{(K)}\big|\delta_{t+1}^{(K)}) \hst  \forall t \in T_+^{n-1} \label{apost_c-1}\  \\
&= {\bf T}_{t+1}^{(K)}[y_{t+1}^{(K)}, \gamma_{t}^{(K)},{\bf P}_{t}^{\gamma^{(K)}} (\cdot\big|\delta_{t}^{(K)})](dx_{t+1},d\lambda_{t+1}^{(K)}) , \nonumber \\
& {\bf P}_{1}^{\gamma^{(K)}}(dx_{1},d\lambda_{1}^{(K)}|\delta_1^{(K)})={\bf P}_{X_1, Y_1^{(K)}| \Delta_1^{(K)}=\delta_1^{(K)}}, \\
&{\bf T}_{t+1}^{{(K)}}[y_{t+1}^{(K)}, \gamma_{t}^{(K)},\pi_{t}^{\gamma^{(K)}} (\cdot)](dx_{t+1},d\lambda_{t+1}^{(K)}) \nonumber  \\
& \tri\frac{ \overline{\bf T}_{t+1}^{{(K)}}[y_{t+1}^{(K)}, \gamma_{t}^{(K)}, \pi_{t}^{\gamma^{(K)}} (\cdot)](dx_{t+1},d\lambda_{t+1}^{(K)}) }  { \int_{{\mathbb S}_{t+1}^{(K)} } \overline{\bf T}_{t+1}^{{(K)}}[y_{t+1}^{ (K)}, \gamma_{t}^{(K)},\pi_{t}^{\gamma^{(K)}} (\cdot)](dx_{t+1},d\lambda_{t+1}^{(K)})  }, \nonumber \\
&\overline{\bf T}_{t+1}^{{(K)}}[y_{t+1}^{(K)}, \gamma_{t}^{(K)},\pi_{t}^{\gamma^{(K)}} (\cdot)](dx_{t+1}, d\lambda_{t+1}^{(K)})\nonumber \\
& \tri       \int_{{\mathbb X}_{t}} Q_{t+1}^{(K)}(dy_{t+1}^{(K)}|x_{t+1}, u_{t}^{(K)}) S_{t+1}(dx_{t+1}\big|x_{t},u_{t}^{(K)})   \nonumber \\
&.  \prod_{j=1}^KP_t^j(du_t^{j}\big| \delta_t^{(K)}, \lambda_t^j)   \pi_{t}^{\gamma^{(K)}}[\delta_t^{(K)}] (dx_{t},d\lambda_{t}^{(K)})\label{ope_is-1} 
\end{align}
where  ${\mathbb S}_{t+1}^{(K)}\tri {\mathbb X}_{t+1}\times {\mathbb L}_{t+1}^{(K)} $ and $\gamma_{t}^{(K)}=\{\gamma_{t}^{k}(\delta_t^{(K)}, \lambda_t^k)\}_{k=1}^K$, 
$ P_t^j(du_t^{j}\big| \delta_t^{(K)}, \lambda_t^j)  = { \mu}_{\gamma_t^{j}(\delta_t^{(K)}, \lambda_t^j)}(du_t^{j}), \forall j , \forall t$.
\end{theorem}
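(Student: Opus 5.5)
The plan is to prove both recursions by the same Bayes-rule argument used for Theorem~\ref{thm:is-pbp}: factor the joint measure (\ref{jd-o}) using Assumptions~\ref{ass-1}, integrate out the variables that are neither in the conditioning set nor in the target, and normalize. Lemma~\ref{lemma:nested}(1) gives the increment of the common information,
\[
\Delta_{t+1}^{(K)}=\big\{\Delta_t^{(K)},Y_{t-T+1}^{(K)},U_{t-T+1}^{(K)}\big\}.
\]
Hence conditioning on $\Delta_{t+1}^{(K)}$ is the same as conditioning on $\Delta_t^{(K)}$ together with one new observation vector and one new action vector. The new action vector is a deterministic function of $\Delta_t^{(K)}$ and $Y_{t-T+1}^{(K)}$, by Lemma~\ref{lemma:nested}(2).

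For part (1) I will work with the one-step-predicted joint density of $(X_{t-T+1},Y_{t-T+1}^{(K)})$ given $\delta_t^{(K)}$. By the Markov kernels (\ref{i-d1}) and (\ref{i-d2-j-2}), and since $U_{t-T}^{(K)}$ is already part of $\delta_t^{(K)}$, this density equals
\[
\int S_{t-T+1}(dx_{t-T+1}|x_{t-T},u_{t-T}^{(K)})\,Q^{(K)}_{t-T+1}(dy_{t-T+1}^{(K)}|x_{t-T+1},u_{t-T}^{(K)})\,\theta_t(dx_{t-T}).
\]
The new action $u_{t-T+1}^{(K)}=\gamma_{t-T+1}^{(K)}(\cdot)$ enters only as a Dirac factor $\mu_{\gamma(\cdot)}(du)$. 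Its argument is measurable with respect to $\{\delta_t^{(K)},y_{t-T+1}^{(K)}\}$ and does not involve $x_{t-T+1}$, so by the CI property (\ref{CI}) it cancels between numerator and denominator. Bayes' rule then yields $\overline{\bf T}^{(K)}_{t+1}$ divided by its normalization, which is (\ref{apost_c}); the product form $Q^kQ^{-k}$ in (\ref{ope_is}) is just (\ref{i-d2-j-2}). The initial condition comes from taking $t=T+1$, where $\Delta_{T+1}^{(K)}=\{Y_1^{(K)},U_1^{(K)}\}$ and $U_1^{(K)}$ is a function of $Y_1^{(K)}$. Strategy independence then follows by induction on $t$, in the style of Lemma~\ref{lem:ind}: the recursion involves only the realized values $u^{(K)}$ and never the maps $\gamma^{(K)}$.

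For part (2) the conditioning set is $\delta_t^{(K)}$ and the target is $(x_{t+1},\lambda_{t+1}^{(K)})$. Here $\lambda_{t+1}^{(K)}$ is built from $\lambda_t^{(K)}$ by appending $(y_{t+1}^{(K)},u_t^{(K)})$ and dropping $(y_{t-T+1}^{(K)},u_{t-T+1}^{(K)})$, and the dropped components move into $\delta_{t+1}^{(K)}$. I will write the joint law of $(X_t,\Lambda_t^{(K)},U_t^{(K)},X_{t+1},Y_{t+1}^{(K)})$ given $\delta_t^{(K)}$ as
\[
\pi_t(dx_t,d\lambda_t^{(K)})\,\prod_j P_t^j(du_t^j|\delta_t^{(K)},\lambda_t^j)\,S_{t+1}\,Q^{(K)}_{t+1},
\]
using (\ref{jd-o}), (\ref{CI}) and the fact that $\lambda_t^j$ determines the arguments of $\gamma_t^j$. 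Integrating out $x_t$ gives the unnormalized measure $\overline{\bf T}^{(K)}_{t+1}$ in (\ref{ope_is-1}). The next step reinterprets $\lambda_t^{(K)}$ together with $(y_{t+1}^{(K)},u_t^{(K)})$ as $(\lambda_{t+1}^{(K)},\text{new shared part})$. Conditioning on the new shared part, which is the coordinates that become $\delta_{t+1}^{(K)}$, and normalizing then gives (\ref{apost_c-1}).

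The main obstacle is this bookkeeping in part (2). The object being conditioned on, $\delta_{t+1}\setminus\delta_t$, consists of coordinates of the previous state $\lambda_t^{(K)}$, not fresh observations. So the "normalization" must be read as disintegration of the updated measure with respect to those coordinates, and the formula is to be read with that implicit marginalization and relabeling. To keep this rigorous I will first handle the finite/density case under Assumptions~\ref{ass-1}, with explicit sums over $\lambda_t$ consistent with the realized $\delta_{t+1}$. I will then note that the general Borel case follows by the existence of regular conditional probabilities on Polish spaces.
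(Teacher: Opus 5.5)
Your proposal is correct and follows the paper's route. The paper proves this theorem only by saying the derivation mirrors the Bayes-rule argument of Theorem~\ref{thm:is-pbp}: factor the joint law through the model kernels and the CI property (\ref{CI}), integrate out, and normalize. That is exactly what you outline. Your reading of part (2) as conditioning on the coordinates $\delta_{t+1}^{(K)}\setminus\delta_t^{(K)}$ is also the precise interpretation the paper's compressed formula (\ref{ope_is-1}) requires. In that reading $Q_{t+1}^{(K)}(dy_{t+1}^{(K)}|\cdot)$ stays part of the law of $\lambda_{t+1}^{(K)}$ rather than acting as a likelihood in $y_{t+1}^{(K)}$.
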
 
\begin{proof} The derivations are similar to Theorem~\ref{thm:is-pbp}.
\end{proof}

Note that recursion  $\theta_{t+1}[\delta_{t+1}^{(K)}] $  depends  on the actions  $\{y_{t-T+1}^{(K)},  u_{t-T}^{(K)}\}$, while recursion $ \pi_{t+1}^{\gamma^{(K)}}[\delta_{t+1}^{(K)}]  $ depends on  the strategies  $\gamma_t^{(K)}(\cdot)\in {\cal U}_{t}^{(K)}$.

\subsection{Generalized   DP Equations, Private  Information States, Semi-Separated Strategies}
\label{sect:DP-VP}
In Theorem~\ref{thm:vp}, we give the first necessary and sufficient conditions for  decentralized team equilibrium,  using generalized  DP equations.
   We show the {\it structural property } that optimal strategies occur in the subset  ${\cal U}_{1,n}^{(K),s-sep}\subseteq 
{\cal U}_{1,n}^{(K)}$.


%

\ \

\begin{definition}(Decentralized Value Processes)\\
\label{def:ctg-nm}
Consider  the payoff $J_{n} (\gamma^{k}, \gamma^{-k,o})$ of  strategy $\gamma^k \in {\cal U}_{1,n}^k$  for fixed   optimal strategies  $\gamma^{-k}=\gamma^{-k,o}\in {\cal U}_{1,n}^{-k}$,  $\forall k \in {\mathbb Z}_+^K$.\\
The value process or  cost-to-go ${\cal V}_{t}^{\gamma^{k,o},\gamma^{-k,o}}(\cdot): {\mathbb I}_t^{k}\rar (-\infty,\infty)$, over  $\{t,t+1, \ldots, n\}$  of   $\gamma^k \in {\cal U}_{1,n}^k$,  when the optimal  $\gamma^k=\gamma^{k,o}\in {\cal U}_{1,t-1}^{k}$ is  used over $\{1,2, \ldots, t-1\}$, conditioned  on any  $I_t^k=\{\Delta_t^{(K)}, \Lambda_t^k\}=i_t^k=\{\delta_t^{(K)}, \lambda_t^k\}$
  is   
\begin{align}
 & {\cal V}_{t}^{\gamma^{k,o},\gamma^{-k,o}}(i_t^k) \tri   \inf_{  \gamma^{k}\in {\cal U}_{t,n}^k}  {J}_{t,n}^{\gamma^{k},\gamma^{-k,o}}(i_t^k),   \;    \forall t \in T_+^n, 
  \label{opt-ctg-g}\\
&{J}_{t,n}^{\gamma^{k},\gamma^{-k,o}}(i_t^k) 
 \tri   {\mathbb E}^{^{\gamma^k, \gamma^{-k,o}}} \Big\{ \nonumber \\
 &\hso \sum_{j=t}^{n} \ell(j, X_j,\gamma_j^k(I_j^k),  \gamma_j^{-k,o}(I_j^{-k})) \Big|i_t^k \Big\}, \;   \forall k \in {\mathbb Z}_+^K.    \label{opt-ctg-pbp-tc} 
\end{align}
\end{definition}

\ \

%

\begin{lemma}(Decentralized Value Processes)\\
\label{lemma:eqv}
 For each $k$, the value process of Definition~\ref{def:ctg-nm} satisfies,   
\begin{align}
& {\cal V}_{n}^{\gamma^{k,o},\gamma^{-k,o}}(i_n^k)   \label{v-pbp-a} \\
&= \inf_{  \gamma^{k}\in {\cal U}_{n}^{k}}    \Big\{  \int_{{\mathbb X}_n\times {\mathbb L}_n^{-k}}   \ell(n, x_n,\gamma_n^k(I_n^k), \gamma_n^{-k,o}(\delta_n^{(K)},\lambda
_n^{-k})) \nonumber \\
&\hst  .{\bf P}_n^{{\gamma^k, \gamma^{-k,o}}} (dx_n, d\lambda_n^{-k}\; \big|i_n^k)  \Big\} , \hst \hst    \forall k \in {\mathbb Z}_+^K \nonumber 
\end{align}
\begin{align}
& {\cal V}_{t}^{\gamma^{k,o},\gamma^{-k,o}}(i_t^k)  
=\inf_{  \gamma^{k}\in {\cal U}_{t,n}^{k}}  {\mathbb E}^{^{\gamma^k, \gamma^{-k,o}}}    \Big\{ \label{v-pbp} \\
&\hst
\sum_{j=t}^{n} \int_{{\mathbb X}_j\times {\mathbb L}_j^{-k}}   \ell(j, x_j,\gamma_j^k(I_j^k),  \gamma_j^{-k,o}(\Delta_j^{(K)},\lambda
_j^{-k}))\nonumber \\
&. {\bf P}_j^{{\gamma^k, \gamma^{-k,o}}} (dx_j, d\lambda_j^{-k}\; \big|I_j^k) \Big| \; i_t^k  \Big\}, \hso  \forall t \in T_+^{n-1}, \; \forall k \in {\mathbb Z}_+^K.  \nonumber 
\end{align}
\end{lemma}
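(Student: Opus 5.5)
The plan is to start from Definition~\ref{def:ctg-nm}. For each fixed $k$ and fixed $\gamma^{-k,o}$, I will rewrite the conditional cost-to-go ${J}_{t,n}^{\gamma^{k},\gamma^{-k,o}}(i_t^k)$ term by term, using the tower property of conditional expectation together with the nested property $I_t^k\subseteq I_j^k$ for $j\geq t$ (Lemma~\ref{lemma:nested}(1)). This is the conditional analogue of Lemma~\ref{lemma:payoff-pbp}.

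For each summand $j\in\{t,\ldots,n\}$, I will write
$\mathbb{E}^{\gamma^k,\gamma^{-k,o}}\{\ell(j,X_j,\gamma_j^k(I_j^k),\gamma_j^{-k,o}(I_j^{-k}))\,|\,i_t^k\}$ as
$\mathbb{E}^{\gamma^k,\gamma^{-k,o}}\{\mathbb{E}^{\gamma^k,\gamma^{-k,o}}\{\ell(\cdots)\,|\,I_j^k\}\,|\,i_t^k\}$.
This re-conditioning is legitimate because $\sigma(I_t^k)\subseteq\sigma(I_j^k)$.

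Inside the inner conditional expectation, three things are fixed or absorbed:
\begin{itemize}
\item $\gamma_j^k(I_j^k)$ is $I_j^k$-measurable, so it can be treated as fixed.
\item $I_j^{-k}=\{\Delta_j^{(K)},\Lambda_j^{-k}\}$, and $\Delta_j^{(K)}$ is part of $I_j^k$.
\item The only randomness not determined by $I_j^k$ is therefore $(X_j,\Lambda_j^{-k})$.
\end{itemize}
Hence the inner expectation equals the integral of $\ell(j,x_j,\gamma_j^k(I_j^k),\gamma_j^{-k,o}(\Delta_j^{(K)},\lambda_j^{-k}))$ against the regular conditional distribution ${\bf P}_j^{\gamma^k,\gamma^{-k,o}}(dx_j,d\lambda_j^{-k}|I_j^k)=\Xi_j^k[I_j^k]$ of (\ref{pPM}). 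This is the standard disintegration formula for a jointly measurable integrand, one argument of which is measurable with respect to the conditioning $\sigma$-algebra. I will justify the existence of the regular version by the Polish-space assumption of Section~\ref{sec:nota}. Integrability, or at least a well-defined extended value, follows because $\ell$ is lower semicontinuous and bounded from below, so monotone convergence applies to truncations $\ell\wedge N$.

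Summing over $j$ and taking $\inf_{\gamma^k\in{\cal U}_{t,n}^k}$ then gives (\ref{v-pbp}). For $t=n$ the outer conditional expectation is trivial, since the integral is already a function of $i_n^k$. The infimum is then over $\gamma_n^k\in{\cal U}_n^k$ only, because the cost from time $n$ onward does not involve earlier-stage strategies beyond those fixed at $\gamma^{k,o}$ on $\{1,\ldots,n-1\}$ through the conditioning value $i_n^k$. This gives (\ref{v-pbp-a}).

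The main, though mild, obstacle is measure-theoretic. The conditional expectations given $I_t^k=i_t^k$ are defined only up to null sets, so the identity holds ${\bf P}$-a.s. in $i_t^k$. I will state it for a fixed version of the kernels $\xi_j^k[\cdot]\in{\cal K}({\mathbb X}_j\times{\mathbb L}_j^{-k}|{\mathbb I}_j^k)$, and interpret the infimum pointwise over such versions, as is customary for POMDP value processes.
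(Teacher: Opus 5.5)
Your proposal is correct and follows the same route as the paper. The paper's proof is just the nested property $I_t^k\subseteq I_{t+1}^k$ from Lemma~\ref{lemma:nested}, followed by re-conditioning each stage cost on $I_j^k$ and integrating against $\Xi_j^k[I_j^k]$; your remarks on regular conditional distributions, bounded-below integrands, and fixing versions of the kernels make explicit measure-theoretic details the paper leaves implicit.
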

\begin{proof} By 
 (\ref{eq-nested}),   $I_t^k \subseteq I_{t+1}^k$,  and re-conditioning. 
\end{proof}

\ \

Theorem~\ref{thm:vp} gives  DP equations 
 based on  
the information patterns or statistics $\{I_t^k\big|t \in T_+^n\}, \forall k$. 
These are analogous to the  centralized DP equations of POMDPs in   \cite[Section~5.4]{bertsekas2005}. 

\ \

\begin{theorem}(Decentralized generalized DP Equations-Necessary and Sufficient Conditions)\\
\label{thm:vp}
For each $ k \in {\mathbb Z}_+^K$,  consider  
   Definition~\ref{def:ctg-nm},  $\big\{{\cal V}_t^{\gamma^{k,o}, \gamma^{-k,o}}(\cdot)\big| t\in T_+^n  \big\}$  and $ \xi_t^k[i_t^k] \tri  {\bf P}_{t}^{{ \gamma^{-k,o}}} (dx_{t}, d\lambda_{t}^{-k}\big|I_{t}^k=i_t^k)$, $i_t^k=\{\delta_t^{(K)}, \lambda_t^k\},    \forall t \in T_+^n$. \\ 
(1) {\it Necessary Conditions for PbP Optimality.}  Suppose a decentralized team equilibrium  $\gamma^{k,o} \in {\cal U}_{t,n}^k, \forall k$ exists. \\
(1.1) 
${\cal V}_t^{\gamma^{k,o}, \gamma^{-k,o}}(i_t^k)=  {\cal V}_t^{ \gamma^{-k,o}}(i_t^k), \forall i_t^k,  \forall  \gamma^{k,o}  \in {\cal U}_{1,n}^k, \forall  t\in T_+^n $ (i.e., is independent  of $\gamma^{k,o}$) 
 and  satisfies   the   DP   equations: 
\begin{align}
&{\cal V}_n^{\gamma^{-k,o}}(i_n^k)=\inf_{u_n^k \in {\mathbb U}_n^k}{\mathbb E}^{ \gamma^{-k,o}}   \Big\{   \ell(n, X_n,u_n^k,  \gamma_n^{-k,o}(\delta_n^{(K)}, \Lambda_n^{-k}))   \nonumber \\
& \Big| i_n^k, u_n^k  \Big\},\hst  \forall k \in {\mathbb Z}_+^K \label{dp_1_nnn}\\
&= \inf_{u_n^k \in {\mathbb U}_n^k}\int_{{\mathbb X}_n  \times {\mathbb L}_{n}^{-k} }     \ell(n, x_n,u_n^k,  \gamma_n^{-k,o}(\delta_n^{(K)}, \lambda_n^{-k})) \nonumber \\
&\hst . \xi_n^k[i_n^k] (dx_{n}, d\lambda_n^{-k}),      \label{dp_1_nnn_1}  \\
&{\cal V}_t^{\gamma^{-k,o}}(i_t^k) 
 = \inf_{u_t^k \in {\mathbb U}_t^k}  {\mathbb E}^{ \gamma^{-k,o}}   \Big\{    \ell(t, X_t,u_t^k,  \gamma_t^{-k,o}(\delta_t^{(K)}, \Lambda_t^{-k}))  \nonumber\\
 &+{\cal V}_{t+1}^{\gamma^{-k,o}}(I_{t+1}^k ) \Big| i_t^k,  u_t^k \Big\}, \; \forall t \in T_+^{n-1},\; \forall  k \in {\mathbb Z}_+^K \label{dp_2_nm}   \\
 &=  \inf_{u_t^k \in {\mathbb U}_t^k}  \Big\{  \int_{{\mathbb X}_t \times {\mathbb L}_{t}^{-k}}    \ell(t, x_t,u_t^k,  \gamma_t^{-k,o}(\delta_t^{(K)}, \lambda_t^{-k})) \label{dp_1_na}\\
 &.  \xi_t^k[i_t^k]  (dx_{t}, d\lambda_{t}^{-k}) +  \int_{ {\mathbb Y}_{t+1}^k \times {\mathbb X}_{t, t+1}   \times {\mathbb L}_{t}^{-k}    } {\cal V}_{t+1}^{\gamma^{-k,o}}(i_{t}^k,  y_{t+1}^k, u_t^k,\nonumber\\
 & y_{t-T+1}^{-k},  u_{t-T+1}^{-k})   {\bf P}_{t+1}^{ \gamma^{-k,o}}(dy_{t+1}^k, d\lambda_{t}^{-k}, dx_{t}, dx_{t+1}\big|   i_t^k, u_t^k)\Big\}  \nonumber 
\end{align}
where in  (\ref{dp_1_na}) 
 the   conditional  PM and $u_{t-T+1}^{-k}$ are
\begin{align}
&{\bf P}_{t+1}^{ \gamma^{-k,o}}( dy_{t+1}^k, d\lambda_t^{-k},dx_{t}, dx_{t+1} \big|i_t^k, u_t^k) \label{Mar_REC_1-nn-11}  \\
&={\bf P}_{t+1}^{ \gamma^{-k,o}}( dy_{t+1}^k, d\lambda_t^{-k},dx_{t}, dx_{t+1} \big| \xi_t^k, \delta_t^{(K)},  u_t^k) \label{Mar_REC_1-nn-aaa}\\
&= Q_{t+1}^k(dy_{t+1}^k\big|    x_{t+1}, u_t^k, \gamma_t^{-k,o}(\delta_t^{(K)}, \lambda_t^{-k}))\label{Mar_REC_1-nn} \\
&.{ S}_{t+1}(dx_{t+1}\big|   x_t, u_t^k,\gamma_t^{-k,o}(\delta_t^{(K)}, \lambda_t^{-k})) \xi_t^k[i_t^k](dx_{t}, d\lambda_{t}^{-k}), \nonumber \\
 &u_{t-T+1}^{-k}=\gamma_{t-T+1}^{-k,o}(i_{t-T+1}^{-k})= \gamma_{t-T+1}^{-k,o}(\delta_{t-T+1}^{(K)}, \lambda_{t-T+1}^{-k})\nonumber \\
 &=\Big\{ \gamma_{t-T+1}^{j,o}(\delta_{t-T+1}^{(K)}, \lambda_{t-T+1}^{j})\Big\}_{j=1,j \neq k}^K, \label{dp_1_na-str1}
\end{align}
and the  arguments  $\{\delta_{t-T+1}^{(K)},\lambda_{t-T+1}^{-k}\}$    of  $\gamma_{t-T+1}^{-k,o}(\cdot)$, 
  except  $y_{t-T+1}^{-k}$,  are  specified by $\delta_{t}^{(K)}$ (see Lemma~\ref{lemma:nested}.(2)).  
\\
(1.2) If  the infimum in the DP equations of part (1.1) exists,  then $\gamma^{k,o} \in {\cal U}_{1,n}^{k,s-sep}\subseteq {\cal U}_{1,n}^{k}$,  $u_n^{k,o}=\gamma_n^{k,o}(\xi_n^k[i_n^k], \delta_n^{(K)})\in {\mathbb U}_n^k$,   $u_t^{k,o}=\gamma_t^{k,o}(\xi_t^k[i_t^k], \delta_t^{(K)}, \lambda_t^k)\in {\mathbb U}_t^k, \forall  t \in T_+^{n-1}, \forall k$. \\
(2) {\it Verification-Sufficient  Conditions for Decentralized Team Equilibrium.} \\
(2.1)  If the value process  ${\cal V}_t^{ \gamma^{-k,o}}(\cdot),  \forall t \in T_+^n, \forall k$  satisfy the DP equations  of part (1),  then  (almost surely), 
\begin{align}
 {\cal V}_t^{\gamma^{-k,o}}(I_t^k)    \leq J_{t,n}^{\gamma^{k}, \gamma^{-k,o}}(I_t^k) ,  \;\forall  \gamma^k \in {\cal U}_{1,n}^k, \; \forall (t,k)  \label{in_1_g} 
\end{align}
and  $\gamma^{k,o} \in {\cal U}_{1,n}^k, \; \forall k \in {\mathbb Z}_+^K$ is  decentralized team equilibrium.\\
(2.2) Suppose for each $k \in {\mathbb Z}_{+}^K$, 
$\gamma^{k,o}(\cdot)$ is a strategy $  u_t^{k,o}=\gamma_t^{k,o}(\xi_t^k[i_t^k], \delta_t^{(K)}, \lambda_t^k)\in {\mathbb U}_t^k$,  i.e., $\gamma^{k,o} \in   {\cal U}_{1,n}^{k}$, 
such that,  for all $\{\xi_t^k[i_t^k], \delta_t^{(K)}, \lambda_t^k\}$ achieves   the infimum in the DP eqns of part (1)  for $t=1, \ldots, n$. 
Then $\gamma^{k,o} \in  {\cal U}_{1,n}^{k}$ is  optimal and ${\cal V}_t^{\gamma^{-k,o}}(I_t^k) = J_{t,n}^{\gamma^{k,o}, \gamma^{-k,o}}(I_t^k), \forall (t, k)$  (almost surely).
\end{theorem}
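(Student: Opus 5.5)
The plan is to fix $k$ and the other strategies at $\gamma^{-k,o}$, and view agent $k$'s problem as a \emph{centralized} POMDP. Its hidden state is $S_t^k=(X_t,\Lambda_t^{-k})$ and its observation history is $I_t^k$. This works because of two earlier results: by Lemma~\ref{lemma:nested}.(1) the patterns are nested, $I_t^k\subseteq I_{t+1}^k$, and by Lemma~\ref{lemma:nested}.(3) the process $\{I_t^k\}$ is controlled Markov in $U_t^k$. Once $\gamma^{-k,o}$ is frozen, the strategies $\gamma^{-k,o}$ become part of the (time-varying) model kernels. The whole proof is therefore classical backward induction on $t=n,n-1,\ldots,1$, applied to the value process of Definition~\ref{def:ctg-nm}. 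Lemma~\ref{lemma:eqv} supplies the starting representation.

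\textbf{Necessity.} At $t=n$, write $J_{n,n}$ via Lemma~\ref{lemma:payoff-pbp}, conditioned on $i_n^k$. By Lemma~\ref{lem:ind}, $\xi_n^k[i_n^k]$ does not depend on $\gamma^k$. Hence the integrand depends on $\gamma_n^k$ only through the single value $u_n^k=\gamma_n^k(i_n^k)$. The infimum over measurable $\gamma_n^k$ therefore reduces to a pointwise infimum over $u_n^k\in{\mathbb U}_n^k$, which gives (\ref{dp_1_nnn})--(\ref{dp_1_nnn_1}).

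For general $t$, split the sum in (\ref{opt-ctg-pbp-tc}) into the stage-$t$ term and the remainder, and use the tower property on $I_{t+1}^k\supseteq I_t^k$. The remainder then becomes $E\{J_{t+1,n}^{\gamma^k,\gamma^{-k,o}}(I_{t+1}^k)\mid i_t^k,u_t^k\}$. Its law given $(i_t^k,u_t^k)$ depends only on the future pieces $\gamma^k_{t+1,n}$, by Lemma~\ref{lem:ind} and the Markov property. Taking the infimum over $\gamma^k_{t+1,n}$ inside the conditional expectation yields ${\cal V}_{t+1}$. The interchange is justified in two directions:
\begin{itemize}
\item the inequality ``$\geq$'' is immediate;
\item the inequality ``$\leq$'' uses a measurable $\epsilon$-optimal selection. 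Because $\ell$ is lower semicontinuous and bounded below, one can follow the Bertsekas--Shreve universally measurable selection argument.
\end{itemize}
Independence of ${\cal V}_t$ from $\gamma^{k,o}$ (part 1.1) also follows from Lemma~\ref{lem:ind}. The past strategy enters only through the realized actions contained in $i_t^k$, and those are the conditioning argument itself.

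To obtain (\ref{dp_1_na}) explicitly, expand $I_{t+1}^k$ using (\ref{eq-nested-a-1}). The new components are $y_{t+1}^k$, $u_t^k$, $y_{t-T+1}^{-k}$ and $u_{t-T+1}^{-k}$. By Lemma~\ref{lemma:nested}.(2), $u_{t-T+1}^{-k}$ is a function of $\delta_t^{(K)}$ and $y_{t-T+1}^{-k}$, and $y_{t-T+1}^{-k}$ is a coordinate of $\lambda_t^{-k}$. It then suffices to integrate against the kernel (\ref{Mar_REC_1-nn}), which is exactly (\ref{Mar_REC_1_n}) of Lemma~\ref{lem:markov} augmented by $\lambda_t^{-k}$. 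That kernel depends on $i_t^k$ only through $(\xi_t^k,\delta_t^{(K)})$.

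\textbf{Structure (part 1.2).} The right side of (\ref{dp_1_na}) is a function of $(\xi_t^k[i_t^k],\delta_t^{(K)},\lambda_t^k)$, where $\lambda_t^k$ enters through ${\cal V}_{t+1}$'s argument $i_t^k$. At $t=n$ there is no continuation term, so the right side is a function of $(\xi_n^k,\delta_n^{(K)})$ only. Consequently any minimizer is semi-separated.

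\textbf{Sufficiency (part 2).} Run a backward verification argument. Assume ${\cal V}_{t+1}\le J_{t+1,n}^{\gamma^k,\gamma^{-k,o}}$ almost surely for every $\gamma^k$. The DP inequality at $u_t^k=\gamma_t^k(I_t^k)$ combined with the tower property then gives (\ref{in_1_g}) at time $t$. Equality holds along a minimizing selector, which proves 2.2. Take $t=1$ and the expectation over $I_1^k$ to conclude $J_n(\gamma^{k,o},\gamma^{-k,o})\le J_n(\gamma^k,\gamma^{-k,o})$ for every $k$, which is Definition~\ref{def-pbp}.

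\textbf{Main obstacle.} I expect the hardest step to be the interchange of infimum and conditional expectation. Making it rigorous on Borel spaces requires measurable selection and lower semianalyticity of ${\cal V}_{t+1}$. I would try to circumvent it by assuming that the infimum is attained by a measurable selector, as part (1.2) does.
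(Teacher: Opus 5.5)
Your proposal is correct and follows essentially the same route as the paper's proof. Both arguments use backward induction through the nested property $I_t^k\subseteq I_{t+1}^k$ with re-conditioning, apply Lemma~\ref{lem:ind} to reduce the infimum over strategies to one over actions and to drop the dependence on $\gamma^{k,o}$, expand $I_{t+1}^k$ via Lemma~\ref{lemma:nested} to obtain the kernel (\ref{Mar_REC_1-nn}), and close with a verification induction identical to Theorem~\ref{thm:vf-g}. The only difference is that you explicitly flag the interchange of infimum and conditional expectation, which the paper performs without comment and covers implicitly by the measurable-selection assumptions of Remark~\ref{rem:papm-ex}.
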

\begin{proof} See Appendix~\ref{app;thm:vp}.
\end{proof}

\ \

 

\begin{remark}(Completing  Witsenhausen's \cite[Assertion 8]{witsenhausen1971})\\
Theorem~\ref{thm:vp} identifies   the  structural property  of optimal strategies,  $u_t^k=\gamma_t^k((\xi_t^k[i_t^k],\delta_t^{(K)}, \lambda_t^{k})$.   By using  $\theta_t[ \delta_t^{(K)}]={\bf P}_{t}^{\gamma^{(K)}}(dx_{t-T}\big| \delta_t^{(K)} )$ instead of $\delta_t^{(K)}$, we identify the stractural compression property $u_t^k=\gamma_t^{k,sep}(\xi_t^k[i_t^k],\theta_t[\delta_t^{(K)}], \lambda_t^{k}), \forall t,k$.
On the other hand,  Witsenhausen's \cite[Assertion 8, pp.1562]{witsenhausen1971} uses strategies  $u_t^k=\phi_t^{k,W}(\theta_t[\delta_t^{(K)}]  , \lambda_t^{k})$, i.e., $\xi_t^k[i_t^k]$ is not included,   $\forall t,k$, which turned out to be sub-optimal according to the counterexample  provided   by Varaiya and Walrand \cite{varaiya-walrand1978}. \\
 Essentially, Theorem~\ref{thm:vp}  corrects Witsenhausen's \cite[Assertion 8]{witsenhausen1971}.  This structural  property also holds if  we  consider the {\it single cost-to-go} based only on the common information component  \cite{nayyar-mahajan-teneketzis2011,nayyar-mahajan-teneketzis2013,nayyar-teneketzis2019} (see Section~\ref{sec:lr}).
\end{remark}

\begin{remark}(Simplified Strategies)\\
\label{rem:1-sim} 
In Theorem~\ref{thm:vp}, 
  $\gamma_{t-T+1,t-1}^{k,o}$ can be recursively substituted in $U_t^k=\gamma_t^{k,o}(I_t^k)=\gamma_t^{k,o}(\Delta_{t}^{(K)}, \Lambda_t^{-k}),\Lambda_t^{-k}=\{Y_{t-T+1,t}^{-k}$, $U_{t-T+1,t-1}^{-k}\}$ to express  (Striebel  \cite{striebel1965} and Witsenhausen\cite{witsenhausen1971}), 
\begin{align}
&U_1^k=\overline{\gamma}_1^{k,o}(\Delta_{1}^{(K)}, Y_{1}^k), \:
 U_2^k=\overline{\gamma}_2^{k,o}(\Delta_{2}^{(K)}, Y_{3-T,2}^k), \ldots, \nonumber \\
&  U_t^k=\overline{\gamma}_t^{k,o}(\Delta_{t}^{(K)},Y_{t-T+1,t}^{k} ), \; \forall t \in T_+^n,\ \;  \forall k \in {\mathbb Z}_+^K  
  \end{align} 
  for some measurable $\overline{\gamma}_{1,n}^{k,o}(\cdot) \in {\cal U}_{1,n}^k, \forall k$.   Clearly,  $\overline{\gamma}_t^k(\cdot)$ is not required to  recall  past controls $U_{t-T+1, t-1}^k$,     $\forall t \in T_+^{n}, \forall k$.  \\
Consequently, 
in 
 the DP equations of Theorem~\ref{thm:vp}  we can replace $(\gamma^k(\cdot), I_t^k)$ and the PMs   ${\bf P}_t^{{\gamma}^{-k,o}} (dx_t, di_{t}^{-k}\big|I_t^k)$ by  $(\overline{\gamma}_t^k(\cdot), \overline{I}_t^k\tri \{\Delta_t^{(K)}, Y_{t-T+1,t}^k\})$  and  the  PMs,  
  \begin{align}
 &\overline{\Xi}_t^k[\overline{I}_t^k](dx_{t}, dy_{t-T+1,t}^{-k}) \tri  {\bf P}_{t}^{{ \overline{\gamma}^{-k,o}}} (dx_{t}, dy_{t-T+1,t}^{-k}\big| \overline{I}_{t}^k) \nonumber  \\
& \forall t \in {\mathbb Z}_+^{n}, \; \forall k \in {\mathbb Z}_+^K.  
\end{align} 
\end{remark} 
%
%
%

\ \

\begin{remark} (On the DP Eqns of Theorem~\ref{thm:vp}).\\
\label{rem:papm-DP-nm}
\indent  For  $T=2$,   then   $\Delta_{t}^{(K)}=\{Y_{1, t-2}^{(K)}, U_{1, t-2}^{(K)}\}$,    $\Lambda_{t}^{k}=\{Y_{t-1,t}^{k},U_{t-1}^k\}$, $I_t^k=\{\Delta_{t}^{(K)}, \Lambda_{t}^{k}\}$.
   The conditional  expectation  in the RHS of the DP equation is
 \begin{align}
 & {\mathbb E}^{ \gamma^{-k,o}}   \Big\{   \ell(t, X_t,u_t^k, \gamma_t^{-k,o}(\delta_t^{(K)}, Y_{t-1,t}^{-k}, U_{t-1}^{-k}))\\
  &+  {\cal V}_{t+1}^{\gamma^{-k,o}}(i_{t}^k,Y_{t+1}^k, u_t^k, Y_{t-1}^{-k}, U_{t-1}^{-k} )\big|   i_t^k, u_t^k \Big\}, \nonumber \\
 & \gamma_t^{-k,o}(\delta_t^{(K)}, \Lambda_t^{-k} ) =\big\{ \gamma_t^{j,o}(\delta_t^{(K)}, Y_{t-1,t}^{j}, U_{t-1}^j )\big\}_{j=1, \j\neq k}^K, \nonumber   \\
&U_{t-1}^{-k}=\big\{ \gamma_{t-1}^{j,o}(\delta_{t-1}^{(K)},y_{t-2}^j,Y_{t-1}^{j},u_{t-2}^{j}) \big\}_{j=1, \j\neq k}^K
  \end{align}
and  $\xi_t^k[i_t^k] (dx_{t}, d\lambda_{t}^{-k}) =\xi_t^k[i_t^k] (dx_{t}, dy_{t-1, t}^{-k}, du_{t-1}^{-k})$. By Remark~\ref{rem:1-sim}, we can replace   $\{\gamma^{k,o}(\cdot), I_t^k)\}$ and $\Xi_t^k[I_t^k](dx_{t}, dy_{t-1, t}^{-k}, du_{t-1}^{-k})$ by   $\{\overline{\gamma}^{k,o}(\cdot), \overline{I}_t^k)\}$ and $\overline{\Xi}_t^k[\overline{I}_t^k](dx_{t}, dy_{t-1, t}^{-k}), \overline{I}_t^k=\{\Delta_t^{(K)}, Y_{t-1, t}^{k}\}$. 

\end{remark}

\ \

\begin{remark} (On Measurable Selectors-Theorem~\ref{thm:vp}).\\
\label{rem:papm-ex}
Theorem~\ref{thm:vp} pre-supposes the  assumptions:  For each $k$,  the terms inside the infimum of the DP equations, i.e., $\inf_{u_t^k \in {\mathbb U}_t^k} \big\{ \cdot \big\}$ is measurable,  the DP solution ${\cal V}_t^{\gamma^{-k,o}}(\cdot)$   is  measurable in $i_t^k$,  and  there exists measurable function $\gamma_t^{k,o}(\cdot) \in {\cal U}_{t}^k$ such that the function inside $\{ \cdot \}$ attains its minimum at   $u_t^{k,o} =\gamma_n^{k,o}(\xi_t^k,i_t^k ) \in {\mathbb U}_t^k, \forall (\xi_t^k,i_t^k)$, $\forall t$. \\
For   finite   spaces, 
  these assumptions are easy to verify \cite{bertsekas-shreve1978}.\\
 For abstract Borel spaces and Example~\ref{ex:pomdps-e} sufficient conditions are obtained from \cite{hernandezlerma-lasserre1996} \cite{bertsekas-shreve1978}, as follows.\\
1) ${\mathbb U}_t^k$ are compact Borel spaces, for $ \forall (t, k)$.\\
2) $\ell(t,x, \cdot)$ is lower semi-continuous bounded from below on ${\mathbb U}_t^{(K)}$ for every $(t,x) \in T_+^n\times {\mathbb X}_t$. \\
3) PMs  (\ref{Mar_REC_1-nn}) are such that  
\begin{align*}
& \phi^\prime(i_{t}^k, u_t^k) \tri   \int_{{\mathbb Y}_{t+1}^k \times {\mathbb X}_{t, t+1}   \times {\mathbb L}_{t}^{-k}    } \phi(y_{t+1}^k, \lambda_t^{-k}, x_{t}, x_{t+1}) \\
&. {\bf P}_{t+1}^{ \gamma^{-k,o}}(dy_{t+1}^k, d\lambda_{t}^{-k}, dx_{t}, dx_{t+1}\Big|   i_t^k, u_t^k)
\end{align*}
 are bounded continuous in  $(i_{t}^k, u_t^k)\in {\mathbb I}_t^k \times {\mathbb U}_t^k$ for every bounded continuous $\phi(\cdot)$ on  ${\mathbb Y}_{t+1}^k   \times {\mathbb L}_{t}^{-k}\times {\mathbb X}_{t, t+1}$, $\forall (t,k)$.   
\end{remark}

\ \

Theorem~\ref{thm:vp-mn}  simplifies  the  DP equations by    using  the  recursions  of the private \^a posteriori PMs  of Theorem~\ref{thm:is-pbp}.
These DP equations  are generalizations of the  centralized DP equations of POMDPs, which use the recursion of nonlinear filtering \^a posteriori PM, i.e.,  \cite[Section~5.4]{bertsekas2005}, \cite{kumar-varayia:B1986}, \cite{vanschuppen2021}.

\ \

\begin{theorem}(Decentralized DP Equations  with Private Information  State-Necessary Conditions)\\
\label{thm:vp-mn}
 Suppose Assumptions~\ref{ass-1} hold. 

(1)  For each $ k \in {\mathbb Z}_+^K$ suppose a PbP optimal $\gamma^{k,o} \in {\cal U}_{t,n}^k$ exists. 
The value processes of Definition~\ref{def:ctg-nm}  satisfy, $\forall i_t^k$, 
\begin{align}
 {\cal V}_{t}^{\gamma^{k,o},\gamma^{-k,o}}(i_t^k) =V_t^{\gamma^{-k,o}}(\xi_t^k,\delta_t^{(K)},\lambda_t^k ), \;  \forall (t,  k)   
\end{align}
where 
 $ { V}_t^{ \gamma^{-k,o}}(\cdot)$  
   satisfies  
 $\forall ( \xi_t^k, \delta_t^{(K)}, \lambda_t^k), \forall t \in T_+^n:$
\begin{align}
&{ V}_n^{\gamma^{-k,o}}(\xi_n^k,\delta_n^{(K)}, \lambda_n^k  )=\inf_{u_n^k \in {\mathbb U}_n^k}{\mathbb E}^{ \gamma^{-k,o}}   \Big\{ \label{dp_1_nnn-mn}   \\
& \ell(n, X_n,u_n^k,  \gamma_n^{-k,o}(\delta_n^{(K)}, \Lambda_n^{-k}))   \Big| \xi_n^k,\delta_n^{(K)}, \lambda_n^k  , u_n^k \Big\}, \forall k \in {\mathbb Z}_+^K \nonumber\\
&= \inf_{u_n^k \in {\mathbb U}_n^k}\int_{{\mathbb X}_n  \times {\mathbb L}_{n}^{-k} }     \ell(n, x_n,u_n^k,  \gamma_n^{-k,o}(\delta_n^{(K)}, \lambda_n^{-k})) \label{dp_1_nnn_1-mn}\\
&\hst . \xi_{n}^{k}[i_n^k] (dx_{n}, d\lambda_n^{-k}),     \\
&{ V}_t^{\gamma^{-k,o}}(\xi_t^k,\delta_t^{(K)},\lambda_t^k  )  = \inf_{u_t^k \in {\mathbb U}_t^k}  {\mathbb E}^{ \gamma^{-k,o}}   \Big\{ \label{dp_2_mn} \\
&   \ell(t, X_t,u_t^k,  \gamma_t^{-k,o}(\delta_t^{(K)}, \Lambda_t^{-k}))  +{ V}_{t+1}^{\gamma^{-k,o}}(\Xi_{t+1}^k, \Delta_{t+1}^{(K)}, \Lambda_{t+1}^k  )\nonumber\\
 & \Big| \xi_t^k,\delta_t^{(K)}, \lambda_t^k , U_t^k \Big\}, \hso  \forall t \in T_+^{n-1},  \forall k \in {\mathbb Z}_+^K,  \label{dp_2_mn}   \\
 &=  \inf_{u_t^k \in {\mathbb U}_t^k}  \Big\{  \int_{{\mathbb X}_t \times {\mathbb L}_{t}^{-k}}    \ell(t, x_t,u_t^k,  \gamma_t^{-k,o}(\delta_t^{(K)}, \lambda_t^{-k}))  \nonumber \\
 &.  \xi_{t}^k[i_t^k] (dx_{t}, d\lambda_{t}^{-k})  +  \int_{ {\mathbb Y}_{t+1}^k \times {\mathbb X}_{t, t+1}   \times {\mathbb L}_{t}^{-k}    }  {V}_{t+1}^{\gamma^{-k,o}}({\bf T}_{t+1}^{k}[y_{t+1}^{k},  \nonumber  \\
 &u_t^k, \gamma_t^{-k}(\delta_t^{(K)}, \cdot),    \xi_t^k(\cdot)], \delta_t^{(K)}, \lambda_t^k ,   y_{t+1}^k, u_t^k,    y_{t-T+1}^{-k}, u_{t-T+1}^{-k}) \nonumber  \\
 &. {\bf P}_{t+1}^{ \gamma^{-k,o}}(dy_{t+1}^k, d\lambda_{t}^{-k}, dx_{t}, dx_{t+1}\Big|   \xi_t^k, \delta_t^{(K)}, \lambda_t^k, u_t^k) \Big\}  \label{dp_1_na_m}
\end{align}
where   
 the PM of the last RHS term is
given by  (\ref{Mar_REC_1-nn-11})-(\ref{dp_1_na-str1}).
%

(2)  For each $k \in {\mathbb Z}_{+}^K$, if  the infimum in the DP equations of part (1) exists, then the optimal strategy occurs in the set of semi-separated strategies, $\gamma^{k,o} \in  {\cal U}_{1,n}^{k, s-sep}$, i..e., ${\cal G}_t^k\tri \{\Xi_t^k[I_t^k],\Delta_t^{(K)},\Lambda_t^k \}$ is a sufficient statistic for $U_t^k$, $\forall t$. In particular, 
 $u_n^{k,o}=\gamma_n^{k,o}(\xi_n^k[i_n^k],\delta_n^{(K)})\in {\mathbb U}_n^k$, 
$u_t^{k,o}=\gamma_t^{k,o}(\xi_t^k[i_t^k],\delta_t^{(K)},\lambda_t^k )\in {\mathbb U}_t^k, \forall t \in T_+^{n-1}$.
\end{theorem}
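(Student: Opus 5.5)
The plan is to start from the generalized DP equations of Theorem~\ref{thm:vp}, part (1.1), whose value function ${\cal V}_t^{\gamma^{-k,o}}(i_t^k)$ is a functional of the full information pattern $i_t^k=\{\delta_t^{(K)},\lambda_t^k\}$. I will then show by backward induction on $t=n,n-1,\ldots,1$ that this functional factors through the triple $(\xi_t^k[i_t^k],\delta_t^{(K)},\lambda_t^k)$, with $\xi_t^k$ the private \^a posteriori PM. Since $\xi_t^k$ is itself a measurable function of $i_t^k$, the identity ${\cal V}_t=V_t(\xi_t^k,\delta_t^{(K)},\lambda_t^k)$ is really a statement that the right-hand side of (\ref{dp_1_na}) depends on $i_t^k$ only through this triple, and that the resulting function satisfies the recursion (\ref{dp_2_mn})--(\ref{dp_1_na_m}).

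The base case $t=n$ is direct. By (\ref{dp_1_nnn_1}) the quantity inside the infimum is $\int \ell(n,x_n,u_n^k,\gamma_n^{-k,o}(\delta_n^{(K)},\lambda_n^{-k}))\,\xi_n^k[i_n^k](dx_n,d\lambda_n^{-k})$. This depends only on $(\xi_n^k,\delta_n^{(K)},u_n^k)$. Taking the infimum over $u_n^k$ therefore defines $V_n^{\gamma^{-k,o}}$, which can be written with the argument $\lambda_n^k$ even though it is not needed there. This also yields the terminal form $u_n^{k,o}=\gamma_n^{k,o}(\xi_n^k,\delta_n^{(K)})$.

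For the inductive step, assume ${\cal V}_{t+1}(i_{t+1}^k)=V_{t+1}(\xi_{t+1}^k[i_{t+1}^k],\delta_{t+1}^{(K)},\lambda_{t+1}^k)$ and substitute into (\ref{dp_1_na}). Two ingredients are used.
\begin{enumerate}
\item By Lemma~\ref{lemma:nested}, $i_{t+1}^k=\{i_t^k,y_{t+1}^k,u_t^k,y_{t-T+1}^{-k},u_{t-T+1}^{-k}\}$. Here $\delta_{t+1}^{(K)}$ and $\lambda_{t+1}^k$ are obtained from $(\delta_t^{(K)},\lambda_t^k,y_{t+1}^k,u_t^k,y_{t-T+1}^{-k},u_{t-T+1}^{-k})$. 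Moreover, by (\ref{dp_1_na-str1}) and Lemma~\ref{lemma:nested}(2), $u_{t-T+1}^{-k}$ is a function of $\delta_t^{(K)}$ and $y_{t-T+1}^{-k}$, and $y_{t-T+1}^{-k}$ is a coordinate of $\lambda_t^{-k}$.
\item By Theorem~\ref{thm:is-pbp} together with Lemma~\ref{lem:ind}, $\xi_{t+1}^k={\bf T}_{t+1}^k[y_{t+1}^k,u_t^k,\gamma_t^{-k,o}(\delta_t^{(K)},\cdot),\xi_t^k]$, which does not depend on $\gamma^k$.
\end{enumerate}
Hence the integrand of the second term in (\ref{dp_1_na}) becomes exactly the $V_{t+1}({\bf T}_{t+1}^k[\cdot],\delta_t^{(K)},\lambda_t^k,y_{t+1}^k,u_t^k,y_{t-T+1}^{-k},u_{t-T+1}^{-k})$ appearing in (\ref{dp_1_na_m}). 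The integrating measure is (\ref{Mar_REC_1-nn}), which by (\ref{Mar_REC_1-nn-aaa}) and Lemma~\ref{lem:markov} depends on $i_t^k$ only through $(\xi_t^k,\delta_t^{(K)})$, given $u_t^k$. The running cost term depends only on $(\xi_t^k,\delta_t^{(K)},u_t^k)$. Thus the bracket in (\ref{dp_1_na}) is a function $F_t(\xi_t^k,\delta_t^{(K)},\lambda_t^k,u_t^k)$, where the $\lambda_t^k$-dependence enters only via the shift into $\lambda_{t+1}^k$. Setting $V_t=\inf_{u_t^k}F_t$ closes the induction, and the equality of (\ref{dp_2_mn}) with (\ref{dp_1_na_m}) is just this substitution rewritten as a conditional expectation.

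Part (2) follows immediately. If the infimum is attained, choose a minimizer $u_t^{k,o}$ as a measurable function of $(\xi_t^k,\delta_t^{(K)},\lambda_t^k)$, using the selector assumptions of Remark~\ref{rem:papm-ex}. Then $\gamma^{k,o}\in{\cal U}_{1,n}^{k,s-sep}$, because $\Xi_t^k$ is conditionally Markov w.r.t.\ $\Delta_t^{(K)}$ by Lemma~\ref{lem:markov}(1), and the triple is a sufficient statistic in the sense of Definition~\ref{def:is-ss}.

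The main obstacle is the bookkeeping in the inductive step. One must verify carefully that every component of $i_{t+1}^k$ entering $V_{t+1}$ is either in $(\delta_t^{(K)},\lambda_t^k)$, in the integration variables $(y_{t+1}^k,\lambda_t^{-k})$, or equal to the action $u_t^k$. In particular, $u_{t-T+1}^{-k}$ must be expressible through $\delta_t^{(K)}$ and the integrated $y_{t-T+1}^{-k}$ rather than through unobserved data. Care is also needed with the zero-denominator convention in ${\bf T}_{t+1}^k$, which holds only on a null set of the measure (\ref{Mar_REC_1-nn}).
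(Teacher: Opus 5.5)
Your proposal is correct and follows essentially the same route as the paper's proof. The paper also argues by backward induction from the DP equations of Theorem~\ref{thm:vp}: the base case is read off from (\ref{dp_1_nnn_1}), and the inductive step rewrites ${\mathbb E}\{{\cal V}_{t+1}(I_{t+1}^k)|i_t^k,u_t^k\}$ via Lemma~\ref{lemma:nested} and the recursion of Theorem~\ref{thm:is-pbp}, using that the measure (\ref{Mar_REC_1-nn}) depends on $\lambda_t^k$ only through $\xi_t^k$; part (2) is read off from the minimizers as in the paper, and your remarks on measurable selection and the null-set convention in ${\bf T}_{t+1}^k$ only make explicit what the paper leaves implicit.
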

\begin{proof}
 (1) We show the statement  using  induction and    Theorem~\ref{thm:vp} (alternatively we can use    \cite{kumar-varayia:B1986,bertsekas-shreve1978,vanschuppen2021}).  At time $t=n$,  by  (\ref{dp_1_nnn}),  (\ref{dp_1_nnn_1}), since  
${\cal V}_n^{\gamma^{-k,o}}(i_n^k )$ depends on $i_n^k$ through $\{\xi_{n}^k, \delta_n^{(K)}, \lambda_n^k \}$,  we have ${\cal V}_n^{\gamma^{-k,o}}(i_n^k )={ V}_n^{\gamma^{-k,o}}(\xi_{n}^k, \delta_n^{(K)}, \lambda_n^k )$, as given by    (\ref{dp_1_nnn-mn}),   (\ref{dp_1_nnn_1-mn}).  Suppose at times $j=n, n-1, \ldots, t+1$, ${\cal V}_{j}^{\gamma^{-k,o}}(i_{j}^k )={ V}_{j}^{\gamma^{-k,o}}(\xi_{j}^k, \delta_{j}^{(K)} , \lambda_{j}^k )$ holds.  First, we evaluate the second RHS term in 
(\ref{dp_2_nm})  
using Lemma~\ref{lemma:nested}, 
the DP equations of Theorem~\ref{thm:vp} and    the recursion  of Theorem~\ref{thm:is-pbp}:
\begin{align}
&{\mathbb E}^{ \gamma^{-k,o}} \Big\{{\cal V}_{t+1}^{ \gamma^{-k,o}}(I_{t+1}^k) \Big| i_t^k, u_t^k \Big\} \\
  &={\mathbb E}^{ \gamma^{-k,o}} \Big\{{ V}_{t+1}^{ \gamma^{-k,o}}(\Xi_{t+1}^k, \Delta_{t+1}^{(K)}, \Lambda_{t+1}^k) \Big| i_t^k, u_t^k \Big\}  \\
  &= \int_{ {\mathbb Y}_{t+1}^k \times {\mathbb X}_{t, t+1}   \times {\mathbb L}_{t}^{-k}    }   { V}_{t+1}^{ \gamma^{-k,o}}({\bf T}_{t+1}^{k}[y_{t+1}^{k}, u_t^k,  \gamma_t^{-k,o}(\delta_t^{(K)}, \cdot),  \nonumber \\
  &  \xi_t^k(\cdot)], \delta_t^{(K)}, \lambda_t^k , y_{t+1}^k, U_t^k,    y_{t-T+1}^{-k}, u_{t-T+1}^{-k})  \label{nm-ctg-mn}  \\
 .&  {\bf P}_{t+1}^{\gamma^{k},  \gamma^{-k,o}}(dy_{t+1}^k, d\lambda_{t}^{-k}, dx_{t}, dx_{t+1}\Big| \xi_t^k,   \delta_t^{(K)}, \lambda_t^k,  u_t^k)\Big\}\nonumber 
\end{align}
where (\ref{nm-ctg-mn}) is obtained by using  $i_t^k=\{\delta_t^{(K)}, \lambda_t^k\}$ specifies $\Xi_t^k=\xi_t^k$, 
and where  the conditional PM  in  (\ref{nm-ctg-mn}),  ${\bf P}_{t+1}^{\gamma^{k},  \gamma^{-k,o}}(dy_{t+1}^k, d\lambda_{t}^{-k}, dx_{t}, dx_{t+1}\big| \xi_t^k, \delta_t^{(K)}, \lambda_t^k, u_t^k)$ is derived in Theorem~\ref{thm:vp},  given by   
(\ref{Mar_REC_1-nn-11})-(\ref{dp_1_na-str1}),
i.e., it depends on $\lambda_t^k$ through $\xi_t^k\equiv \xi_t^k[i_t^k]$,  hence  conditionally independent on $\lambda_t^k$.
 Substituting (\ref{nm-ctg-mn}) 
in the RHS of (\ref{dp_2_nm}), we deduce that 
${\cal V}_t^{\gamma^{-k,o}}(i_t^k )={V}_t^{\gamma^{-k,o}}(\xi_t^k, \delta_t^{(K)},\lambda_t^k ), t=1, \ldots, n-1$, which shows  (\ref{dp_1_na_m}) and by equivalence (\ref{dp_2_mn}).  
(2)   At time $t=n$, by  (\ref{dp_1_nnn_1-mn}), we deduce that the optimal strategy is $u_n^{k,o}=\gamma_n^{k,o}(\xi_n^k, \delta_n^{(K)})$.  For any $ t\in  T_+^{n-1}$, by  (\ref{dp_1_na_m}) we deduce that the optimal strategy is $u_t^{k,o}=\gamma_n^{k,o}(\xi_t^k, \delta_t^{(K)}, \lambda_t^k )$.  Hence,  $\gamma^{k,o}(\cdot) \in  {\cal U}_{1,n}^{k,s-sep}$. (2) This  is  shown  in Theorem~\ref{thm:vp}. 
\end{proof}

\ \

\begin{theorem}(Decentralized DP Equations with Private Information State-Verification Theorem)\\
\label{thm:vf-g}
Suppose the value functions $V_t^{\gamma^{-k,o}}(\cdot), \forall t \in T_+^n, \forall k \in {\mathbb Z}_+^K$ 
satisfy the DP  recursions of Theorem~\ref{thm:vp-mn}.

(1) The inequalities hold a.s.  (almost surely), 
\begin{align}
&{V}_n^{\gamma^{-k,o}}(\Xi_n^k,\Delta_n^{(K)}, \Lambda_n^k ) \label{in_1-a} \\ 
&\leq J_{n,n}^{\gamma^{k}, \gamma^{-k,o}}(I_n^k) , \;   \forall \gamma^k \in {\cal U}_{1,n}^k,  \forall k \in {\mathbb Z}_+^K, \nonumber  \\
&{ V}_t^{\gamma^{-k,o}}(\Xi_t^k,\Delta_t^{(K)},\Lambda_t^k )  \label{in_1}\\
& \leq J_{t,n}^{\gamma^{k}, \gamma^{-k,o}}(I_t^k) , \; \forall t \in T_+^{n-1}, \; \forall \gamma^k \in {\cal U}_{1,n}^k, \; \forall k \in {\mathbb Z}_+^K. \nonumber 
\end{align}
where ${J}_{t,n}^{\gamma^{k},\gamma^{-k,o}}(I_t^k)$ is given by  (\ref{opt-ctg-pbp-tc}).

(2)  Given the optimal semi-separated strategies  $\gamma_{1,n}^{-k}=\gamma^{-k,o}\in {\cal U}_{1,n}^{-k, s-sep}$,   let $\gamma^{k,o} \in {\cal U}_{1,n}^{k,s-sep}$ be a semi-separated strategy such that for all $\{\xi_t^k,\delta_t^{(K)},\lambda_{t}^k\}$, strategy $\gamma_t^{k,o}(\xi_t^k,\delta_t^{(K)},\lambda_{t}^k)$ achieves the infimum in the DP equations  (\ref{dp_1_nnn-mn})-(\ref{dp_1_na_m}).   Then $\gamma^{k,o}(\xi_t^k,\delta_t^{(K)},\lambda_t^k )$ is optimal and a.s.,
\begin{align}
&{ V}_n^{\gamma^{-k,o}}(\Xi_n^k,\Delta_t^{(K)}, \Lambda_t^k)  = J_{n,n}^{\gamma^{k,o}, \gamma^{-k,o}}(I_n^k),\; \forall k, \label{in_2-aa-1}\\
&{V}_t^{\gamma^{-k,o}}(\Xi_t^k,\Delta_t^{(K)},\Lambda_{t}^k)  = J_{t,n}^{\gamma^{k,o}, \gamma^{-k,o}}(I_t^k), \;  \forall t \in T_+^{n-1}, \forall k. \label{in_2-aa}
\end{align}
where ${J}_{t,n}^{\gamma^{k,o},\gamma^{-k,o}}(I_t^k)=\mbox{(\ref{opt-ctg-pbp-tc})}$
 with $\gamma^k=
\gamma^{k,o}\in {\cal U}_{1,n}^{k}$. 
\end{theorem}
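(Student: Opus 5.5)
The verification will be proved by backward induction on $t$, for fixed $k$, with $\gamma^{-k}=\gamma^{-k,o}$ held fixed throughout. It follows the classical verification argument for centralized POMDPs. Two earlier results are used. First, by Lemma~\ref{lem:ind} the private posterior $\Xi_t^k[I_t^k]$ does not depend on the strategy $\gamma^k$ being tested, so the same function $V_t^{\gamma^{-k,o}}$ is evaluated along every trajectory. Second, by Lemma~\ref{lem:markov} and Theorem~\ref{thm:is-pbp}, for any $\gamma^k\in{\cal U}_{1,n}^k$,
\begin{align*}
&{\mathbb E}^{\gamma^k,\gamma^{-k,o}}\big\{V_{t+1}^{\gamma^{-k,o}}(\Xi_{t+1}^k,\Delta_{t+1}^{(K)},\Lambda_{t+1}^k)\big|I_t^k\big\}
\end{align*}
equals the second integral in (\ref{dp_1_na_m}), evaluated at $u_t^k=\gamma_t^k(I_t^k)$. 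This holds because $\Xi_{t+1}^k={\bf T}_{t+1}^k[\cdot]$, and by Lemma~\ref{lemma:nested} the pair $(\Delta_{t+1}^{(K)},\Lambda_{t+1}^k)$ is determined by $(\delta_t^{(K)},\lambda_t^k,y_{t+1}^k,u_t^k,y_{t-T+1}^{-k},u_{t-T+1}^{-k})$. The conditional law of these new variables given $I_t^k$ and $U_t^k$ is (\ref{Mar_REC_1-nn-11})--(\ref{dp_1_na-str1}). Similarly, by Lemma~\ref{lemma:payoff-pbp} the conditional running cost given $I_t^k$ is the first integral in (\ref{dp_1_na_m}), evaluated at $u_t^k=\gamma_t^k(I_t^k)$.

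For part (1), the base case $t=n$ is immediate. Since $J_{n,n}^{\gamma^k,\gamma^{-k,o}}(I_n^k)=\int\ell(n,x_n,\gamma_n^k(I_n^k),\gamma_n^{-k,o}(\Delta_n^{(K)},\lambda_n^{-k}))\,\Xi_n^k(dx_n,d\lambda_n^{-k})$, this quantity is at least the infimum over $u_n^k$, which is $V_n^{\gamma^{-k,o}}$ by (\ref{dp_1_nnn_1-mn}). This gives (\ref{in_1-a}). For the inductive step, assume (\ref{in_1}) holds at $t+1$. By the tower property and $I_t^k\subseteq I_{t+1}^k$,
\begin{align*}
J_{t,n}^{\gamma^k,\gamma^{-k,o}}(I_t^k)&={\mathbb E}^{\gamma^k,\gamma^{-k,o}}\big\{\ell(t,X_t,\gamma_t^k,\gamma_t^{-k,o})+J_{t+1,n}^{\gamma^k,\gamma^{-k,o}}(I_{t+1}^k)\big|I_t^k\big\}\\
&\geq{\mathbb E}^{\gamma^k,\gamma^{-k,o}}\big\{\ell(t,X_t,\gamma_t^k,\gamma_t^{-k,o})+V_{t+1}^{\gamma^{-k,o}}(\Xi_{t+1}^k,\Delta_{t+1}^{(K)},\Lambda_{t+1}^k)\big|I_t^k\big\}.
\end{align*}
By the identities of the first paragraph, the right-hand side is the bracket in (\ref{dp_1_na_m}) evaluated at the particular action $u_t^k=\gamma_t^k(I_t^k)$. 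It is therefore at least its infimum over ${\mathbb U}_t^k$, which is $V_t^{\gamma^{-k,o}}(\Xi_t^k,\Delta_t^{(K)},\Lambda_t^k)$. Each step holds almost surely.

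For part (2), the same chain is repeated with $\gamma^k=\gamma^{k,o}$. Since $\gamma_t^{k,o}$ attains the infimum for every $(\xi_t^k,\delta_t^{(K)},\lambda_t^k)$, the final inequality becomes an equality. The inductive hypothesis is now the equality (\ref{in_2-aa}) at $t+1$, so the first inequality is also an equality. This yields (\ref{in_2-aa-1}) and (\ref{in_2-aa}). Taking expectations at $t=1$ gives $J_n(\gamma^{k,o},\gamma^{-k,o})={\mathbb E}\{V_1\}\leq J_n(\gamma^k,\gamma^{-k,o})$ for all $\gamma^k$, which is (\ref{pbp}) for each $k$. Hence the strategies form a decentralized sequential team equilibrium.

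The main obstacle is justifying the key conditional-expectation identity for an arbitrary, non-separated $\gamma^k$. One must check that conditioning on $I_t^k$ under $(\gamma^k,\gamma^{-k,o})$ yields exactly the kernel (\ref{Mar_REC_1-nn}), with $\gamma^k$ entering only through $u_t^k$. This in turn needs the delayed action $u_{t-T+1}^{-k}$ in $\Delta_{t+1}^{(K)}$ to be a measurable function of $\delta_t^{(K)}$ and $y_{t-T+1}^{-k}$ (Lemma~\ref{lemma:nested}(2)). I would handle this first, writing $\Lambda_t^{-k}$ as including $y_{t-T+1}^{-k}$ so that the integration over $\lambda_t^{-k}$ in (\ref{Mar_REC_1-nn}) covers it. Measurability of $V_t^{\gamma^{-k,o}}$ and existence of selectors are assumed, as in Remark~\ref{rem:papm-ex}.
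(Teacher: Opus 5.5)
Your proposal is correct and follows essentially the same route as the paper's proof. The paper also uses backward induction with the base case read off from (\ref{dp_1_nnn_1-mn}), re-conditioning on $I_{t+1}^k$, the induction hypothesis, and the recursion of Theorem~\ref{thm:is-pbp} with Lemma~\ref{lemma:nested} to express the inner conditional expectation through the kernel (\ref{Mar_REC_1-nn-11})--(\ref{dp_1_na-str1}) before bounding by the infimum. For part (2) it likewise turns every inequality into an equality and takes expectations at $t=1$. The only differences are cosmetic: you condition directly on $I_t^k$ rather than first on $(I_t^k,U_t^k)$ as in the paper's step (c), which is equivalent since $U_t^k=\gamma_t^k(I_t^k)$. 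Also, the final comparison of ${\mathbb E}\{V_1^{\gamma^{-k,o}}\}$ across strategies uses that the law of $I_1^k=\{Y_1^k\}$ does not depend on $\gamma^k$, a fact that both you and the paper leave implicit.
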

\begin{proof}  (1) We use induction. By (\ref{opt-ctg-pbp-tc})    for $t=n$, 
\begin{align}
&J_{n,n}^{\gamma^k, \gamma^{-k,o}}(I_n^k)={\mathbb E}^{\gamma^k, \gamma^{-k,o}}   \Big\{   \ell(n, X_n,\gamma_n^k,  \gamma_n^{-k,o}(\Delta_n^{(K)}, \Lambda_n^{-k}))   \Big| I_n^k \Big\} \nonumber \\
&
= \int_{{\mathbb X}_n \times {\mathbb L}_{n}^{-k}}    \ell(n, x_n,\gamma_n^k,  \gamma_n^{-k,o}(\Delta_n^{(K)}, \lambda_n^{-k}))  \Xi_n^k[I_n^k](dx_n, d\lambda_{n}^{-k}) \nonumber \\
&\geq { V}_{n}^{\gamma^{-k,o}}(\Xi_n^k, \Delta_n^{(K)},\Lambda_n^k  ) \label{dp_ft}
\end{align}
hence (\ref{in_1-a}) holds. Suppose  (\ref{in_1-a}) holds for  $t+1, t+2, \ldots, n$. Then
\begin{align}
&{J}_{t,n}^{\gamma^{k},\gamma^{-k,o}}(I_t^k) \sr{(a)}{=} {\mathbb E}^{^{\gamma^k, \gamma^{-k,o}}} \Big\{\ell(t, X_t,\gamma_t^k,  \gamma_t^{-k,o}(I_t^{-k}))  \nonumber\\
&   + {\mathbb E}^{^{\gamma^k, \gamma^{-k,o}}} \Big\{  \sum_{j=t+1}^{n} \ell(j, X_{j},\gamma_{j}^k,  \gamma_{j}^{-k,o}(I_j^{-k}))  \Big| I_{t+1}^k \Big\}    \Big| I_t^k \Big\}  \nonumber
   \\
&\sr{(b)}{\geq}  {\mathbb E}^{\gamma^{k}, \gamma^{-k,o}}  \Big\{  \ell(t, X_t,\gamma_t^k,  \gamma_t^{-k,o}(I_t^{-k})) \nonumber \\
&+{ V}_{t+1}^{\gamma^{-k,o}}( \Xi_{t+1}^k , \Delta_{t+1}^{(K)}, \Lambda_{t+!}^k    ) 
 \Big|I_t^k\Big\}   \label{pd_p1}     \\
  &\sr{(c)}{=}  {\mathbb E}^{\gamma^{k}, \gamma^{-k,o}}  \Big\{ {\mathbb E}^{\gamma^{k}, \gamma^{-k,o}}  \Big\{  \ell(t, X_t,U_t^k,  \gamma_t^{-k,o}(I_t^{-k}))\nonumber \\
  & +{ V}_{t+1}^{\gamma^{-k,o}}( \Xi_{t+1}^k, \Delta_{t+1}^{(K)},     \Lambda_{t+!}^k    )  \Big|I_t^k, U_t^k\Big\}
 \Big|I_t^k\Big\}
 \end{align}
 \begin{align}
 &\sr{(d)}{=} {\mathbb E}^{\gamma^{k}, \gamma^{-k,o}}  \Big\{ {\mathbb E}^{\gamma^{k}, \gamma^{-k,o}}  \Big\{  \ell(t, X_t,U_t^k,  \gamma_t^{-k,o}(I_t^{-k})) \\
  & +{ V}_{t+1}^{\gamma^{-k,o}}( {\bf T}_{t+1}^{k}[Y_{t+1}^{k}, U_{t}^k, \gamma_t^{-k,o}(\Delta_t^{(K)}, \cdot),    \Xi_t^k(\cdot)],   \Delta_t^{(K)}, \Lambda_t^k , \nonumber \\
  & Y_{t+1}^k, U_{t}^k,   Y_{t-T+1}^{-k}, U_{t-T+1}^{-k}  ) \Big|\Xi_t^k,  \Delta_t^{(K)},\Lambda_t^k , U_t^k\Big\} \Big|I_t^k\Big\}\nonumber \\
& \sr{(e)}{\geq}  {\mathbb E}^{\gamma^{k}, \gamma^{-k,o}}  \Big\{{ V}_{t}^{\gamma^{-k,o}}( \Xi_{t}^k,  \Delta_t^{(K)},\Lambda_t^k )  \Big|I_t^k\Big\} \nonumber \\
&   \sr{(f)}{=}{V}_{t}^{\gamma^{-k,o}}( \Xi_{t}^k, \Delta_t^{(K)},\Lambda_t^k  ) \label{pd_p2}
\end{align}
where $(a)$ is by  re-conditioning, $(b)$ is by the hypothesis of induction,  $(c)$ is by  re-conditioning, $(d)$  is by  recursion of Theorem~\ref{thm:is-pbp} and Lemma~\ref{lemma:nested}, i.e.,  $\big\{\Delta_{t+1}^{(K)}, \Lambda_{t+1}^k\big\}  =\big\{    \Delta_{t}^{(K)}, \Lambda_t^k ,Y_{t+1}^k, U_t^k,  Y_{t-T+1}^{-k}, U_{t-T+1}^{-k}\big\}$,  
 and the inner conditional expectation can be expressed  w.r.t. ${\bf P}_{t+1}^{ \gamma^{-k,o}}(dy_{t+1}^k, d\lambda_{t}^{-k}, dx_{t}, dx_{t+1}\big|   \Xi_t^k, \Delta_t^{(K)},\Lambda_t^{k},U_t^k)$   given by (\ref{Mar_REC_1-nn-11})-(\ref{dp_1_na-str1})
independently of  $\Lambda_t^{k}$,  $(e)$ is due to   
  (\ref{dp_1_na_m}), and $(f)$ is because ${ V}_{t}^{\gamma^{-k,o}}( \cdot ) $ is a measurable function of $I_t^k$.  Hence, (\ref{in_1}) holds. (2) We prove (\ref{in_2-aa}) by induction. Suppose a semi-separated strategy $\gamma^{k,o} \in {\cal U}_{1,n}^{k,s-sep}$ achieves the infimum. By (\ref{dp_ft}) we know that  (\ref{in_2-aa-1}) holds for $t=n$. Suppose (\ref{in_2-aa-1})  holds up to  $t+1$. Then with $\gamma^k=\gamma^{k,o} \in {\cal U}_{1,n}^{k,s-sep}$   all inequalities in  (\ref{pd_p1})-(\ref{pd_p2})  become equalities (by the induction hypothesis and the fact that $u_t^{k,o}=\gamma_t^{k,o}(\xi_t^k, \delta_t^{(K)},\lambda_t^k )$ achieves the infimum). Hence,  (\ref{in_2-aa}) holds for  all $t \in T_+^{n-1}$. For $t=1$ we have ${ V}_{1}^{\gamma^{-k,o}}( \Xi_{1}^k, \Delta_1^{(K)},\Lambda_1^k   ) ={J}_{1,n}^{\gamma^{k,o},\gamma^{-k,o}}(I_1^k)$, and by  taking expectation on both sides,  
\begin{align*}
&{\mathbb E}^{\gamma^{k,o},\gamma^{-k,o}}\big\{V_{1}^{\gamma^{-k,o}}( \Xi_{1}^k, \Delta_1^{(K)},\Lambda_1^k     )\big\}\\
&={\mathbb E}^{\gamma^{k,o}, \gamma^{-k,o}}\big\{V_{1}^{\gamma^{-k,o}}( \Xi_{1}^k, \Delta_1^{(K)},\Lambda_1^k  )\big\} \nonumber \\
&={\mathbb E}^{\gamma^{k,o},\gamma^{-k,o}}\big\{{J}_{1,n}^{\gamma^{k,o},\gamma^{-k,o}}(I_1^k)\big\}
={J}_{n}(\gamma^{k,o},\gamma^{-k,o}).
\end{align*}
 Finally, for any other $\gamma^{k} \in {\cal U}_{1,n}^{k}$, which does not achieve the infimum of the DP equations, evaluating (\ref{in_1}) at  $t=1$ and taking expectations using the fact that the value function does not depend on $\gamma^k$,  we have $\forall \gamma^k \in {\cal U}_{1,n}^{k}$, that ${\mathbb E}^{\gamma^{k}, \gamma^{-k,o}}\big\{V_{1}^{\gamma^{-k,o}}( \Xi_{1}^k, \Delta_1^{(K)},\Lambda_1^k      )\big\} \leq  {J}_{n}(\gamma^{k},\gamma^{-k,o})$.
 This completes the proof.
\end{proof}

\ \

\begin{remark} (On Theorem~\ref{thm:vp-mn} and Theorem~\ref{thm:vf-g})\\
\indent  In Theorems~\ref{thm:vp-mn}.(2), \ref{thm:vf-g}.(2),  ${\cal G}_t^k\tri \{\Xi_t^k,\Delta_t^{(K)},\Lambda_t^k \}$ is a sufficient statistic for
$\gamma^{k,o}\in {\cal U}_{1,n}^{k,s-sep}$,  but  expansive w.r.t.  $\Delta_t^{(K)},\forall  t$. We address this issue in Section~\ref{sect:DP-IS-SEP}. 

\end{remark}

\subsection{DP Equations, Private and Centralized  Information States, Separated  and Information state  Strategies}
\label{sect:DP-IS-SEP}
The subsequent Theorems~\ref{thm:vp-mn-sep},  \ref{thm:vp-mn-is},   settle the long standing open problem, first discussed by Witsenhausen in \cite[Assertion 8, pp.1562]{witsenhausen1971}. We make use of the  centralized \^a posteriori PMs, Theorem~\ref{thm:is-cen},  to derive    2   additional simplified  DP equations,  using separable strategies    ${\cal U}_{1,n}^{k,sep}$ and information state strategies ${\cal U}_{1,n}^{k,is}$,  which are not expansive w.r.t.  $\Delta_t^{(K)},\forall  t$. 

This  part of  the paper  is a generalization  of the classical centralized DP approach  of POMDP,  where a  centralized \^a posteriori PM of nonlinear filtering  is used  \cite{striebel1965,bertsekas-shreve1978,kumar-varayia:B1986,paris-borkar-emmanuel-ghoshi-marcus93,vanschuppen2021}
 and \cite[Theorem~7.1, pp.85-86]{kumar-varayia:B1986}.

 \ \
 
 \begin{definition}(Decentralized PbP Value Functions-Private  and Centralize  Information  States)\\
\label{def:sep}
Consider  separated strategies  $\gamma^k \in {\cal U}_{1,n}^{k,sep}\subseteq {\cal U}_{1,n}^{k,s-sep}\subseteq{\cal U}_{1,n}^{k}$,  $\gamma_t^k=\gamma_t^{k}(\Xi_t^k[I_t^k], \Theta_t[\Delta_{t}^{(K)}], \Lambda_t^k),I_t^k=\{\Theta_t, \Lambda_t^k\},  \forall t, k$. 

For each  $k\in {\mathbb Z}_+^K$, define the value function,    $V_t^{\gamma^{k,o}, \gamma^{-k,o},sep}(\cdot) :  {\cal M}({\mathbb X}_t \times {\mathbb L}_{t}^{-k}) \times{\cal M}({\mathbb X}_{t-T}) \times {\mathbb L}_t^k    \rar (-\infty,\infty)$
of  strategy $\gamma^k \in {\cal U}_{1,n}^{k,sep}$,
 conditional  on 
 $  \xi_t^k[i_t^k]\tri {\bf P}_{t}^{{ \gamma^{-k,o}}} (dx_{t}, d\lambda_{t}^{-k}\big|I_{t}^k=i_t^k) $,   $ \theta_{t}[\delta_t^{(K)}] \tri {\bf P}_t^{\gamma^{(K)}} (dx_{t-T}\big|\Delta_t^{(K)}=\delta_t^{(K)})$, and $\Lambda_t^k=\lambda_t^k$,    by
\begin{align}
&V_t^{\gamma^o, \gamma^{-k,o},sep}(\xi_t^k,\theta_t,\lambda_t^k )   \hst \hst   \forall t \in T_+^n, \forall k \in {\mathbb Z}_+^K       \nonumber   \\
&  \tri  \inf_{  \gamma^{k}\in {\cal U}_{t,n}^{k,sep}}   {\mathbb E}^{^{\gamma^k, \gamma^{-k,o}}} \Big\{
 \sum_{j=t}^{n} \ell(j, X_j,\gamma_j^k,  \gamma_j^{-k,o}) \nonumber \\
 &  \Big| \Xi_t^k=\xi_t^k ,\Theta_t=\theta_t,\Lambda_t^k =\lambda_t^k ,  \gamma_t^k  \Big\}\label{opt-ctg}  \\
&= \inf_{  \gamma^{k}\in {\cal U}_{t,n}^k}   {\mathbb E}^{^{\gamma^k, \gamma^{-k,o}}} \Big\{
 \sum_{j=t}^{n}  \int_{{\mathbb X}_j \times {\mathbb L}_{j}^{-k}}   \ell(j, X_j,\gamma_j^k,  \gamma_j^{-k,o})\nonumber \\
&. \Xi_j^k[I_j^k](dx_j, d\lambda_{j}^{-k})\Big| \Xi_t^k=\xi_t^k,  \Theta_t= \theta_t, \Lambda_t^k=\lambda_t^k , \gamma_t^k\Big\}. \label{opt-ctg-n} 
\end{align}
\end{definition}


\ \

\begin{theorem}(DP Equations with Private and Centralized  Information States-Necessary Conditions and Verification)\\
\label{thm:vp-mn-sep}
Suppose   Assumptions~\ref{ass-1} hold.  
 Consider the value function $V_t^{\gamma^{k,o}, \gamma^{-k,o},sep}(\xi_t^k,\theta_t,\lambda_t^k ), \forall t \in T_{+}^n$ of Definition~ \ref{def:sep} 

 (1) Suppose a decentralized team equilibrium  $\gamma^{k,o} \in {\cal U}_{t,n}^{k,sep}, \forall k$ exists. Then  $ V_t^{\gamma^{k,o}, \gamma^{-k,o},sep}(\cdot )=V_t^{ \gamma^{-k,o},sep}(\cdot), \forall \gamma^k \in {\cal U}_{1,n}^{k,sep}, \forall t \in T_{+}^n$ and 
   satisfies    the following  DP   equations.  
\begin{align}
&{ V}_n^{\gamma^{-k,o},sep}(\xi_n^k,\theta_n, \lambda_n^k  )=\inf_{u_n^k \in {\mathbb U}_n^k}{\mathbb E}^{ \gamma^{-k,o}}   \Big\{ \ell(n, X_n,u_n^k,   \nonumber  \\
&  \gamma_n^{-k,o}(\Xi_n^{-k}, \theta_n, \Lambda_n^{-k}))   \Big| \xi_n^k,\theta_n, \lambda_n^k  , u_n^k \Big\}, \forall k \label{dp_1_nnn-mn-is-s} \\
&= \inf_{u_n^k \in {\mathbb U}_n^k}\int_{{\mathbb X}_n  \times {\mathbb L}_{n}^{-k} }     \ell(n, x_n,u_n^k,  \gamma_n^{-k,o}(\xi_n^{-k}, \theta_n, \lambda_n^{-k}))\nonumber \\
&\hst . \xi_{n}^{k}[i_t^k] (dx_{n}, d\lambda_n^{-k}),        \label{dp_1_nnn_1-mn-is-s} \\
&{ V}_t^{\gamma^{-k,o},sep}(\xi_t^k,\theta_t,\lambda_t^k  )   \hst \hso     \forall t \in T_+^{n-1}, \; \forall k \nonumber
   \\
& = \inf_{u_t^k \in {\mathbb U}_t^k}  {\mathbb E}^{ \gamma^{-k,o}}   \Big\{ \ell(t, X_t,u_t^k,  \gamma_t^{-k,o}(\Xi_t^{-k}, \theta_t, \Lambda_t^{-k}))  \nonumber \\
&+{ V}_{t+1}^{\gamma^{-k,o}, sep}(\Xi_{t+1}^k, \Theta_{t+1}, \Lambda_{t+1}^k )\Big| \xi_t^k,\theta_t, \lambda_t^k , u_t^k, u_{t-T}^{(K)} \Big\} \label{dp_2_mn-is-s}  \\
 &=  \inf_{u_t^k \in {\mathbb U}_t^k}  \Big\{  \int_{{\mathbb X}_t \times {\mathbb L}_{t}^{-k}}    \ell(t, x_t,u_t^k,  \gamma_t^{-k,o}(\xi_t^{-k}, \theta_t, \lambda_t^{-k}))  \nonumber \\
 &.  \xi_{t}^k[i_t^k] (dx_{t}, d\lambda_{t}^{-k})  +  \int_{ {\mathbb Y}_{t+1}^k \times {\mathbb X}_{t, t+1}   \times {\mathbb L}_{t}^{-k}    }   {V}_{t+1}^{\gamma^{-k,o},sep}({\bf T}_{t+1}^{k}[\nonumber  \\
 &y_{t+1}^{k},  u_t^k, \gamma_t^{-k,o}(\xi_t^{-k}, \theta_t, \cdot),    \xi_t^k(\cdot)], {\bf T}_{t+1}^{(K)}[y_{t-T+1}^{k},    y_{t-T+1}^{-k}, \nonumber  \\
 &   u_{t-T}^{(K)}, \theta_{t} (\cdot)],     y_{t+1}^k, u_t^k,  y_{t-T+2,t}^{k},  u_{t-T+2,t-1}^{k})  \label{dp_1_na_m-is-s} \\
&. {\bf P}_{t+1}^{ \gamma^{-k,o}}(dy_{t+1}^{k}, d\lambda_{t}^{-k}, dx_{t}, dx_{t+1}     \Big|   \xi_t^k, \theta_t, \lambda_t^k , u_t^k, u_{t-T}^{(K)}) \Big\} \nonumber 
\end{align}
where  
$\lambda_t^{j}=\{y_{t-T+1,t}^{j},u_{t-T+1,t-1}^{j}\},\forall  j$,  and 
 the   conditional  PM of the last right hand side term is
\begin{align}
&{\bf P}_{t+1}^{ \gamma^{-k,o}}( dy_{t+1}^{k}, d\lambda_t^{-k},dx_{t}, dx_{t+1} \big|\xi_t^k, \theta_t, \lambda_t^k,u_t^k, u_{t-T}^{(K)})   \nonumber 
\\
&= Q_{t+1}^{k}(dy_{t+1}^{k}\big|    x_{t+1},  u_t^k, \gamma_t^{-k,o}(\xi_t^{-k}, \theta_t, \lambda_t^{-k}))\label{Mar_REC_1-mn-is-s}  \\
&.{ S}_{t+1}(dx_{t+1}\big|   x_t, u_t^k,\gamma_t^{-k,o}(\xi_t^{-k},\theta_t, \lambda_t^{-k}))\xi_{t}^k[i_t^k](dx_{t}, d\lambda_{t}^{-k})
\nonumber \\
&\equiv  {\bf P}_{t+1}^{{ \gamma^{-k,o}}}(dy_{t+1}^{k}, d\lambda_{t}^{-k},dx_t, dx_{t+1} \big|  \xi_t^k, \theta_t, u_t^k) . \label{Mar_REC_1-mn-is-ss} 
\end{align}

(2) For each $ k \in {\mathbb Z}_+^K$ suppose   the DP equations (\ref{dp_1_nnn-mn-is-s})-(\ref{dp_1_na_m-is-s}) hold. Then 
the following inequalities hold, 
\begin{align}
&{V}_n^{\gamma^{-k,o},sep}(\Xi_n^k,\Theta_n, \Lambda_n^k ) \label{in_1-a-is-s} \\ 
&\leq J_{n,n}^{\gamma^{k}, \gamma^{-k,o}}(I_n^k) , \;   \forall \gamma^k \in {\cal U}_{1,n}^{k,sep},  \forall k \in {\mathbb Z}_+^K, \nonumber  \\
&{ V}_t^{\gamma^{-k,o},sep}(\Xi_t^k,\Theta_t,\Lambda_t^k )  \label{in_1-is-s}\\
& \leq J_{t,n}^{\gamma^{k}, \gamma^{-k,o}}(I_t^k) , \; \forall t \in T_+^{n-1}, \; \forall \gamma^k \in {\cal U}_{1,n}^{k,sep}, \; \forall k \in {\mathbb Z}_+^K. \nonumber 
\end{align}
where ${J}_{t,n}^{\gamma^{k},\gamma^{-k,o}}(I_t^k)$ is defined  by  (\ref{opt-ctg-pbp-tc})  w.r.t. $\{\gamma^{k},\gamma^{-k,o}\} \in {\cal U}_{1,n}^{k,sep}\times {\cal U}_{1,n}^{-k,sep}$.

(3)  Given the optimal separated strategies  $\gamma_{1,n}^{-k}=\gamma^{-k,o}\in {\cal U}_{1,n}^{-k, sep}$,   let $\gamma^{k,o} \in {\cal U}_{1,n}^{k,sep}$ be a separated strategy such that for all $\{\xi_t^k,\theta_t,\lambda_{t}^k\}$, strategy $\gamma_t^{k,o}(\xi_t^k,\theta_t^{(K)},\lambda_{t}^k)$ achieves the infimum in DP eqns   (\ref{dp_1_nnn-mn-is-s})-(\ref{dp_1_na_m-is-s}) .   Then $\gamma^{k,o}(\xi_t^k,\theta_t,\lambda_t^k )$ is optimal and $\forall k \in {\mathbb Z}_+^K$,
\begin{align}
&{ V}_n^{\gamma^{-k,o}, sep}(\Xi_n^k,\Theta_t, \Lambda_t^k)  = J_{n,n}^{\gamma^{k,o}, \gamma^{-k,o}}(I_n^k)-a.s., \label{in_2-a}\\
&{V}_t^{\gamma^{-k,o},sep}(\Xi_t^k,\Theta_t,\Lambda_{t}^k) \nonumber \\
 & = J_{t,n}^{\gamma^{k,o}, \gamma^{-k,o}}(I_t^k)-a.s., \forall t \in T_+^{n-1}. \label{in_2}
\end{align}
where ${J}_{t,n}^{\gamma^{k,o},\gamma^{-k,o}}(I_t^k)$ is defined  as in (2). 

(4) If $\big\{\Delta_t^{(K)}| t \in T_+^{n-1}\big\}$ and $\big\{\Theta_t[\Delta_t^{(K)}]| t \in T_+^{n-1}\big\}$ generate the same information,  then  ${\cal U}_{1,n}^{-k, s-sep}={\cal U}_{1,n}^{-k, sep}$   and     ${ V}_t^{\gamma^{-k,o}}(\Xi_t^k,\Delta_t^{(K)},\Lambda_t^k )={ V}_t^{\gamma^{-k,o},sep}(\Xi_t^k,\Theta_t,\Lambda_t^k )-a.s., \forall t$.
\end{theorem}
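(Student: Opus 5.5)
The plan is to mirror the proofs of Theorem~\ref{thm:vp-mn} and Theorem~\ref{thm:vf-g}. The only new work is to replace the expanding shared component $\Delta_t^{(K)}$ by the centralized posterior $\Theta_t[\Delta_t^{(K)}]$ of Theorem~\ref{thm:is-cen}.(1). Two properties of $\Theta_t$ make this possible. First, by Theorem~\ref{thm:is-cen}.(1), $\Theta_t$ does not depend on the strategies. Second, it satisfies the action-driven Markov recursion
\[
\Theta_{t+1}={\bf T}_{t+1}^{(K)}\big[Y_{t-T+1}^{(K)},U_{t-T}^{(K)},\Theta_t\big].
\]
The inputs $Y_{t-T+1}^{(K)},U_{t-T}^{(K)}$ are exactly the data added to $\Delta_t^{(K)}$ when passing to $\Delta_{t+1}^{(K)}$ (Lemma~\ref{lemma:nested}.(1)). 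Consequently $\{\Xi_{t+1}^k,\Theta_{t+1},\Lambda_{t+1}^k\}$ is a measurable function of $\{\Xi_t^k,\Theta_t,\Lambda_t^k\}$ and the new data $\{Y_{t+1}^k,U_t^k,Y_{t-T+1}^{-k},U_{t-T+1}^{-k}\}$, together with $U_{t-T}^{(K)}$ and $Y_{t-T+1}^k$. The latter two are already contained in $\Lambda_t^k$ for the own component; the other agents' components enter as conditioning data.

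For part (1), I would use backward induction on $t$, in the same way as Theorem~\ref{thm:vp-mn}.(1). At $t=n$, restrict Lemma~\ref{lemma:eqv} to separated strategies of the other agents, $\gamma_n^{-k,o}(\Xi_n^{-k},\Theta_n,\Lambda_n^{-k})$. The cost is the integral of $\ell$ against $\xi_n^k$, which depends on $i_n^k$ only through $(\xi_n^k,\theta_n,\lambda_n^k)$. Lemma~\ref{lem:ind} makes this independent of $\gamma^{k}$, which gives the independence claim and (\ref{dp_1_nnn-mn-is-s})--(\ref{dp_1_nnn_1-mn-is-s}).

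For the inductive step, assume the representation at $t+1$. Then compute ${\mathbb E}\{V_{t+1}^{sep}(\Xi_{t+1}^k,\Theta_{t+1},\Lambda_{t+1}^k)\,|\,\xi_t^k,\theta_t,\lambda_t^k,u_t^k,u_{t-T}^{(K)}\}$ by substituting the recursion of Theorem~\ref{thm:is-pbp} for $\Xi_{t+1}^k$ and that of Theorem~\ref{thm:is-cen}.(1) for $\Theta_{t+1}$. The resulting conditional measure is given by Lemma~\ref{lem:markov}.(2), i.e. (\ref{Mar_REC_1-mn-is-s}). Its key feature is that it depends on $\lambda_t^k$ only through $\xi_t^k$, since the other agents' actions are $\gamma_t^{-k,o}(\xi_t^{-k},\theta_t,\lambda_t^{-k})$ and are integrated out against $\xi_t^k$. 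Taking the infimum over $u_t^k$ then yields (\ref{dp_1_na_m-is-s}). The optimizer is a function of $(\xi_t^k,\theta_t,\lambda_t^k)$, which closes the induction within ${\cal U}^{k,sep}$.

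Parts (2) and (3) repeat the chain (a)--(f) of Theorem~\ref{thm:vf-g} verbatim. Replace $\Delta_t^{(K)}$ by $\Theta_t$ and use the two recursions at step (d). Where the infimum is attained, the inequalities become equalities; taking expectations at $t=1$ then gives optimality among separated strategies. For part (4), suppose $\sigma(\Delta_t^{(K)})=\sigma(\Theta_t)$. By Doob--Dynkin, every function of $\Delta_t^{(K)}$ is a function of $\Theta_t$ and conversely, so ${\cal U}^{s-sep}={\cal U}^{sep}$. The two DP recursions then coincide a.s., with the same terminal condition, so backward induction gives $V_t=V_t^{sep}$ a.s.

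The main obstacle I expect is in the inductive step. There I must justify that conditioning on $\theta_t$ instead of $\delta_t^{(K)}$ loses nothing, i.e. that the integrand and the kernel depend on $\delta_t^{(K)}$ only through $\theta_t$. This fails in general, because $\xi_t^k$ and the arguments $u_{t-T+1}^{-k}=\gamma_{t-T+1}^{-k,o}(\cdot)$ still involve $\delta$-data (Lemma~\ref{lemma:nested}.(2)). I would handle it by working throughout in the restricted class: the other agents use separated strategies, so their arguments are $(\xi^{j},\theta,\lambda^j)$, and the inequalities in part (2) are only claimed over ${\cal U}^{k,sep}$. Equivalence with the unrestricted problem is asserted only under the hypothesis of part (4).
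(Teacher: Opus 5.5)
Your route is the paper's own. Its proof says only that (1) is similar to Theorem~\ref{thm:vp-mn}.(1), that (2)--(3) are similar to Theorem~\ref{thm:vf-g}, and that (4) follows by replacing $\Delta_t^{(K)}$ with $\Theta_t[\Delta_t^{(K)}]$. Your Doob--Dynkin treatment of (4) is fine. But the obstacle you flag is real, and restricting the other agents to separated strategies does not remove it; it already breaks your base case. With $U_n^j=\gamma_n^{j,o}(\Xi_n^j,\Theta_n,\Lambda_n^j)$ and $\Xi_n^j=\xi_n^j[\Delta_n^{(K)},\Lambda_n^j]$, the terminal cost of agent $k$ is
\[
\int \ell\big(n,x_n,u_n^k,\{\gamma_n^{j,o}(\xi_n^j[\delta_n^{(K)},\lambda_n^j],\theta_n,\lambda_n^j)\}_{j\neq k}\big)\,\xi_n^k[i_n^k](dx_n,d\lambda_n^{-k}).
\]
This depends on $\delta_n^{(K)}$ through the whole map $\lambda^{-k}\mapsto\xi_n^{-k}[\delta_n^{(K)},\lambda^{-k}]$, not only through $(\xi_n^k,\theta_n,\lambda_n^k)$. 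These maps are the conditionals of $\pi_n[\delta_n^{(K)}]={\bf P}(dx_n,d\lambda_n^{(K)}|\delta_n^{(K)})$ given $\lambda_n^j$. By contrast, $\xi_n^k$ is only the slice of $\pi_n$ at the realized $\lambda_n^k$. And $\theta_n$ is a time-$(n-T)$ marginal that does not record how the actions in the window $n-T+1,\ldots,n-1$ depended on the shared data. The same dependence enters every inductive step, through the input $\gamma_t^{-k,o}(\delta_t^{(K)},\cdot)=\gamma_t^{-k,o}(\xi_t^{-k}[\delta_t^{(K)},\cdot],\theta_t,\cdot)$ of ${\bf T}_{t+1}^k$ and through the kernel (\ref{Mar_REC_1-mn-is-s}). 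So your claim that $\Xi_{t+1}^k$ is a measurable function of the triple and the new data fails. The identity (\ref{Mar_REC_1-mn-is-ss}) is exactly what remains unproved: the separated restriction only moves the $\delta$-dependence into $\Xi_t^{-k}$.

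There is a second error. $U_{t-T}^k\notin\Lambda_t^k=\{Y_{t-T+1,t}^k,U_{t-T+1,t-1}^k\}$, and $U_{t-T}^{(K)}$ is not among the data added to $\Delta_{t+1}^{(K)}$; those are $Y_{t-T+1}^{(K)},U_{t-T+1}^{(K)}$ by Lemma~\ref{lemma:nested}.(1). It is old shared data discarded by the compression, yet ${\bf T}_{t+1}^{(K)}$ needs it to produce $\Theta_{t+1}$. Conditioning on $u_{t-T}^{(K)}$ on the right of (\ref{dp_2_mn-is-s}) makes the two sides functions of different arguments. The minimizer then depends on $u_{t-T}^{(K)}$ as well, so the induction does not close in ${\cal U}_{1,n}^{k,sep}$.

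Hence $(\Xi_t^k,\Theta_t,\Lambda_t^k)$ is not a controlled Markov state for agent $k$. Lemma~\ref{lem:ind} gives $\gamma^k$-independence only when conditioning on the full $I_t^k$. Given the coarser triple, the conditional law of the discarded part of $\Delta_t^{(K)}$ (which contains $U_{1,t-T}^k$) depends on $\gamma_{1,t-1}^k$. So neither the $\gamma^k$-independence claimed in (1) nor step (d) of the chain in Theorem~\ref{thm:vf-g} follows, and therefore neither does (\ref{in_1-is-s}). The paper's one-line proof does not supply this step either.

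Your argument is complete only under the hypothesis of (4), where $\xi_t^{-k}[\delta_t^{(K)},\cdot]$ and $u_{t-T}^{(K)}$ become functions of $\theta_t$. Otherwise you need a common statistic from which both can be recovered, such as $\Pi_t^{\gamma^{(K)}}$ of Theorem~\ref{thm:is-cen}.(2), since each $\xi_t^j$ is the conditional of $\pi_t$ given $\lambda_t^j$.
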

\begin{proof} (1) The proof  is similar to  Theorem~\ref{thm:vp-mn}.(1).
%
%
%
(2), (3) The proof  is  similar to  Theorem~\ref{thm:vf-g}. 
  (4)  By the hypothesis, it follows  that   Theorem~\ref{thm:vp-mn},    Theorem~\ref{thm:vf-g}  hold, with    $\Delta_t^{(K)}$ replaced by   $\Theta_t[\Delta_t^{(K)}],\forall  t \in T_+^{n-1}$.
\end{proof} 

\ \

In Theorem~\ref{thm:vp-mn-is},  we  consider the centralized  \^a posteriori PM $ \big\{ \Pi_t^{\gamma^{(K)}}[\delta_{t}^{(K)}] \big| t \in T_+^n\big\}$  of  Theorem~\ref{thm:is-cen}.(2), and  we show
 $V_t^{\gamma^{-k,o},sep}(\xi_t^k,\pi_t^{\gamma^{(K)}},\lambda_t^k  ) =V_t^{\gamma^{-k,o},is}(\xi_t^k,\pi_t^{\gamma^{(K)}})
  $, i.e.,  the DP equations
  depend on $\lambda_t^k$ through $\xi_t^k,   \forall (\xi_t^k,\pi_t^{\gamma^{(K)}},\lambda_t^k),  \forall t$.

\ \

\begin{theorem}(DP Equations with Private and Centralized  Information State-Information State Depended Strategies)\\
\label{thm:vp-mn-is}
Suppose   Assumptions~\ref{ass-1} hold.  \\
 Consider the value function of Definition~ \ref{def:sep}, with strategies $ \gamma^k \in {\cal U}_{1,n}^{k,sep}$, corresponding to  $\gamma_t^k=\gamma_t^{k}(\Xi_t^k[I_t^k], \Pi_t^{\gamma^{(K)}}[\Delta_{t}^{(K)}], \Lambda_t^k),I_t^k=\{\Pi_t^{\gamma^{(K)}}, \Lambda_t^k\},  \forall (t, k)$ (i.e., 
  $ \big\{ \Theta_t[\delta_{t}^{(K)}] \big| t \in T_+^n\big\}$  is replaced by $ \big\{ \Pi_t^{\gamma^{(K)}}[\delta_{t}^{(K)}] \big| t \in T_+^n\big\}$), denoted again by  
  $V_t^{\gamma^{k,o},\gamma^{-k,o},sep}(\xi_t^k,\pi_t^{\gamma^{(K)}},\lambda_t^k  ), \forall t \in T_{+}^n$. The following hold.  
  
  (1)  Suppose a decentralized team equilibrium $\gamma^{k,o} \in {\cal U}_{1, n}^{k,sep}, \forall k$ exists. Then 
   the  identity holds, 
\begin{align}   
  &  V_t^{\gamma^{k,o},\gamma^{-k,o},sep}(\xi_t^k,\theta_t^{\gamma^{(K)}},\lambda_t^k  ) =V_t^{\gamma^{-k,o},is}(\xi_t^k,\pi_t^{\gamma^{(K)}}),\nonumber  \\
  & \forall (\xi_t^k,\pi_t^{\gamma^{(K)}},\lambda_t^k  ), \;  \forall t \in T_{+}^n.
\end{align}    
 Moreover,  
  $V_t^{\gamma^{-k,o},is}(\xi_t^k,\pi_t^{\gamma^{(K)}})$ 
 satisfies the DP equations, 
\begin{align}
&{ V}_n^{\gamma^{-k,o},is}(\xi_n^k,\pi_n^{\gamma^{(K)}})= \inf_{u_n^k \in {\mathbb U}_n^k}\int_{{\mathbb X}_n  \times {\mathbb L}_{n}^{-k} } \ell(n, x_n, \nonumber   \\
& u_n^k,   \gamma_n^{-k,o}(\xi_n^{-k},\pi_n^{\gamma^{(K)}}))  \xi_{n}^{k} [i_n^k](dx_{n}, d\lambda_n^{-k}),      \label{dp_1_nnn_1-mn-is} \\
&{ V}_t^{\gamma^{-k,o},is}(\xi_t^k,\pi_t^{\gamma^{(K)}} )  \label{dp_2_mn-is}   \\
 &=  \inf_{u_t^k \in {\mathbb U}_t^k}  \Big\{  \int_{{\mathbb X}_t \times {\mathbb L}_{t}^{-k}}    \ell(t, x_t,u_t^k,  \gamma_t^{-k,o}(\xi_t^{-k},\pi_t^{\gamma^{(K)}}))  \nonumber \\
 &.  \xi_{t}^k[i_t^k] (dx_{t}, d\lambda_{t}^{-k})  +  \int_{ {\mathbb Y}_{t+1}^k \times {\mathbb X}_{t, t+1}   \times {\mathbb L}_{t}^{-k}    } \Big[{V}_{t+1}^{\gamma^{-k,o},is}({\bf T}_{t+1}^{k}[   \nonumber  \\
 &y_{t+1}^{k}, u_t^k,\gamma_t^{-k,o}(\xi_t^{-k}, \pi_t^{\gamma^{(K)}}),    \xi_t^k(\cdot)], {\bf T}_{t+1}^{{(K)}}[y_{t+1}^{(K)}, u_{t}^{k}, \nonumber  \\ 
 & \gamma_{t}^{-k,o}(\xi_t^{-k},\pi_t^{\gamma^{(K)}}), \pi_{t}^{\gamma^{(K)}} (\cdot)]) \nonumber  \\
 &. {\bf P}_{t+1}^{ \gamma^{-k,o}}(dy_{t+1}^{(K)}, d\lambda_{t}^{-k}, dx_{t}, dx_{t+1}\big|   \xi_t^k, \pi_t^{\gamma^{(K)}}, u_t^k) \Big]  \Big\} \label{dp_1_na_m-is}
\end{align}
where  
 the   conditional  PM of the last RHS term is
\begin{align}
&{\bf P}_{t+1}^{ \gamma^{-k,o}}( dy_{t+1}^{(K)}, d\lambda_t^{-k},dx_{t}, dx_{t+1} \big|\xi_t^k, \pi_t^{\gamma^{(K)}},u_t^k)  \label{Mar_REC_1-mn-aa-is} \\
&= Q_{t+1}^{(K)}(dy_{t+1}^{(K)}\big|    x_{t+1},  u_t^k, \gamma_t^{-k,o}(\xi_t^{-k},\pi_t^{\gamma^{(K)}}))\label{Mar_REC_1-mn-is}  \\
&.{ S}_{t+1}(dx_{t+1}\big|   x_t, u_t^k,\gamma_t^{-k,o}(\xi_t^{-k},\pi_t^{\gamma^{(K)}}))\xi_{t}^k[i_t^k](dx_{t}, d\lambda_{t}^{-k}).
\nonumber 
\end{align}
Moreover,  if  the infimum in the DP equations exists, then the optimal strategy occurs in the set of  strategies, $\gamma^{k,o} \in  {\cal U}_{1,n}^{k, is}$, i..e., ${\cal G}_t^k\tri \{\Xi_t^k,\Pi_t^{\gamma^{(K)}}\}$ is a sufficient statistic for $U_t^k$, $\forall t$. In particular, 
$u_t^{k,o}=\gamma_t^{k,o}(\xi_t^k,\pi_t^{\gamma^{(K)}} )\in {\mathbb U}_t^k, \forall t \in T_+^{n}$.

(2)   The statements of Theorem~\ref{thm:vp-mn-sep}.(2), (3) w.r.t.    ${ V}_t^{\gamma^{-k,o},is}(\xi_t^k,\pi_t^{(K)} ), \forall (t, k)  $  hold. 
\end{theorem}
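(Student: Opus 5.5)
The plan is backward induction on $t=n,n-1,\ldots,1$, following the template of Theorem~\ref{thm:vp-mn}.(1). The inductive claim is that $V_t^{\gamma^{k,o},\gamma^{-k,o},sep}(\xi_t^k,\pi_t^{\gamma^{(K)}},\lambda_t^k)$ depends on $\lambda_t^k$ only through $\xi_t^k$. The structural fact that drives everything is the following. In Theorem~\ref{thm:is-cen}.(2) the centralized PM $\pi_t^{\gamma^{(K)}}$ is the conditional law of the \emph{full} pair $(X_t,\Lambda_t^{(K)})$ given $\Delta_t^{(K)}$. Hence, unlike $\theta_t$, it need not be supplemented by the raw private data $\lambda_t^k$ in order to be updated. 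By (\ref{apost_c-1})--(\ref{ope_is-1}), $\pi_{t+1}^{\gamma^{(K)}}$ is obtained from $\pi_t^{\gamma^{(K)}}$, the shared new data $y_{t+1}^{(K)}$ (more precisely, the components entering $\Delta_{t+1}^{(K)}$), and the strategies $\gamma_t^{(K)}$. Likewise, by Theorem~\ref{thm:is-pbp}, $\xi_{t+1}^k$ is obtained from $\xi_t^k$, $y_{t+1}^k$, $u_t^k$ and $\gamma_t^{-k}$. Neither recursion uses $\lambda_t^k$ separately.

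\textbf{Base case} $t=n$. Starting from (\ref{dp_1_nnn_1-mn-is-s}) with $\theta_n$ replaced by $\pi_n^{\gamma^{(K)}}$, the integrand $\ell(n,x_n,u_n^k,\gamma_n^{-k,o}(\cdot))$ is integrated against $\xi_n^k[i_n^k](dx_n,d\lambda_n^{-k})$. The other agents' separated strategies are evaluated at $(\xi_n^{-k},\pi_n^{\gamma^{(K)}},\lambda_n^{-k})$, and $\lambda_n^{-k}$ is the integration variable. So the objective depends on $\lambda_n^k$ only through $\xi_n^k$. This gives (\ref{dp_1_nnn_1-mn-is}) and a minimizer $u_n^{k,o}=\gamma_n^{k,o}(\xi_n^k,\pi_n^{\gamma^{(K)}})$, assuming measurable selectors as in Remark~\ref{rem:papm-ex}.

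\textbf{Inductive step.} Assume the claim holds at $t+1$, so $V_{t+1}=V_{t+1}^{\gamma^{-k,o},is}(\xi_{t+1}^k,\pi_{t+1}^{\gamma^{(K)}})$. Rewrite the conditional expectation in (\ref{dp_2_mn-is-s}) as follows:
\begin{enumerate}
\item Substitute $\xi_{t+1}^k={\bf T}_{t+1}^k[\cdot]$ from Theorem~\ref{thm:is-pbp}.
\item Substitute $\pi_{t+1}^{\gamma^{(K)}}={\bf T}_{t+1}^{(K)}[\cdot]$ from Theorem~\ref{thm:is-cen}.(2).
\item Integrate over $(y_{t+1}^{(K)},\lambda_t^{-k},x_t,x_{t+1})$.
\end{enumerate}
For step 3, we need the joint kernel (\ref{Mar_REC_1-mn-is}), which augments (\ref{Mar_REC_1-nn}) with the other agents' observation kernels $Q_{t+1}^{-k}$. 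This kernel follows from Lemma~\ref{lem:markov}.(1)--(2) and (\ref{i-d2-j-2}). It depends on $(\xi_t^k,\pi_t^{\gamma^{(K)}},u_t^k)$ only, because $\lambda_t^{-k}$ is integrated out against $\xi_t^k$ and the factors involving $\lambda_t^k$ are absorbed into $\xi_t^k$ (Lemma~\ref{lem:ind}). Therefore the whole bracket in (\ref{dp_1_na_m-is}) is a function of $(\xi_t^k,\pi_t^{\gamma^{(K)}},u_t^k)$. Taking the infimum over $u_t^k$ yields $V_t^{\gamma^{-k,o},is}(\xi_t^k,\pi_t^{\gamma^{(K)}})$ and a minimizer of the form $\gamma_t^{k,o}(\xi_t^k,\pi_t^{\gamma^{(K)}})$, i.e.\ $\gamma^{k,o}\in{\cal U}_{1,n}^{k,is}$. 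Independence from $\gamma^{k,o}$ comes, as in Theorem~\ref{thm:vp}.(1.1), from Lemma~\ref{lem:ind}.

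\textbf{Part (2).} The verification statements are proved by repeating the chain (a)--(f) in the proof of Theorem~\ref{thm:vf-g}, with $(\Xi_t^k,\Delta_t^{(K)},\Lambda_t^k)$ replaced by $(\Xi_t^k,\Pi_t^{\gamma^{(K)}})$. Step (d) now uses the two recursions above. Step (f) uses that $(\Xi_t^k,\Pi_t^{\gamma^{(K)}})$ is $I_t^k$-measurable.

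\textbf{Main obstacle.} The delicate step is justifying that $\pi_{t+1}^{\gamma^{(K)}}$ can be written as a function of $(\pi_t^{\gamma^{(K)}},y_{t+1}^{(K)})$ inside agent $k$'s expectation. In (\ref{ope_is-1}) the update uses the strategies $\gamma_t^{(K)}$, including agent $k$'s own strategy as a function, not merely the realized action $u_t^k$. Since $\gamma_t^k$ is itself of the separated form, I would first fix the team equilibrium profile in that update, as in the paper's statement where the operator is evaluated at $\gamma_t^{-k,o}(\xi_t^{-k},\pi_t^{\gamma^{(K)}})$. I would then check that the data entering $\Delta_{t+1}^{(K)}$ are exactly what the kernel (\ref{Mar_REC_1-mn-aa-is}) integrates against, so that no residual dependence on $\lambda_t^k$ remains. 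If this fails, the fallback is to keep $\lambda_t^k$ as in Theorem~\ref{thm:vp-mn-sep} and then show its marginal influence vanishes by conditioning $\pi$ on $\lambda_t^k$.
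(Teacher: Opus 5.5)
\textbf{There is a genuine gap in the inductive step.} Your outline follows the paper's own route: backward induction, a base case at $t=n$, substituting the recursions of Theorems~\ref{thm:is-pbp} and \ref{thm:is-cen}.(2) into the continuation term of (\ref{dp_2_mn-is-s}), reading off a minimizer $\gamma_t^{k,o}(\xi_t^k,\pi_t^{\gamma^{(K)}})$, and repeating Theorem~\ref{thm:vf-g} for part (2). The gap is in the one step that carries the content, namely that the bracket in (\ref{dp_1_na_m-is}) depends on $\lambda_t^k$ only through $\xi_t^k$.

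Your claim that ``neither recursion uses $\lambda_t^k$ separately'' is not true of the centralized state. By definition, $\pi_{t+1}^{\gamma^{(K)}}$ is the law of $(X_{t+1},\Lambda_{t+1}^{(K)})$ given $\Delta_{t+1}^{(K)}$. By (\ref{nested_1}), $\Delta_{t+1}^{(K)}=\{\Delta_t^{(K)},Y_{t-T+1}^{(K)},U_{t-T+1}^{(K)}\}$. So the update conditions on agent $k$'s own newly shared entries $Y_{t-T+1}^k$ (and $U_{t-T+1}^k$ if $T\geq 2$), which are coordinates of $\lambda_t^k$. Under agent $k$'s conditioning these are fixed values, not integration variables. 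The map $\lambda_t^k\mapsto\xi_t^k$ need not be injective, so you would need a separate argument that these entries affect the continuation only via $\xi_t^k$. They act through $\pi_{t+1}$, hence through the other agents' future actions, and the proposal gives no such argument.

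Conversely, $y_{t+1}^{(K)}$, which the kernel (\ref{Mar_REC_1-mn-is}) integrates, is not a conditioning datum of $\pi_{t+1}$ at all. It belongs to $\Lambda_{t+1}^{(K)}$, i.e., it is a coordinate of the measure $\pi_{t+1}$. Hence the check you plan in your last paragraph fails: the data entering $\Delta_{t+1}^{(K)}$ are not what the kernel integrates against. Your own parenthetical about ``the components entering $\Delta_{t+1}^{(K)}$'' already concedes a dependence on $\lambda_t^k$.

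\textbf{The second obstacle you name is also real, and your remedy does not prove the stated equation.} For $T\geq 2$, $U_t^k$ enters $\Lambda_{t+1}^k$ as a coordinate. So the common update (\ref{ope_is-1}) needs the whole map $\lambda_t^k\mapsto\gamma_t^k(\delta_t^{(K)},\lambda_t^k)$, not the realized $u_t^k$. Freezing $\gamma_t^{k,o}$ there makes $\pi_{t+1}$ independent of the control variable. You would then be deriving a DP different from (\ref{dp_1_na_m-is}), whose centralized operator is evaluated at $u_t^k$. The fallback (``show its marginal influence vanishes'') is not an argument.

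\textbf{A smaller omission.} In the base case and in the $\xi$-update you need $\gamma_t^{-k,o}(\xi_t^{-k},\pi_t^{\gamma^{(K)}},\cdot)$ to be a function of $(\pi_t^{\gamma^{(K)}},\lambda_t^{-k})$ alone. The reason is that $\xi_t^{-k}$ depends on $\delta_t^{(K)}$, which is not recoverable from $\pi_t^{\gamma^{(K)}}$. The claim does hold, because $\xi_t^j$ is $\pi_t^{\gamma^{(K)}}$ conditioned on $\Lambda_t^j=\lambda_t^j$ (using Lemma~\ref{lem:ind}), but it must be stated.

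\textbf{Relation to the paper's proof.} The paper's proof is the same induction. It settles the crux only by invoking Theorem~\ref{thm:is-cen}.(2) in the form displayed in (\ref{dp_1_na_m-is}), with $y_{t+1}^{(K)}$ and $u_t^k$ as arguments, so it does not supply the missing step either. Without that step, what the induction supports is the separated form in which $\lambda_t^k$ is kept as an argument. That is the analogue of Theorem~\ref{thm:vp-mn-sep} with $\pi$ in place of $\theta$.
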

\begin{proof} 
 (1)    We show the statement  using  induction. By Definition~ \ref{def:sep} with $ \theta_t[\delta_{t}^{(K)}]$  replaced by $  \pi_t^{\gamma^{(K)}}[\delta_{t}^{(K)}], \forall t $, we obtain that  the  DP equation of  $V_t^{\gamma^{k,o},\gamma^{-k,o},sep}(\xi_n^k,\pi_t^{\gamma^{(K)}},\lambda_t^k  ), \forall t$  satisfies (\ref{dp_1_nnn_1-mn-is-s}), (\ref{dp_2_mn-is-s}).   At  time $t=n$, from  $V_n^{\gamma^{k,o},\gamma^{-k,o},sep}(\xi_n^k,\pi_n^{\gamma^{(K)}},\lambda_n^k  )$ given by the analog of   (\ref{dp_1_nnn_1-mn-is-s}),  we obtain the  optimal strategy $u_n^{k,o}=\gamma_n^{k,o}(\xi_n^k, \pi_n^{\gamma^{(K)}})$, i.e., $\gamma_n^{k,o}(\cdot) \in  {\cal U}_{n}^{k, is}$, which then implies  $V_n^{\gamma^{k,o},\gamma^{-k,o},sep}(\xi_n^k,\pi_n^{\gamma^{(K)}},\lambda_n^k  )=V_n^{\gamma^{-k,o},is}(\xi_n^k,\pi_n^{\gamma^{(K)}}), \forall \lambda_n^k$. Suppose at times $j=n, n-1, \ldots, t+1$, ${ V}_{j}^{\gamma^{,o},\gamma^{-k,o},sep}(\xi_{j}^k, \pi_{j}^{\gamma^{(K)}}, \lambda_{j}^k )={ V}_{j}^{\gamma^{-k,o},is}(\xi_{j}^k, \pi_{j}^{\gamma^{(K)}}), \forall (\xi_j^k,\theta_j^{\gamma^{(K)}},\lambda_j^k )$ holds. 
    Evaluating  the second RHS term in 
(\ref{dp_2_mn-is-s}),     ${ V}_{t+1}^{\gamma^{-k,o}, is}(\Xi_{t+1}^k, \Pi_{t+1}^{\gamma^{(K)}})$, using $\Pi_{t+1}^{\gamma^{(K)}}$  of Theorem~\ref{thm:is-cen}.(2),   we deduce the optimal strategy is $\gamma_t^{k,o}(\xi_t^k,\pi_t^{\gamma^{(K)}} )\in {\mathbb U}_t^k$,  
$\gamma_t^{k,o}(\cdot)$ depends on $\lambda_t^k$ through  $\xi_t^k$, hence
 and $\gamma_t^{k,o}(\cdot) \in {\cal U}_{1,n}^{k,is}, \forall t \in T_+^{n}$. Since this holds $\forall k$ 
 we obtain the RHS of (\ref{dp_1_na_m-is}), and hence 
  the LHS of the DP equation satisfies $V_t^{\gamma^{k,o},\gamma^{-k,o},sep}(\xi_t^k,\pi_t^{\gamma^{(K)}},\lambda_t^k  ) =V_t^{\gamma^{-k,o},is}(\xi_t^k,\pi_t^{\gamma^{(K)}}),
 \forall (\xi_t^k,\pi_t^{\gamma^{(K)}},\lambda_t^k )$. 
 This shows (1). (2) This follows from the statements of Theorem~\ref{thm:vp-mn-sep}. 
\end{proof}

%

\ \

\begin{remark} (Comparing Theorems~\ref{thm:vp-mn-sep},\ref{thm:vp-mn-is})\\
\label{rem:sep-is-com} 
Theorem~\ref{thm:vp-mn-sep}  uses the  centralized  \^a posteriori PM,  
 $ \theta_{t}[\delta_t^{(K)}] \tri {\bf P}_t^{\gamma^{(K)}} (dx_{t-T}\big|\delta_t^{(K)})={\bf P}_t(dx_{t-T}\big|\delta_t^{(K)}),\forall  t, \forall \gamma^{(K)}$, i.e., which does not   depend on $\gamma^{(K)}$.
This  is  contrary to  Theorem~\ref{thm:vp-mn-is} that uses   $ \pi_{t}^{\gamma^{(K)}}[\delta_t^{(K)}] \tri {\bf P}_t^{\gamma^{(K)}} (dx_{t}, d\lambda^{(K)}\big|\delta_t^{(K)}),$
  $ \pi_{t+1}^{\gamma^{(K)}}={\bf T}_{t+1}^{(K)}[y_{t}^{(K)}, \gamma_t^{(K)}, 
  \pi_{t}^{\gamma^{(K)}} (\cdot)]$, 
  which depends    on $y_t^{(K)}$ and  the strategies   $\gamma^{(K)}$, by Theorem~\ref{thm:is-cen}.(2). 
\end{remark}

\ \

\begin{remark} (Without Sharing  with Perfect Recall)\\
\label{rem:1-ws} 
 Consider  
 strategies without sharing
   i.e., $I_t^k=  \sr{\circ}{I}_t^k \tri  \big\{Y_{1,t}^{k}, U_{1,t-1}^{k}\big\}\equiv \Lambda_t^k$ and  
  $\sr{\circ}{\gamma}^k \in \sr{\circ}{\cal  U}_{1,n}^k$,    $U_t^k=\sr{\circ}\gamma_t^k({\Lambda}_t^k),   \forall (t,k)$.
 Let
$\sr{\circ}{\Xi}_t^k[\Lambda_t^k]\tri  {\bf P}_t^{ \sr{\circ}{\gamma}^{-k,o}} (dx_t, d {\lambda}_{t}^{-k}\big|{\Lambda}_t^k),   \forall (t,k) $ satisfy the recursions of Theorem~\ref{thm:is-pbp}, with $\Delta_t^{(K)}=\{\emptyset\}, I_t^k=\sr{\circ}{I}_t^k=\Lambda_t^k$.  Then the equality holds, 
\begin{align}
V_t^{ \gamma^{-k,o}}(\Xi_t^k, \Delta_t^{(K)}, \Lambda_t^k) =\sr{\circ}{V}_t^{ \sr{\circ}{\gamma}^{-k,o}}(\sr{\circ}{\Xi}_t^k),   \hso \forall (t, k)
\end{align}
where $\sr{\circ}{V}_t^{ \sr{\circ}{\gamma}^{-k,o}}(\cdot)$ satisfies degenerate DP equations. In this special case, strategy $\sr{\circ}\gamma_t^k({\Lambda}_t^k)=\sr{\circ}\gamma_t^k({\Xi}_t^k)$, hence only  the private information state $\sr{\circ}{\Xi}_t^k$ is used. 
\end{remark}

Finally, we note that  the information states and DP equations in  
\cite[Theorem~5, Theorem~7]{malikopoulosIEEEAC2023} are incorrect  (see  \cite{charalambous-guvercin-djouadi:arxiv2026-05-25}).

\ \



\section{The Single Cost-to-Go   Based on Common Information}
\label{sec:lr}
Past   studies \cite{witsenhausen1971,kurtaran-sivan1973,sandell-athans1974,kurtaran1975,yoshikawa1975,varaiya-walrand1978,
bagchi-basar1980,hsu-marcus:1982,aicardi-davoli-minciardi1987,nayyar-mahajan-teneketzis2011,nayyar-mahajan-teneketzis2013}, 
considered  the single cost-to-go   conditioned on 
$\Delta_t^{(K)}=\delta_t^{(K)}$, 
\begin{align}
&{v}_{t}^{\gamma^{(K),o}}(\delta_t^{(K)}) \tri  \inf_{\gamma_{t, n}^{(K)}}  {\mathbb E}^{\gamma^{(K)}} \Big\{ \sum_{j=t}^{n} \ell(j, X_j,\gamma_j^1(\Delta_j^{(K)}, \Lambda_j^{1}),   \nonumber \\
&\ldots, \gamma_j^K(\Delta_j^{(K)}, \Lambda_j^{K})) \Big|\Delta_t^{(K)}= \delta_t^{(K)} \Big\},   \forall t   . \label{v-w-ctg}
\end{align}
Early  studies \cite{witsenhausen1971,kurtaran-sivan1973,sandell-athans1974,kurtaran1975,yoshikawa1975,varaiya-walrand1978,
bagchi-basar1980,hsu-marcus:1982,aicardi-davoli-minciardi1987}, made use of  \cite[Assertion~8, pp.1562]{witsenhausen1971}, i.e.,  {\it restricting  the strategies to}
%
 \begin{align}
&\gamma_t^k(\delta_t^{(K)}, \lambda_t^{k})=\phi_t^{k,W}(F_t[\delta_t^{(K)}]  , \lambda_t^{k}), \hso  \forall t,k,    
\label{wit-a1} 
 \\
& F_t[\delta_t^{(K)}]  \tri   {\bf P}_{t}^{\gamma^{(K)}}(dx_{t-T}\big| \delta_t^{(K)} ) \equiv  \theta_t[\delta_t^{(K)}].
  \label{wit-a2}
 \end{align}  
Since  Varaiya and Walrand \cite{varaiya-walrand1978} proved  strategies (\ref{wit-a1}) are not optimal, other strategies are proposed in the literature. \\
For model (\ref{NDM-3-w}), (\ref{NDM-4-w}), \cite{kurtaran1979} considered  strategies, 
\begin{align}
\gamma_t^k=\phi_t^k(y_{t-T+1, t}^k, {\bf P}_t^{\gamma^{(K)}}(dx_{t-T}, du_{t-T+1, t-1}^{(K)}\big| \delta_t^{(K)})) \label{k-1}
\end{align}
$ \forall t, k$, while \cite{nayyar-mahajan-teneketzis2011}  considered   various strategies including, 
\begin{align}
\gamma_t^k=\phi_t^k(\lambda_t^{k}, {\bf P}_t^{\gamma^{(K)}}(dx_{t-1}, d\lambda_t^{(K)}\big| \delta_t^{(K)})).\label{n-a-t-1}
\end{align}
 In  \cite{nayyar-mahajan-teneketzis2011} it is acknowledged that their results    cannot be derived from those  in \cite{kurtaran1979}, and vice-versa. Furthermore, \cite{nayyar-mahajan-teneketzis2011} also acknowledged  that their single DP approach based on   (\ref{v-w-ctg}), and strategies (\ref{n-a-t-1}),  do not satisfy fundamental Properties \#1,  \#2, i.e., the value function and conditional PM ${\bf P}_t^{\gamma^{(K)}}(dx_{t-1},d \lambda_t^{(K)}\big| \delta_t^{(K)})$ depend on the strategies and not the actions, 
    \cite[pp.1607]{nayyar-mahajan-teneketzis2011}, i.e., their information states  ``{\it do not have the separated
nature of centralized stochastic control}''.  \\
Moreover, \cite{nayyar-mahajan-teneketzis2011} also acknowledged    the   major disadvantage  of the single cost-to-go  (\ref{v-w-ctg}), re-confirming  Witsenhausen \cite{witsenhausen1971} (see paragraph below Assertion 8):  if there is no  sharing, i.e., 
\begin{align}
&\Delta_t^{(K)}=\{\emptyset\},  \;{I}_t^k=\sr{\circ}{I}_t^{k}\tri  \big\{Y_{1,t}^k, U_{1,t-1}^k\big\},   \forall t, \; \mbox{and }\\
&\mbox{(\ref{v-w-ctg})  reduces to} \:  {v}_{t}^{\gamma^{(K),o}}(\Delta_t^{(K)})={v}_{t}^{\gamma^{(K),o}}(\{\emptyset\}), \forall t
\end{align}
hence  the cost-to-go  is {\it vacuous}.

The reader may verify 
the limitations listed in Section~\ref{sect:lit}.1), concerning the single cost-to-go ${v}_{t}^{\gamma^{(K),o}}(\delta_t^{(K)})$.


On the other hand, 
 the   DP equations   of  Theorem~\ref{thm:vp} enjoy properties analogous to those of   classical  centralized DP equations of POMDPs, do not suffer from the limitations of  (\ref{v-w-ctg}), and  give  rise to additional DP equations.

Theorem~\ref{thm:vp-mn-sep}, \ref{thm:vp-mn-is} present   the  most efficient DP equations ${V}_t^{\gamma^{-k,o},sep}(\xi_t^k,\theta_t,\lambda_{t}^k)$ for strategies $\gamma^k\in {\cal U}_{1,n}^{k,sep}$, in the sense that  the data $I_t^k=i_t^k $ are  compressed into  private, centralized information states and limited  memory data $\{\xi_t^k, \theta_t, \lambda_{t}^k\}$,  while   these are sufficient statistics for    $\gamma_t^k, \forall (t, k)$.

Finally, we emphasize that the {\it semi-separated strategies},   {\it separated strategies}, and {\it information state strategies}, can also be used to derive DP equations for the single cost-to-go  (\ref{v-w-ctg}). In particular,  the analysis in  Section~\ref{discrete}  supports the claim that  Witsenhausen's \cite[Assertion~8, pp.1562]{witsenhausen1971} should be replaced by strategies  $\gamma^{k} \in {\cal U}_{1,n}^{k,sep}$, i.e., $\gamma_t^k=\gamma_t^k(\Xi_t^k, \Theta_t, \Lambda_t^{k})$, i.e., $\Theta_t=F_t[\Delta^{(K)}]$,   and  the private information state $\Xi_t^k$ is also needed $\forall (t,k)$. Similarly, we may use strategies   $\gamma_t^k=\gamma_t^k(\Xi_t^k, \Pi_t^{\gamma^{(K)}}, \Lambda_t^{k})$ or  $\gamma_t^k=\gamma_t^k(\Xi_t^k, \Pi_t^{\gamma^{(K)}}), \forall (t,k)$.

For  $\gamma_t^k=\gamma_t^k(\Xi_t^k, \Pi_t^{\gamma^{(K)}}, \Lambda_t^{k}), \forall (t,k)$, the single cost-to-go   (\ref{v-w-ctg}) satisfies,   $v_t^{\gamma^{(K),o}}(\delta_t^{(K)})={ v}_t^{\gamma^{(K),o}}(\pi_t)$, $\forall \delta_t^{(K)}, \forall \Pi_t^{\gamma^{(K),o}}=\pi_t$, and the following   DP equations: 
\begin{align}
&{ v}_n^{\gamma^{(K),o}}(\pi_n )
= \inf_{\gamma_n^{(K)}}\int_{{\mathbb X}_n  \times {\mathbb L}_{n}^{(K)} }     \ell(n, x_n, \gamma_n^{1}(\xi_n^{1}[\delta_n^{(K)}, \lambda_n^1], \pi_n, \lambda_n^{1}), \nonumber \\
&\ldots, \gamma_n^{K}(\xi_n^{K}[\delta_n^{(K)}, \lambda_n^K], \pi_n, \lambda_n^{K})) \pi_n (dx_{n}, d\lambda_n^{(K)}),      \label{com-dp-1} \\
&{ v}_t^{\gamma^{(K),o}}(\pi_t)   \hst \hso     \forall t \in T_+^{n-1} \nonumber
   \\
 &=  \inf_{\gamma_t^{(K)} }  \Big\{  \int_{{\mathbb X}_t \times {\mathbb L}_{t}^{(K)}}    \ell(t, x_t,  \gamma_t^{1}(\xi_t^{1}[\delta_t^{(K)}, \lambda_t^1], \pi_t, \lambda_t^{1}), \ldots  \nonumber \\
 &\gamma_t^{K}(\xi_t^{K}[\delta_t^{(K)},\lambda_t^K], \pi_t, \lambda_t^{K})).  \pi_{t}(dx_{t}, d\lambda_{t}^{(K)}) \nonumber \\
 & + {\mathbb E}^{\gamma^{(K)}}\Big\{ {v}_{t+1}^{\gamma^{(K),o}}(\Pi_{t+1}^{\gamma^{(K),o}})\Big|\Pi_{t}^{\gamma^{(K),o}}=\pi_{t} \Big\}  \label{com-dp-2}
\end{align}
where $ \Pi_{t}^{\gamma^{(K),o}}, \forall t$ satisfies the recursion of Theorem~\ref{thm:is-cen}.(2).\\
By  (\ref{com-dp-1}), (\ref{com-dp-2}), it is easy to verify that  the infimum occurs in the subset, $\gamma_t^{k,o}=\gamma_t^{k, is}(\xi_t^k, \pi_t), \forall (t,k)$.

\section{Example}
\label{sect:ex}

We illustrate the implementation of the  DP equations of  Theorem~\ref{thm:vp} to an  example, with  two agents  and two  observations $K=2$, spaces  ${\mathbb X}_t=\{s_1,s_2\}$, ${\mathbb Y}_t^k=\{o_1, o_2\}$, ${\mathbb   U}_t^k=\{c_1,c_2\},  t=1,2,3, k=1,2$, and   $2-$step delayed sharing information structures, i.e., $T=2$,  $I_t^k=\Delta_t^{(2)}\cup \Lambda_t^k$,
\begin{align*}
& \Delta_t^{(2)}=\{Y_{1,t-2}^{(2)},U_{1,t-2}^{(2)}\}, 
\Lambda_t^k=\{Y_{t-1,t}^k,U_{t-1}^k\}, \\
&I_1^k=\{Y_1^k\},
I_2^k=\{Y_{1,2}^k,U_1^k\},
I_3^k=\{\Delta_3^{(2)},Y_{2,3}^k,U_2^k\}
\end{align*}
where $t=1,2,3,  k=1,2$, 
with  $\Delta_t^{(2)}=\emptyset$ for $t<3$.  The  stage cost is time-invariant and  additive 
$\ell(t, x,u^1,u^2)=\ell^1(x,u^1)+\ell^2(x,u^2), \forall t$, i.e.,  separable. 
The problem data are given in Table~\ref{tab:obs-kernels},   Table~\ref{tab:S-kernel},  and 
Table~\ref{tab:l-kernel}. The  initial probabilities are  $ S(X_1=s_1)=0.7,  S(X_1=s_2)=0.3.$ 


The example is intended to represent a master--apprentice system. For this reason, the numerical values are chosen so that agent 1 has a more reliable observation channel $Q^1(\cdot|\cdot)$ than the channel $Q^2(\cdot|\cdot)$ of agent 2.

\begin{table}[t]
\caption{Observations transition probabilities}
\label{tab:obs-kernels}
\centering
\begin{tabular}{c|cc}
\hline
 & $o_1$ & $o_2$ \\
\hline
$Q^1(\cdot|s_1)$ & $0.9$ & $0.1$ \\
$Q^1(\cdot|s_2)$ & $0.1$ & $0.9$ \\
$Q^2(\cdot|s_1)$ & $0.7$ & $0.3$ \\
$Q^2(\cdot|s_2)$ & $0.3$ & $0.7$ \\
\hline
\end{tabular}
\end{table}

\begin{table}[t]
\caption{State transition probabilities  $S(\cdot|x,u^1,u^2)$}
\label{tab:S-kernel}
\centering
\begin{tabular}{c|c|cc}
\hline
$x$ & $(u^1,u^2)$ & $S(s_1|\cdot)$ & $S(s_2|\cdot)$ \\
\hline
$s_1$ & $(c_1,c_1)$ & $0.9$ & $0.1$ \\
$s_1$ & $(c_1,c_2)$ & $0.7$ & $0.3$ \\
$s_1$ & $(c_2,c_1)$ & $0.8$ & $0.2$ \\
$s_1$ & $(c_2,c_2)$ & $0.7$ & $0.3$ \\
$s_2$ & $(c_1,c_1)$ & $0.2$ & $0.8$ \\
$s_2$ & $(c_1,c_2)$ & $0.5$ & $0.5$ \\
$s_2$ & $(c_2,c_1)$ & $0.7$ & $0.3$ \\
$s_2$ & $(c_2,c_2)$ & $0.8$ & $0.2$ \\
\hline
\end{tabular}
\end{table}

\begin{table}[t]
\caption{Stage cost $\ell(x,u^1, u^2)=\ell^1(x, u^1)+ \ell^2(x,u^2)$}
\label{tab:l-kernel}
\centering
\begin{tabular}{c|cc|cc}
\hline
 & \multicolumn{2}{c|}{$\ell^1(x,u^1)$} 
 & \multicolumn{2}{c}{$\ell^2(x,u^2)$} \\
\hline
$x$ & $u^1=c_1$ & $u^1=c_2$ & $u^2=c_1$ & $u^2=c_2$ \\
\hline
$s_1$ & $0$ & $1$ & $0$ & {$1$} \\
$s_2$ & {$8$} & {$2$} & {$4$} & {$2$} \\
\hline
\end{tabular}
\end{table}

%

\begin{algorithm}[t]
\caption{Time-first PbP update}
\label{alg:pbp-example}
\begin{algorithmic}
\STATE Initialize an admissible strategy tuple $\gamma^{(K)}$
\REPEAT
    \STATE Store the current strategy tuple as $\gamma_{\rm old}^{(K)}$
    \FOR{$t=n$ downto $1$}
        \REPEAT
            \FOR{$k=1$ to $K$}
                \STATE Fix $\gamma^{-k}$
                \FOR{each admissible realization $i_t^k \in {\mathbb I}_t^k$}
                    \STATE Compute the private posterior recursively
                    \STATE Enforce  consistency of hidden actions in $\Lambda_t^{-k}$
                    \FOR{each action $u_t^k \in {\mathbb U}_t^k$}
                        \STATE Evaluate the right-hand side of Theorem~\ref{thm:vp}
                    \ENDFOR
                    \STATE Update $\gamma_t^k(i_t^k)$ using a minimizing action
                \ENDFOR
            \ENDFOR
        \UNTIL{no strategy at time $t$ changes}
    \ENDFOR
\UNTIL{$\gamma^{(K)}=\gamma_{\rm old}^{(K)}$}
\STATE Return the converged PbP strategy profile
\end{algorithmic}
\end{algorithm}

Algorithm~\ref{alg:pbp-example}   implements the DP equations of Theorem~\ref{thm:vp}. At each time step, each controller updates its policy while the other controller's policy is fixed, and this is done until there is no change on any controller. The private posterior needed for the DP equations is calculated recursively by taking the information realization as the argument of the value function. The hidden actions inside $\Lambda_t^{-k}$ should be consistent with the other controller's policy

\begin{table}[t]
\centering
\caption{Sample computed values for the finite example. The realizations follow  the order used in the definitions of $I_t^k$.}
\label{tab:sample-values}
\begin{tabular}{c c c c c}
\hline
Agent $k$  & Stage $t$  & Realization $I_t^k$ & Value Funct. & $u_t^{k,o}$ \\
\hline
$1$ & $t=1$ & $(o_2)$ & $6.71$ & $c_2$ \\
$1$ & $t=2$ & $(o_2,o_1,c_2)$ & $1.61$ & $c_1$ \\
$1$ & $t=3$ & $(o_2,o_2,c_2,c_2,o_1,o_1,c_1)$ & $0.33$ & $c_1$ \\
\hline
$2$ & $t=1$ & $(o_2)$ & $5.81$ & $c_2$ \\
$2$ & $t=2$ & $(o_2,o_1,c_2)$ & $1.99$ & $c_1$ \\
$2$ & $t=3$ & $(o_2,o_2,c_2,c_2,o_1,o_1,c_1)$ & $0.51$ & $c_1$ \\
\hline
\end{tabular}
\end{table}

Algorithm~\ref{alg:pbp-example} is first tested on a fully separated example, with state $X_t=(X_t^1,X_t^2)$, separated transition probabilities 
$S(x_{t+1}|x_t,u_t^{(2)})=\prod_{k=1}^2 S^k(x_{t+1}^k|x_t^k,u_t^k)$, separated observations 
$Q^{(2)}(y_t^{(2)}|x_t)=\prod_{k=1}^2 Q^k(y_t^k|x_t^k)$, and separated payoff 
$\ell(t,x,u^{(2)})=\ell^1(t,x^1,u^1)+\ell^2(t,x^2,u^2)$. For this fully separated problem, we deducec that  $J_n(\gamma^{(2),o})=J_n^1(\gamma^{1,o})+J_n^2(\gamma^{2,o})$, where $J_n^k(\gamma^{k,o})$ is the solution of  the  centralized POMDP corresponding to $S^k(x_{t+1}^k|x_t^k, u_t^{k}), Q^k(y_{t}^k|x_t^k), \ell^k(t, x^k, u^k) $ for $k=1,2$. This solution was also confirmed  by running  Algorithm~\ref{alg:pbp-example}.

Algorithm~\ref{alg:pbp-example} is applied to the example shown in 
Table~\ref{tab:obs-kernels}, Table~\ref{tab:S-kernel}, and 
Table~\ref{tab:l-kernel}, and it gives the PbP payoff
\begin{align*}
J_3(\gamma^{1,o},\gamma^{2,o})
&=
{\mathbb E}^{\gamma^{2,o}}[{\cal V}_1^{\gamma^{2,o}}(I_1^1)]=
{\mathbb E}^{\gamma^{1,o}}[{\cal V}_1^{\gamma^{1,o}}(I_1^2)]= 4.3005
\end{align*}
i.e., these are identical as expected. 
This gives a consistency check for the recursive posterior and value 
computations. The values presented in Table~\ref{tab:sample-values} are 
the values ${\cal V}_t^{\gamma^{-k,o}}(i_t^k)$ for selected admissible 
realizations. Since observation $o_2$ is more likely in the risky state 
$s_2$, both controllers choose the protective action $c_2$ at the first 
time step in the displayed histories. In the later displayed histories, 
the additional observations are more consistent with the less risky state 
$s_1$, and the lower-cost action $c_1$ becomes minimizing.


%

\section{conclusion}
The main contribution of this paper is  the development of a  DP framework  for  POMDPs,    with multiple agents or controls  each    assigned  $T-$step delayed sharing  information patterns, based on  the concept of  decentralized sequential team equilibrium, i.e.,  person-by-person optimality of  static team theory. Each agent  makes use of a private information state  and a centralized information state shared by all  agents. Necessary  and sufficient conditions for decentralized team equilibrium   are derived and a  generalized separation theorem is shown.  Central to the new  DP approach is the extension of fundamental properties of classical centralized DP of POMDPs to  decentralized POMDPs. Each  agent's strategy   compresses the   assigned information pattern into a private information state and a  centralized information state,  which is common to all strategies. This compression structural property of optimal strategies is a refinement of  Witsenhausen's \cite[Assertion 8, pp.1562]{witsenhausen1971}, which is sub-optimal according to the counterexample  constructed  by Varaiya and Walrand  \cite{varaiya-walrand1978}.

\section{Appendix}

\subsection{Proof of Theorem~\ref{thm:is-pbp}}
\label{app:thm:is-pbp}
Recall (\ref{inf_1})-(\ref{ISG-2}), $\{Y_{1,t}^{(K)}, U_{1,t-1}^{(K)}\}= \{\Lambda_t^k, \Delta_t^{(K)}, \Lambda_t^{-k}\}=\{\Delta_t^{(K)}, \Lambda_t^{(K)}\}$. 
By Assumptions~\ref{ass-1},
  the conditional PM  of $(X_{t+1}, \Lambda_{t+1}^{-k})$ given  $I_{t+1}^k\equiv \{\Lambda_{t+1}^k, \Delta_{t+1}^{(K)}\}=\{\lambda_{t+1}^k, \delta_{t+1}^{(K)}\}$,    is determined by using  Bayes' theorem as follows. 
\begin{align}
&{\mathbb P}^{{\gamma^k, \gamma^{-k}}} \Big\{X_{t+1} \in dx_{t+1}, \Lambda_{t+1}^{-k} \in d\lambda_{t+1}^{-k} \Big| \lambda_{t+1}^{k}, \delta_{t+1}^{(K)}\Big\}\nonumber \\
&=\frac{{\mathbb P}^{{\gamma^k, \gamma^{-k}}} \Big\{X_{t+1} \in dx_{t+1}, \Lambda_{t+1}^{-k} \in d\lambda_{t+1}^{-k}, d\lambda_{t+1}^k, d \delta_{t+1}^{(K)} \Big\}}{ {\mathbb P}^{{\gamma^k, \gamma^{-k}}} \Big\{d\lambda_{t+1}^{k},  d\delta_{t+1}^{(K)}\Big\}    }, \label{gen-rec_2}\\
&{\mathbb P}^{{\gamma^k, \gamma^{-k}}} \Big\{X_{t+1} \in dx_{t+1}, \Lambda_{t+1}^{-k} \in d\lambda_{t+1}^{-k}, d\lambda_{t+1}^k, d\delta_{t+1}^{(K)} \Big\} \nonumber \\
&=\int_{{\mathbb X}_t}{\bf  P}_{t+1}^{{\gamma^k, \gamma^{-k}}}(dx_{tt+1},dx_t,  d\lambda_{t+1}^{-k}, d\lambda_{t+1}^{k}, d\delta_{t+1}^{(K)}).
\label{gen-rec_3}
\end{align}
 The PM under the integrand in    (\ref{gen-rec_3})   is determined by 
\begin{align}
&{\bf  P}_{t+1}^{{\gamma^k, \gamma^{-k}}}(dx_{t+1},dx_t,  d\lambda_{t+1}^{-k}, d\lambda_{t+1}^{k}, d\delta_{t+1}^{(K)}) \\
 &= Q_{t+1}^{(K)}(dy_{t+1}^{(K)}|x_{t+1}, u_{t}^{(K)}) S_{t+1}(dx_{t+1}|x_t, u_t^{(K)}) \nonumber \\
 &. {\bf  P}_t^{{\gamma^k, \gamma^{-k}}}(dx_t, dy_{t-T+2, t}^{-k} du_{t-T+2, t}^{-k},\nonumber \\
 & dy_{t-T+2, t}^k, du_{t-T+2, t}^k, dy_{1,t-T+1}^{(K)}, du_{1, t-T+1}^{(K)}) . \label{der_1}
\end{align}
Focusing on the last PM  in (\ref{der_1}),  we have
\begin{align}
& {\bf  P}^{{\gamma^k, \gamma^{-k}}}(dx_t, dy_{t-T+2, t}^{-k} du_{t-T+2, t}^{-k},\nonumber \\
 & dy_{t-T+2, t}^k, du_{t-T+2, t}^k, dy_{1,t-T+1}^{(K)}, du_{1, t-T+1}^{(K)}) \\
&= {\bf  P}_t^{{\gamma^k, \gamma^{-k}}}(du_{ t}^{-k}\big| x_t, \lambda_{t}^{-k}, \lambda_{t}^k, u_t^k, \delta_t^{(K)})\\
 &. {\bf  P}_t^{{\gamma^k, \gamma^{-k}}}(dx_t, d\lambda_t^{-k}\big| \lambda_t^k,  \delta_t^{(K)},u_t^k)  {\bf  P}_t^{{\gamma^k, \gamma^{-k}}}(d \lambda_t^k,  \delta_t^{(K)}, du_t^k) \nonumber    \\
 &\sr{(a)}{=} {\bf  P}_t^{{\gamma^k, \gamma^{-k}}}(du_{ t}^{-k}\big| \lambda_{t}^{-k}, \delta_t^{(K)})     {\bf  P}_t^{{\gamma^k, \gamma^{-k}}}(dx_t, d\lambda_t^{-k}\big| \lambda_t^k, \delta_t^{(K)})   \nonumber \\
 &.  {\bf  P}_t^{{\gamma^k, \gamma^{-k}}}(d \lambda_t^k,  \delta_t^{(K)}, du_t^k) \label{der_3}
\end{align}
where $(a)$ is due to the conditional independence condition (\ref{CI}), (\ref{MC}),  and $u_t^k=\gamma_t^k(\delta_t^{(K)}, \lambda_{t}^k)$,   which implies  ${\bf  P}_t^{{\gamma^k, \gamma^{-k}}}(dx_t, d\lambda_t^{-k}\big| \lambda_t^k, u_t^k, \delta_t^{(K)}) = {\bf  P}_t^{{\gamma^k, \gamma^{-k}}}(dx_t, d\lambda_t^{-k}\big| \lambda_t^k, \delta_t^{(K)})$, i.e.,    the control $u_t^k$ is redundant.  
Finally, by (\ref{CI}), (\ref{MC}) then  ${\bf  P}_t^{{\gamma^k, \gamma^{-k}}}(du_{ t}^{-k}\big| \lambda_{t}^{-k}, \delta_t^{(K)}) =\prod_{j=1, j \neq k}^K P_t^j(du_t^{j}\big| \delta_t^{(K)}, \lambda_t^j)  =\prod_{j=1, j \neq k}^K { \mu}_{\gamma_t^{j}(\delta_t^{(K)}, \lambda_t^j)}(du_t^{j})$.  
Substituting this  into   (\ref{der_3}), and (\ref{der_3}) into (\ref{der_1}), and the  resulting (\ref{der_1}) into (\ref{gen-rec_2}), we obtain
 (\ref{apost_1}).

\subsection{Proof of Theorem~\ref{thm:vp}}
\label{app;thm:vp}
 (1.1) By nested property,  (\ref{eq-nested}), 
 $I_t^k \subseteq I_{t+1}^k$, and 
  (\ref{opt-ctg-g}), 
\begin{align}
&{\cal V}_{t}^{\gamma^{k,o},\gamma^{-k,o}}(i_t^k) \nonumber \\
&=\inf_{\gamma_t^k \in {\cal U}_t^k}   {\mathbb E}^{^{\gamma^k, \gamma^{-k,o}}} \Big\{  \ell(t, X_t,\gamma_t^k,  \gamma_t^{-k,o})+\inf_{ \gamma_{t+1,n}^k\in {\cal U}_{t+1,n}^k}\Big( \nonumber \\
& {\mathbb E}^{^{\gamma^k, \gamma^{-k,o}}} \Big[ \sum_{j=t+1}^{n} \ell(j, X_j,\gamma_j^k,  \gamma_j^{-k,o}) \Big| I_{t+1}^k \Big]\Big) \Big| i_t^k \Big\} \label{opt-ctg-p-1}\\
&=\inf_{\gamma_t \in {\cal U}_t^k}   {\mathbb E}^{^{\gamma^k, \gamma^{-k,o}}} \Big\{  \ell(t, X_t,\gamma_t^k,  \gamma_t^{-k,o}) \nonumber \\
&\hst +{\cal V}_{t+1}^{\gamma^{k,o}, \gamma^{-k,o}}(I_{t+1}^k)   \Big| i_t^k, \gamma_t^k \Big\}\label{opt-ctg-p-2_a-1}
\end{align}
where  (\ref{opt-ctg-p-2_a-1}) is by definition of value process and    $i_t^k$  specifies $\gamma_t^k\equiv \gamma_t^k(i_t^k)=\gamma_t^k(\delta_t^{(K)}, \lambda_t^k)\in {\mathbb U}_t^k $.
 By Lemma~\ref{lemma:nested}, 
  (\ref{eq-nested-a-1}),
\begin{align}
  &{\mathbb E}^{^{\gamma^k, \gamma^{-k,o}}} \Big\{{\cal V}_{t+1}^{\gamma^{k,o}, \gamma^{-k,o}}(I_{t+1}^k) \Big| i_t^k, \gamma_t^k \Big\} \sr{(a)}{=}{\mathbb E}^{^{\gamma^k, \gamma^{-k,o}}} \Big\{ \nonumber \\ &{\cal V}_{t+1}^{\gamma^{k,o}, \gamma^{-k,o}}(i_{t}^k,Y_{t+1}^k, \gamma_t^k, Y_{t-T+1}^{-k}, U_{t-T+1}^{-k})  \Big| i_t^k, \gamma_t^k \Big\}\nonumber  \\
 &\sr{(b)}{=}  \int_{ {\mathbb Y}_{t+1}^k \times {\mathbb X}_{t, t+1}   \times {\mathbb L}_{t}^{-k}    } {\cal V}_{t+1}^{ \gamma^{k,o},  \gamma^{-k,o}}(i_{t}^k,  y_{t+1}^k, \gamma_t^k, y_{t-T+1}^{-k}, 
 u_{t-T+1}^{-k}) \nonumber \\
 &  .{\bf P}_{t+1}^{\gamma^{k},  \gamma^{-k,o}}(dy_{t+1}^k, d\lambda_{t}^{-k}, dx_{t}, dx_{t+1}\big|   i_t^k, \gamma_t^k)\Big\} \label{nm-ctg}
\end{align}
where in $(a)$,  by  Lemma~\ref{lemma:nested}.(2),   $U_{t-T+1}^{-k}=\gamma_{t-T+1}^{-k,o}=\gamma_{t-T+1}^{-k,o}(\Delta_{t-T+1}^{(K)}, \Lambda_{t-T+1}^{-k})$ and    the argument of   $\gamma_{t-T+1}^{-k,o}(\cdot)$, except  $Y_{t-T+1}^{-k}$ 
%
 is  specified by $\delta_{t}^{(K)}\in i_t^k$   as stated 
 below (\ref{dp_1_na-str1}), 
%
and $(b)$ is obtained by   
 expressing the conditional expectation w.r.t.   ${\bf P}_{t+1}^{\gamma^{k},  \gamma^{-k,o}}( dy_{t+1}^k, d\lambda_t^{-k},dx_{t}, dx_{t+1} \big|i_t^k,\gamma_t^k)$.
Applying Bayes' Theorem,  using the Markovian network,  (\ref{i-d1}),  (\ref{i-d2}),   $I_t^k=i_t^k\tri \{\delta_t^{(K)}, \lambda_t^k\}$,   $\lambda_t^{(K)}\tri \{\lambda_t^k, \lambda_t^{-k}\}$, and Lemma~\ref{lem:ind} we obtain (\ref{Mar_REC_1-nn-11})-(\ref{Mar_REC_1-nn}).
Next, we show the optimization is over actions and not strategies, as in classical DP \cite{kumar-varayia:B1986,bertsekas-shreve1978}.
  At  $t=n$, since only the terminal cost is present, 
 by using the  conditional independence and  Lemma~\ref{lem:ind}, 
\begin{align}
&{\cal V}_n^{\gamma^{k,o}, \gamma^{-k,o}}(i_n^k)
= \inf_{\gamma_n^k \in {\cal  U}_n^k}\int_{{\mathbb X}_n  \times {\mathbb L}_{n}^{-k} }     \ell(n, x_n,\gamma_n^k(i_n^k), \nonumber \\
& \gamma_n^{-k,o}(\delta_n^{(K)}, \lambda_n^{-k}))  \xi_n^k[i_n^k]  (dx_{n}, d\lambda_n^{-k})  . \label{dp_1_nnn_1-f}  
\end{align}
Similar to \cite{bertsekas-shreve1978}, in  (\ref{dp_1_nnn_1-f})
 the  infimum  over the  strategy  $\gamma_n^k(\cdot)  \in {\cal  U}_n^k$  is equivalent  to  infimum over actions  $\gamma_n^k(i_n^k) =U_n^k  \in {\mathbb U}_n^k$, hence  ${\cal V}_{n}^{\gamma^{k,o},\gamma^{-k,o}}(i_{n}^k) ={\cal V}_{n}^{\gamma^{-k,o}}(i_{n}^k  ), \forall \gamma^{o, k}$,   (\ref{dp_1_nnn}),   (\ref{dp_1_nnn_1}) are obtained. 
For any $t \in T_+^{n-1}$,   substituting (\ref{nm-ctg}) into (\ref{opt-ctg-p-2_a-1}), and using the fact that the 
 conditioning  information  is    $\{i_t^k, \gamma_t^k(i_t^k)\}, U_t^k=\gamma_t^k\in {\mathbb U}_t^k$, then the  infimum is over the action spaces and not the strategy spaces. Similarly,  we can also show,    ${\cal V}_{t+1}^{\gamma^{k,o},\gamma^{-k,o}}(i_{t+1}^k) ={\cal V}_{t+1}^{\gamma^{-k,o}}(i_{t+1}^k  ), \forall \gamma^{o, k}$, and 
 we obtain (\ref{dp_1_na}) and also  (\ref{dp_2_nm}). (1.2)  At time $t=n$, by  (\ref{dp_1_nnn_1}), we  deduce  the  optimal is $u_n^k=u_n^{k,o}=\gamma_n^{k,o}(\xi_n^k, \delta_n^{(K)}) \in {\mathbb U}_n^k$, hence  $\gamma_n^{k,o}(\cdot) \in  {\cal U}_{n}^{k}$.  For any $ t\in  T_+^{n-1}$, by   (\ref{dp_1_na}) and (\ref{Mar_REC_1-nn})  we deduce the optimal is $u_t^k=u_t^{k,o}=\gamma_t^{k,o}(\xi_t^k, \delta_t^{(K)}, \lambda_t^k)\in {\mathbb U}_t^k,$  $i_t^k=\{\delta_t^{(K)}, \lambda_t^k\}$,  $\gamma_t^{k,o}(\cdot) \in  {\cal U}_{t}^{k}$, Since $\xi_t^k\equiv \xi_t^k[i_t^k]$ is measurable w.r.t. $i_t^k$ then  $\gamma^{k,o}(\cdot) \in  {\cal U}_{1,n}^{k}$. 
  (2) This  follows by  induction using the DP equations of (1); we omit the proof  because it is a repetition  of Theorem~\ref{thm:vf-g}.(2).

\bibliographystyle{IEEEtran}

\bibliography{Bibliography_Decentralized_arxiv_FIXED_2026-05-26}

\end{document}